\documentclass[12pt]{article}
\usepackage{amsfonts,amsmath,amssymb,amsthm,graphicx}
\usepackage{fullpage}
\usepackage{changepage}
\usepackage{enumerate}
\usepackage{scalerel}
\usepackage{accents}
\usepackage[margin=1in]{geometry}
\usepackage{float}
\usepackage[caption = false]{subfig}
\usepackage{hyperref}
\usepackage{enumitem}

\usepackage{authblk}

\usepackage{natbib}
\usepackage{csquotes}
\usepackage{comment}

\usepackage{cleveref}

\usepackage{tikz}
\usepackage{pgfplots}
\usetikzlibrary{patterns}
\usepgfplotslibrary{fillbetween}
\usetikzlibrary{intersections}

\usepackage[ruled,vlined,linesnumbered]{algorithm2e}
\Crefname{algocf}{Algorithm}{Algorithms}

\usetikzlibrary{arrows, decorations.markings}

% for double arrows a la chef
% adapt line thickness and line width, if needed
\tikzstyle{vecArrow} = [thick, decoration={markings,mark=at position
   1 with {\arrow[semithick]{open triangle 60}}},
   double distance=1.4pt, shorten >= 5.5pt,
   preaction = {decorate},
   postaction = {draw,line width=1.4pt, white,shorten >= 4.5pt}]
\tikzstyle{innerWhite} = [semithick, white,line width=1.4pt, shorten >= 4.5pt]

\theoremstyle{plain}
\newtheorem{theorem}{Theorem}[section]
\newtheorem{lemma}[theorem]{Lemma}

\newtheorem{fact}[theorem]{Fact}
\newtheorem{proposition}[theorem]{Proposition}
% reset theorem numbering for each chapter

\theoremstyle{plain}
\newtheorem{definition}{Definition}[section] % definition numbers are dependent on theorem numbers
\newtheorem{example}[definition]{Example}

\theoremstyle{plain}
\newtheorem{assumption}{Assumption}

\usepackage{amssymb}
\usepackage{amsmath}
\usepackage{ifthen}
\usepackage{fixmath}
\usepackage{xfrac}
\usepackage{sidecap}
\usepackage{subfig}
\usepackage{caption}

\DeclareMathAlphabet{\mathpzc}{OT1}{pzc}{m}{it}

\newcommand{\agind}[1][i]{_{#1}}

%\newcommand{\ulz}[1]{{\def\ULdepth{0in}\uline{#1}}}

%\newcommand{\jjul}[2][j]{#2_{\ulz{\smash{#1}}}}

% accents for using with newagentvar
% (must take a single argument)
\newcommand{\ironed}{\bar}
\newcommand{\constrained}{\hat}
\newcommand{\optconstrained}{\composed{\optimized}{\constrained}}
\newcommand{\optimized}[1]{#1\opt}
\newcommand{\differentiated}[1]{#1'}

\newcommand{\tagged}[2]{{#2}^{#1}}

\newcommand{\starred}[1]{#1^\star}
\newcommand{\primedarg}[1]{#1\primed}
\newcommand{\noaccents}[1]{#1}
% compose accents on the fly
\newcommand{\composed}[3]{#1{#2{#3}}}

\newcommand{\newagentvar}[3][\noaccents]{%
\expandafter\newcommand\expandafter{\csname #2\endcsname}{#1{#3}}%
\expandafter\newcommand\expandafter{\csname #2s\endcsname}{#1{\boldsymbol{#3}}}%
\expandafter\newcommand\expandafter{\csname #2smi\endcsname}[1][i]{#1{\boldsymbol{#3}}_{-##1}}%
\expandafter\newcommand\expandafter{\csname #2i\endcsname}[1][i]{#1{#3}\agind[##1]}%
\expandafter\newcommand\expandafter{\csname #2ith\endcsname}[1][i]{#1{#3}_{(##1)}}%
}

\newcommand{\newitemvar}[3][\noaccents]{%
\expandafter\newcommand\expandafter{\csname #2\endcsname}{#1{#3}}%
\expandafter\newcommand\expandafter{\csname #2s\endcsname}{#1{\boldsymbol{#3}}}%
\expandafter\newcommand\expandafter{\csname #2smj\endcsname}[1][j]{#1{\boldsymbol{#3}}_{-##1}}%
\expandafter\newcommand\expandafter{\csname #2j\endcsname}[1][j]{#1{#3}_{##1}}%
\expandafter\newcommand\expandafter{\csname #2jth\endcsname}[1][j]{#1{#3}_{(##1)}}%
}

%
% perhaps don't use this.
%
\newagentvar{alloc}{x}

%
% flattened.
%
\newagentvar{falloc}{z}

%
% in quantile space.
%
\newagentvar{quant}{q}
\newagentvar[\constrained]{exquant}{\quant}
\newagentvar[\constrained]{critquant}{\quant}  %remove?
\newagentvar[\optconstrained]{monoq}{\quant}
\newagentvar{Val}{\nu}
\newagentvar{toquant}{Q}

% remove these?

\newagentvar{qprice}{\price}
\newagentvar{qrev}{R}
\newagentvar[\ironed]{iqrev}{\qrev}

%\newagentvar{\calloc}{\starred \alloc}

\newagentvar{qalloc}{y}
\newagentvar{cumalloc}{Y}
%\newagentvar[\constrained]{calloc}{\qalloc}
\newagentvar[\constrained]{cumcalloc}{\cumalloc}
\newagentvar[\ironed]{ialloc}{\qalloc}
\newagentvar[\ironed]{icumalloc}{\cumalloc}

\newcommand{\exposted}[1]{#1^{\text{\it EP}}}
\newagentvar[\composed{\exposted}{\constrained}]{excalloc}{\qalloc}
\newagentvar[\exposted]{exalloc}{\qalloc}
\newagentvar[\exposted]{extalloc}{\talloc}
\newagentvar[\exposted]{exfalloc}{\falloc}

\newagentvar{rev}{R}
\newagentvar[\differentiated]{marg}{\rev}
\newagentvar{rawrev}{P}
\newagentvar[\differentiated]{rawmarg}{\rawrev}

\newagentvar[\primedarg]{inducedrev}{\rev}

\newagentvar[\tilde]{pseudorev}{\rev}
\newagentvar[\tilde]{pseudorawrev}{\rawrev}

%
% in value/type space
%

\newagentvar{typespace}{{\cal T}}
% \newagentvar{feasible}{{\cal F}}
\newagentvar{typesubspace}{S}

\newagentvar{type}{t}
\newagentvar{othertype}{s}
\newagentvar{val}{v}
\newagentvar{hval}{\bar \val}
\newagentvar{hbudget}{\bar \wealth}
\newagentvar{budget}{w}
\newagentvar{wealth}{w}
\newagentvar{lbudget}{\underaccent{\bar}{ \wealth}}
\newagentvar{lowestval}{0}
\newagentvar{cumval}{V}
\newagentvar{cumprice}{\cumpayoff}
\newagentvar{revcurve}{\payoffcurve}
\newagentvar{cumwelfare}{\cumpayoff}
\newagentvar{welcurve}{\payoffcurve}

% \newagentvar{cumval}{V}
% \newagentvar{cumprice}{P}
% \newagentvar{revcurve}{R}
% \newagentvar{cumwelfare}{K}
% \newagentvar{welcurve}{W}

\newagentvar{outcome}{w}
\newagentvar{outcomespace}{{\cal W}}

\newcommand{\served}[1]{#1^1}
\newcommand{\nonserved}[1]{#1^0}
\newcommand{\alloced}[1]{#1^{\alloc}}
\newcommand{\allocedi}[1]{#1^{\alloci}}
\newagentvar[\alloced]{xoutcome}{\outcome}
\newagentvar[\allocedi]{xioutcome}{\outcome}
\newagentvar[\served]{soutcome}{\outcome}

\newagentvar[\nonserved]{nsoutcome}{\outcome}

\newagentvar{price}{p}
\newagentvar{randomprice}{\prices}
\newagentvar{randomquant}{\quants}
\newagentvar{talloc}{\alloc}
\newagentvar[\tilde]{balloc}{\alloc}
\newagentvar[\tilde]{bprice}{\price}

\newagentvar{act}{a}
\newagentvar{bidspace}{A}
\newagentvar{actspace}{A}

\newagentvar[\constrained]{critval}{\val}
\newagentvar[\constrained]{crittype}{\type}
\newagentvar[\constrained]{critvirt}{\virt}
\newagentvar[\constrained]{reserve}{\val} % should this be used???
\newagentvar[\constrained]{bidreserve}{\bid} % should this be used???
\newagentvar[\optconstrained]{monop}{\val}
\newagentvar[\constrained]{monot}{\type}
\newagentvar[\optimized]{monorev}{\rev}

%%
%% Bayes-Nash Approximation
%%
\newagentvar{rcalloc}{y}
\newagentvar[\optimized]{optrcalloc}{\rcalloc}
\newagentvar{biddist}{G}

\newagentvar[\constrained]{critbid}{\bid}
\newagentvar[\constrained]{cbid}{B}
\newagentvar[\primedarg]{wbid}{\bid}
\newagentvar{gfunc}{\vartheta}
\newagentvar{block}{C}

\newagentvar{util}{u}

% why two?
\newagentvar{strat}{s}
\newagentvar{bid}{b}

\newagentvar{virt}{\phi}
\newagentvar{cumvirt}{\Phi}
\newagentvar{qvirt}{\phi}
\newagentvar[\ironed]{ivirt}{\virt}
\newagentvar[\ironed]{icumvirt}{\cumvirt}

\newagentvar[\tagged{\text{SD}}]{sdvirt}{\virt}
\newagentvar[\composed{\tagged{\text{SD}}}{\ironed}]{sdivirt}{\virt}
\newagentvar[\tagged{\text{MD}}]{mdvirt}{\virt}
\newagentvar[\composed{\tagged{\text{MD}}}{\ironed}]{mdivirt}{\virt}

\newagentvar{dist}{F}
\newagentvar{dens}{f}
\newagentvar{hazard}{h}
\newagentvar{cumhazard}{H}

\newagentvar[\ironed]{iprice}{\price}  %% what for?
\newagentvar[\ironed]{ival}{\val}  %% what for?

\newagentvar[\ironed]{icumval}{\cumval}
\newagentvar{ints}{{\cal I}}

\newcommand{\exfeasible}{\text{EAF}}
\newcommand{\feasibles}{{\cal X}}

% for position auctions.
% is this used!?!?
\newagentvar{wal}{w}

%
% for items
%

\newitemvar{pos}{j}
\newitemvar{weight}{w}
\newitemvar[\differentiated]{mweight}{\weight}
\newitemvar{udtype}{\type}
\newitemvar[\constrained]{udcrittype}{\type}
\newitemvar{udalloc}{\alloc}
\newitemvar{udprice}{\price}
\newitemvar[\constrained]{udcalloc}{\qalloc}
\newitemvar[\constrained]{udcumcalloc}{\cumalloc}

\newagentvar{mech}{{\cal M}}
\newagentvar[\skew{5}{\hat}]{cmech}{{\cal M}}
\newagentvar{alg}{{\cal A}}

\newcommand{\Rev}[2][]{\text{\bf Payoff}\ifthenelse{\not\equal{}{#1}}{_{#1}}{}\!\left[{\def\givenn{\middle|}#2}\right]}

\newcommand{\Wel}[2][]{\text{\bf Payoff}\ifthenelse{\not\equal{}{#1}}{_{#1}}{}\!\left[{\def\givenn{\middle|}#2}\right]}

\newcommand{\cost}{c}

\DeclareMathOperator{\OPT}{OPT}

\newcommand{\reals}{{\mathbb R}}

\newagentvar{trans}{\sigma}

\newagentvar{demandset}{S}

\newcommand{\opt}{^{\star}}
\newcommand{\primed}{^\dagger}
\newcommand{\doubleprimed}{^{\ddagger}}
\newcommand{\tripleprimed}{^{\S}}

\newagentvar{distout}{w}
\newagentvar{idistout}{\bar{w}}

%
% matroid definitions.
%

            % xphase this out.

%% correlation gap, set functions

\newcommand{\setdist}{{\cal D}}

%% prophet inequality

%%
%% PRIOR FREE
%%

%%
%% mutli-dimensional and non-linear.
%%

%% \newcommand{\pathdist}{\dist_{\max}}
%% \newcommand{\pathdens}{\dens_{\max}}
%% \newcommand{\pathmonop}{\monop_{\max}}
%% \newcommand{\pathvirt}{\virt_{\max}}
%% \newcommand{\pathfunc}{C}
%% \newcommand{\pathtypezero}{\pathtype[0]}
%% \newcommand{\pathtype}[1][\val]{\type^{#1}}
%% \newcommand{\pathq}{\theta}
%% \newcommand{\pathcritq}{\constrained{\pathq}}
%% \newcommand{\pathfuncq}[1][\pathq]{\pathfunc_{#1}}
%% \newcommand{\pathtypespaceq}[1][\pathq]{\typespace_{#1}}
%% \newcommand{\pathdistq}[1][\pathq]{\dist_{#1}}
%% \newcommand{\pathtypesvq}[1][\critval\pathcritq]{\typesubspace^{#1}\!}
%% \newcommand{\pathtypesv}[1][\critval]{\typesubspace^{#1}\!}

% multi-dimensional approx

%\DeclareMathOperator{\SP}{\text{SP}}

\newagentvar{gap}{\delta}

\newagentvar[\starred]{calloc}{\alloc}

\newcommand{\R}{\mathbb R}
\newcommand{\alphaPrivate}
{\left(1 + 3\budgetQuantile - \sfrac{1}{\budgetQuantile} \right)}
% \newcommand{\APprivate}
% {(\alphaprivate)(\betaprivate)\rho}
% % {
% % \tfrac{(\budgetQuantile^2+2\budgetQuantile-1)
% % (\budgetQuantile^2+\budgetQuantile-1
% % )}
% % {\budgetQuantile^2-1}
% % \rho}
% \newcommand{\alphaprivate}
% {\budgetQuantile+2}
% % {\tfrac{\budgetQuantile^2+2\budgetQuantile-1}
% % %{\tfrac{(\budgetQuantile^2+2\budgetQuantile-1)}
% % {\budgetQuantile}}
% \newcommand{\betaprivate}
% {\budgetQuantile+1}
% % {\tfrac{\budgetQuantile
% % (\budgetQuantile^2+\budgetQuantile-1)}
% % {\budgetQuantile^2-1}}
% %{(\budgetQuantile^2-1)}}
% \newcommand{\alphaPrivateMHR}{(2e-1)}
% \newcommand{\betaPrivateMHR}{1}

\newcommand{\MRM}{{\rm MPM}}

\newcommand{\correspondingquant}{\quant\primed}
\newcommand{\EX}{{\rm EX}}
\newcommand{\EXS}{{\rm EX\primed}}
\newcommand{\EXL}{{\rm EX\doubleprimed}}
\newcommand{\EXM}{{\rm EX\tripleprimed}}
\newcommand{\APF}{\tau}
\newcommand{\APFS}{\tau\primed}
\newcommand{\APFL}{\tau\doubleprimed}
\newcommand{\APFM}{\tau\tripleprimed}
\newcommand{\RevAPFMww}{\Rev[\wealth]{\APFM_\wealth}}
\newcommand{\RevAPFMwEw}{\Rev[\wealth^*]{\APFM_\wealth}}

\newcommand{\RevPPw}{\Rev[\wealth]{\price}}

\newagentvar{distribution}{F}

\newcommand{\agents}{N}

\newcommand{\exanterelax}{{\rm EAR}}

\newcommand{\budgetQuantile}{\kappa}

\newagentvar{cumpayoff}{P}
\newagentvar{payoffcurve}{R}

\newcommand{\payoffHull}{\bar{\cumpayoff}}
\newcommand{\concaveHull}{\payoffHull}
\newcommand{\welfareHull}{\payoffHull}

\newcommand{\normalization}{\varpi}
\newcommand{\budgetnum}{m}

\newcommand{\marginP}{\alloc^*_\wealth}
\newcommand{\marginW}{\alloc^\sharp_\wealth}

\newcommand{\marketclearing}[1][\quant]{\price^{#1}}

\newcommand{\prophetPrice}{\theta}

\newagentvar{budgetdist}{G}

\newcommand{\kunit}{\sfrac{1}{(1-\sfrac{1}{\sqrt{2\pi k}})}}

\newcommand{\alloclow}{\alloc_L}
\newcommand{\allochigh}{\alloc_H}
\newcommand{\allocq}{\alloc_\quant}

\newcommand{\randprice}{\Tilde{\price}}

\newcommand{\Payoff}[2][]{\text{\bf Payoff}\ifthenelse{\not\equal{}{#1}}{_{#1}}{}\!\left[{\def\givenn{\middle|}#2}\right]}
\newcommand{\A}{A}
\newcommand{\B}{B}
\newcommand{\remainder}{R}
\newcommand{\thres}{\prophetPrice}
\newcommand{\threses}{\boldsymbol{\prophetPrice}}

\newcommand{\W}{{\rm Wel}}
\newcommand{\Rv}{{\rm Rev}}

\DeclareRobustCommand*\cal{\@fontswitch\relax\mathcal}

\DeclareMathOperator{\argmax}{argmax}

%
% probability stuff.
%
\newcommand{\given}{\,\mid\,}

% resizing brackets 
\newcommand{\prob}[2][]{\text{\bf Pr}\ifthenelse{\not\equal{}{#1}}{_{#1}}{}\!\left[{\def\givenn{\middle|}#2}\right]}
\newcommand{\expect}[2][]{\text{\bf E}\ifthenelse{\not\equal{}{#1}}{_{#1}}{}\!\left[{\def\givenn{\middle|}#2}\right]}

% brackets configured with \tparen
\newcommand{\tparen}{\big}
\newcommand{\tprob}[2][]{\text{\bf Pr}\ifthenelse{\not\equal{}{#1}}{_{#1}}{}\tparen[{\def\given{\tparen|}#2}\tparen]}
\newcommand{\texpect}[2][]{\text{\bf E}\ifthenelse{\not\equal{}{#1}}{_{#1}}{}\tparen[{\def\given{\tparen|}#2}\tparen]}

% brackets do not resize.
\newcommand{\sprob}[2][]{\text{\bf Pr}\ifthenelse{\not\equal{}{#1}}{_{#1}}{}[#2]}
\newcommand{\sexpect}[2][]{\text{\bf E}\ifthenelse{\not\equal{}{#1}}{_{#1}}{}[#2]}

\newcommand{\abs}[1]{\left| #1 \right |}

% \bibliographystyle{apalike}
%\bibliographystyle{plainnat}

% add periods after paragraphs automatically.  
% comment out both lines to remove periods.
\let\oldparagraph\paragraph
\renewcommand{\paragraph}[1]{\oldparagraph{#1.}}

\title{Simple Mechanisms 
for Agents with Non-linear Utility}

\author{}
% \author[$*$]{Yiding Feng}
% \author[$*$]{Jason D. Hartline}
% \author[$*$]{Yingkai Li}
% \affil[$*$]{Department of Computer Science, Northwestern University.}
% \affil[ ]{\{\normalsize{\texttt{yidingfeng2021,yingkai.li}}\}\normalsize{\texttt{@u.northwestern.edu},
% \texttt{hartline@northwestern.edu} 
% }
% }
\date{}

\begin{document}

\maketitle
\begin{abstract}
We show that economic conclusions derived from \citet{BR-89} for
linear utility models approximately extend to non-linear utility
models.  Specifically, we quantify the extent to which agents with
non-linear utilities resemble agents with linear utilities, and we
show that the approximation of mechanisms for agents with linear
utilities approximately extend for agents with non-linear utilities.

We illustrate the framework for the objectives of revenue and welfare
on non-linear models that include agents with budget constraints,
agents with risk aversion, and agents with endogenous valuations.  We derive bounds on how much these
models resemble the linear utility model and combine these bounds with
well-studied approximation results for linear utility models.  We
conclude that simple mechanisms are approximately optimal for these
non-linear agent models.
\end{abstract}

\section{Introduction}

The seminal work of \citet{mye-81} characterizes the revenue-optimal auction
under the assumption that agents have linear utilities,
i.e., their utility is a linear function 
of allocation and payment. 
A vast number of studies in the later literature
focus on this linear utility model 
that is simple and tractable.
In this paper, 
we show that some conclusions and 
techniques derived from linear utility models
approximately extend to 
non-linear agents 
(whose utility function is not linear 
in allocation or payment).
For context, optimal mechanisms 
% and 
% approximately optimal mechanisms 
for non-linear
agents are not simple \citep[e.g.,][]{PV-14,DW-17} and therefore difficult to
understand precisely.  

As surveyed by \citet{har-12}, the {\em method
  of approximation} quantifies the extent to which a theory
generalizes from ideal models and separates details from salient
features of the model.
One example of such method is the
approximate extension of 
the optimality
of price posting in the linear utility models from 
single-agent problems 
to multi-agent problems. 
For revenue-maximization, 
price posting for a single agent with linear utility is optimal
\citep{mye-81, RZ-83}.  This result approximately extends to
single-item multi-agent environments
for linear utility models
where 
sequentially posting prices to agents while supplies last
% a while-supplies-last price 
is 
% approximately optimal.  With the regularity
% assumption (i.e., the the distribution satisfies a concavity property)
% posted pricing is 
an $e/(e-1)$ approximation \citep{yan-11}.
% ; with just
% symmetry it is a 2-approximation \citep{CHMS-10}; with just regularity
% it is an $e$-approximation \citep{AHNPY-18}.  

The main results of this paper are 
(i) a quantification of the extent to which
agents with non-linear utility resemble agents with linear utility;
and (ii)
a reduction framework
showing that
the approximation bounds of 
auctions
(e.g., sequentially posting prices) 
for linear agents
approximately extend
for these non-linear agents.
For a $\gamma$-approximation bound for linear agents and non-linear
agents that are $\zeta$-close to resembling linear agents, our result
shows a $\gamma\zeta$-approximation bound.  
We then analyze two
canonical non-linear utility models (i.e., 
public budget utility and 
private budget utility) 
and give small constant bounds on
how much these non-linear agents resemble linear agents.  For example,
agents with independent private budget and regular values are $(\zeta = 3)$-close.  
This result combines with the results of the previous
paragraph to show that the worst case approximation of posting prices
for such non-linear agents is $3e/(e-1)$.
There are 
other canonical non-linear utility models
with small constant bounds on $\zeta$-closeness
(see related work for more discussion).

Our paper contributes to
the \emph{method of approximation} 
where we focus on the extensions from 
linear utility
models
to non-linear utility models.
One interpretation of our results is as follows.
Though posting a price is not optimal in many environments, the result
that price posting is good is broadly robust.  This result was
previously known for agents with linear utility.  Our results quantify
an aspect of non-linear agents which allows direct comparison to
linear agents with respect to price posting and approximately extends
these robustness results to large and important families of non-linear
agents.  We conclude that, with respect to approximate optimality of
price posting, many kinds of non-linearity of utility are details that
can be omitted from the model without significantly altering the main
take-aways.

\subsection{Our Results}
In this paper, we consider
service constrained environments where, 
in any outcome the mechanism produces, 
each agent is either considered served or unserved. 
The designer has a 
downward feasibility constraint (e.g., single-item,
multi-unit, matriod) 
that governs which subset of agents can be simultaneously served, but the other aspects of the outcome, e.g., payments, are unconstrained.
We mainly focus on 
revenue maximization 
and welfare maximization. 
Our results can be extended to 
other objectives (e.g., residual surplus).
See more discussion on extension in \Cref{sec:conclude}.

\paragraph{Reduction Framework of Multi-agent Mechanisms
for Non-linear Utility}
% (\Cref{sec:reduction})}
This paper defines a notion of single-agent approximation by
price-posting (see next paragraph) and shows that, for non-linear
agents that satisfy this definition, 
approximately optimal multi-agent 
%pricing-based
mechanisms can be derived from analogous mechanisms for linear agents.
This reduction framework 
is general --
it applies under any 
downward-closed feasibility constraint
(e.g., single-item,
multi-unit, matriod)
and common objectives (e.g., revenue, welfare, or their convex combination) and thus 
allows many known approximation mechanisms
for linear agents to be lifted to non-linear agent environments.  
The
approximation factors we obtain are the product of the single-agent
approximation factor of price-posting
for non-linear utility 
and the approximation factor of
the multi-agent mechanism
for linear utility.  
As examples of the reduction, we apply it
to the simple mechanisms studied in the literature
--
posted pricing mechanisms
and the marginal revenue mechanism.

The single-agent price-posting approximation that governs our
reduction is defined as follows. The literature on revenue optimal
mechanism design for a single agent under ex ante constraint defines
the so-called \emph{revenue curve} 
\citep[cf.][]{BR-89}. 
Fixing any
family of mechanisms and a single agent, the revenue curve is a mapping
from an ex ante constraint $\quant \in [0, 1]$ to the revenue of the
optimal mechanism in the family that sells with the given ex ante
probability $\quant$.  Specifically, the \emph{price-posting revenue
  curve}
  $\cumprice(\cdot)$ is generated by fixing mechanism class to all price-posting
mechanisms (i.e., posting a per-unit price); and the \emph{optimal revenue curve}
$\rev(\cdot)$ is by allowing all
possible mechanisms.  
In this paper we consider both welfare 
and revenue as objectives and
thus introduce 
payoff curves that correspond to these objectives.  
The
price-posting approximation that governs our reduction is the
\emph{closeness} of the concave hull of the price-posting payoff curve $\payoffHull(\cdot)$
and
the optimal payoff curve $\rev(\cdot)$.
Namely, we say a non-linear agent is $\zeta$-close
if price-posting is a $\zeta$-approximation 
to 
the (single-agent) optimal mechanism 
(i.e., $\max_{\quant<\hat\quant}\payoffHull(\quant) \geq \sfrac{1}{\zeta}\rev(\hat\quant)$)
for all ex ante constraints $\hat\quant\in[0, 1]$.

\begin{theorem}[Reduction Framework -- Informal]
\label{thm:meta thm informal}
Fix an arbitrary 
deterministic,
dominant strategy incentive compatible (DSIC),
and interim individual rational (IIR)
mechanism $\mech_L$
for agents with linear utility.
Denote the approximation 
of $\mech_L$ as $\gamma$.
% Denote by $\gamma$ the approximation 
% of any dominant strategy incentive compatible (DSIC),
% interim individual rational (IIR), deterministic 
% mechanism 
% %$\mech_L$ 
% for agents with linear utility.
For any set of non-linear
agents who are $\zeta$-close,
the reduction framework 
(\Cref{alg:meta theorem})
produces
an analog mechanism $\mech$
%$\mech$
for these non-linear agents that is DSIC,
IIR and is a $\gamma\,\zeta$-approximation.
\end{theorem}

\paragraph{Price-posting Approximation 
for Single Agent with Budget Utility}
% (\Cref{sec:welfare close,sec:revenue close})}
% In \Cref{sec:reduction},
% based on the definition of closeness between the price-posting payoff
% curve and the optimal payoff curve, we introduce a reduction framework
% to approximately reduce the analysis of the approximation bounds for
% simple price-posting-based mechanisms for agents with non-linear
% utility to agents with linear utilities.  
To instantiate our
reduction framework,
we show a set of constant-factor closeness results
(i.e., approximation by price-posting)
for a single agent with public or private budget utility
and objectives of revenue or welfare.
Combining with a number of constant-factor 
approximation
results of simple mechanisms for linear agents,
we obtain
simple mechanisms that are 
constant-factor approximations 
for agents
with budgets.

% We consider both symmetric and 
% asymmetric agents with private budget,
% and apply our framework
% for both revenue-maximization 
% and welfare-maximization.

For welfare-maximization\footnote{For the welfare-maximization problem, we assume that the mechanism does not make positive transfer to the agents. Otherwise, the mechanism can make a lump sum transfer to all agents and run the VCG mechanism.} (\Cref{sec:welfare close}), we show constant factor closeness between the
welfare curves without any assumption 
on the valuation or
budget
distributions.
For revenue-maximization (\Cref{sec:revenue close}),
we show constant-factor closeness between
the revenue curves
%of price-posting
%revenue curve and ex ante revenue curve
under certain
assumptions (e.g., the standard regularity assumption) on  
the valuation  
or the budget distribution.
We also construct examples 
showing the necessity of our assumptions to guarantee the constant approximations (\Cref{example:post price is bad}). 
 Our single-agent analyses are summarized in
\Cref{table:summary} with their corresponding assumptions.
\begin{table}[t]
\begin{center}
\caption{Summary of results for $\zeta$-closeness
under various assumptions.
}\label{table:summary}
\begin{tabular}{|c|c|c|c|c|c|c|c|}
    \cline{2-8}
    \multicolumn{1}{c|}{} 
    & \multicolumn{2}{c|}{public budget} 
    & \multicolumn{5}{c|}{independent private budget} \\
    \cline{2-8}
    \multicolumn{1}{c|}{} 
    & \multicolumn{2}{c|}{revenue} 
    & \multicolumn{4}{c|}{revenue}
    &
    welfare
    \\
    \hline
    assumption
    & regular
    % or DMR
    value
    & %no assumption
    & 
    \multicolumn{2}{c|}{regular value}
    & 
    \multicolumn{2}{c|}
    % {MHR budget$^\dagger$}
    {MHR budget}
    & %no assumption
    \\
    \hline
    $\zeta$-closeness 
    & 
    $1^*$
    & 
    $2^*$
    &
    \multicolumn{2}{c|}{
    3
    }
    & 
    \multicolumn{2}{c|}{
    $1+3e - \sfrac{1}{e}$
    }
    & 
    2
    \\
    \hline
\end{tabular}
 	\smallskip
\noindent\par
 	\begin{minipage}{0.9\textwidth}
 		{
 			\center
 			\footnotesize
 			$*$ indicates tight ratio.
% $\dagger$
% distribution $\distribution$ is MHR if the hazard rate $\frac{f(\val)}{1-\distribution(\val)}$ is non-decreasing in $\val$.
distribution $F$ is  
(i)
regular if $\val-\frac{1-F(\val)}{f(\val)}$ is
non-decreasing in $\val$;
% (ii) DMR 
% if $\val\cdot f(\val)-(1-F(\val))$
% is
% non-decreasing in $\val$;
(ii) MHR 
if $\frac{f(\val)}{1-\distribution(\val)}$ is non-decreasing in $\val$.
 		}
 	\end{minipage}
\end{center}
% \vspace{-12pt}
% \caption{Summary of results for $\zeta$-closeness
% under various assumptions.
% }
% \vspace{-5pt}
\end{table}
%\vspace{-15pt}

Our analyses and results of the closeness between the concave hull of the
price-posting payoff curve and optimal payoff curve for agents with
private budget are interesting independently of our reduction
framework.  The setting of our single-agent analysis with an ex ante
constraint is equivalent to the mechanism design problem for a
continuum of i.i.d.\ (non-linear) agents with unit-demand and limited supply.  
A similar setting has been studied by \citet{Richter-18}
who shows that price-posting is optimal in the
continuum model for budgeted agents with regular and decreasing density value distributions
and, critically, no unit-demand constraint.

\begin{comment}
\paragraph{Price-posting Approximation 
for Single Agent with Risk-averse Utility}
It is standard to model risk-averse utility 
as a concave function that 
maps agents' wealth to a utility. 
Note that this risk-aversion does not impose challenges in the welfare maximization problem 
since the both the optimal mechanism (e.g., VCG mechanism)
and the simple price posting mechanisms are deterministic,
and agents behave as if they are linear agents. 
However, for the revenue maximization problem, 
this introduces a non-linearity 
into the incentive constraints of the agents 
which in most cases makes mechanism design 
analytically intractable.
% Most results for agents with 
% risk-averse utility 
%  consider the 
%  comparative performance 
%  of the first- and second-price auctions,
%  cf.,
%  \citet{RS-81},
%  \citet{Hol-80},
%  \citet{MR-84}.
% \citet{mat-83}
% and \citet{MR-84},
% however,
% characterize optimal mechanisms for
% symmetric agents
% for constant absolute risk aversion
% and more general risk-averse model.
In this paper, we restrict attention to
very specific form of risk aversion studied in 
\citet{FHH-13},
which is called capacitated utility.
We show that agents are $(2 + \ln \eta)$-close where $\eta$ is the capacity. 
Moreover, the dependence on $\eta$ is tight in this setting. 
% a constant approximation bound
% (\Cref{thm:risk-averse bound})
% for anonymous pricing for asymmetric agents,
% under an assumption on the capacities.
\end{comment}

Our closeness result can be generalized to any convex combination of welfare and revenue as the objective function, 
i.e., 
if an agent is $\zeta_1$-close for welfare maximization
and $\zeta_2$-close for revenue maximization, 
then this agent is $2\max\{\zeta_1,\zeta_2\}$-close for any convex combination of welfare and revenue. 
Note that this generalization result does not hinge on the utility model of the agents or their type distributions. 

All mechanisms implemented in our paper are dominant strategy incentive compatible mechanisms. 
In contrast to linear agents, 
where any Bayesian incentive compatible
(i.e., a weaker notation than DSIC)
mechanism can be implemented in dominant strategy for single item auctions \citep{GGKMS-13}, 
it is not without loss to consider dominant strategy incentive compatible mechanisms for non-linear agents \citep{FH-18}.
There is a line of work focusing on the design of strategically simple mechanism \citep[e.g.,][]{CE-07,BL-19, LD-20},
and our results indicate that for a broad family of non-linear agents, 
dominant strategy incentive compatible mechanisms
are approximately optimal for any convex combination of welfare and revenue as the objective function.

\subsection{Related Work}
\paragraph{Extensions from linear utility models
to non-linear utility models} 
Our reduction framework which reduces 
approximation for non-linear agents to 
approximation for linear agents 
has also been studied in \citet{AFHH-13}
for similar settings.
\citet{AFHH-13} introduce a reduction framework
that converts the marginal revenue mechanism for linear agents 
to non-linear agents
for general objective and any downward-closed feasibility constraints. 
Their reduction framework is also applicable 
to other DSIC, IIR, deterministic mechanisms for linear agents. 
Unlike our framework which uses 
single-agent price-posting mechanisms 
(induced from price-posting payoff curves)
as a building-block, 
\citet{AFHH-13} convert mechanisms for linear agents 
into mechanisms for non-linear agents 
with single-agent 
ex ante optimal mechanisms 
(induced from optimal payoff curves)
as components. 
From the mechanism designer's perspective, 
identifying ex ante optimal mechanisms for a single non-linear agents
could be much harder than
identifying ex ante optimal price-posting mechanisms 
(e.g., private budget utility, risk averse utility).
Furthermore, 
due to this difference, 
the implementation of the reduction framework 
together with its outcome mechanisms in \citet{AFHH-13} 
are more complicated than ours.
In general, the framework in \citet{AFHH-13}
only converts DSIC mechanisms for linear agents 
into Bayesian incentive compatible 
mechanisms for non-linear agents.

\citet{FHL-19} study of the approximation
of a specific mechanism (i.e., anonymous pricing) 
for non-linear agents
in single-item environments
for revenue maximization.%
\footnote{
For the welfare-maximization problem,
\citet{Luc-17} shows that 
anonymous pricing is a $2$-approximation
for linear agents.
\Cref{thm:welfare private} shows 
that agents with private budget are $2$-close.
However, as shown in
\Cref{example: anonymous pricing welfare},
the approximation ratio
of anonymous pricing 
for agents with private budget is unbounded.}
A key ingredient of their result is the ``similarity''
between the price-posting revenue curve and the optimal revenue curve.
However, in order to preserve the anonymous property,\footnote{Anonymous pricing mechanism for linear agents can also be applied in our reduction framework, which converts it into a posted pricing mechanism (i.e., oblivious posted pricing) with good approximation but without anonymity.
}
% in the anonymous pricing,
the ``similarity'' defined in \citet{FHL-19} is much stronger than 
the closeness in this paper
and thus harder to satisfy in linear utility models.%
\footnote{
For the revenue-maximization problem,
\Cref{thm: private budget regular value}
show that 
agents with private budget and regular value 
is $3$-close under our closeness definition,
but is unbounded with respect to the ``similarity'' definition in \citet{FHL-19}.
}
% Specifically,  
% \citet{FHL-19} define a notion of approximate
% regularity for agents with non-linear utility, which characterizes (a)
% the closeness between price-posting revenue curve and optimal revenue curve (i.e., stronger than our closeness property), and (b) the approximate concavity of
% price-posting revenue curve (i.e., the sensitivity of revenue against
% prices for a single agent
% --
% much harder to satisfy for non-linear
% utility).  
The main contributions of our results, relative to \citet{FHL-19}, are
the following three points: our reduction framework
(i) 
introduces a weaker closeness definition
that is sufficient to preserve 
approximation;
% separates the definition of
% approximate regularity to closeness and approximate concavity,
% and further relaxes the definition
% of closeness by considering the closeness between the concave hull of
% the price-posting payoff curve and the optimal payoff curve;
(ii) is applicable to any deterministic, DSIC mechanism,
(iii) is applicable to common objectives (e.g., welfare, residual surplus)
besides revenue and more general environments
(i.e., any 
downward-closed feasibility constraints).

\citet{BCHIL-20} analyze the equilibria of agents 
with convex disutilities 
for payments (i.e., a family of non-linear utility models)
in mechanisms designed 
for quasi-linear agents in multi-item environments.
They show that any truthful mechanism 
for quasi-linear agents
has a simple best response function for agents 
with non-linear disutility from payments.
They further focus on simultaneous 
second-price auctions for additive agents,
and show that 
in equilibrium
constant approximations are guaranteed for 
suitable welfare and revenue benchmarks.

\paragraph{Mechanism design for agents with budget constraints}
% Mechanism design for agents with non-linear utilities 
% is well studied in the literature. 
In this work, as applications of our general framework, 
we focus on one specific non-linear model --
agents with budget constraints. 
\citet{LR-96} and
\citet{mas-00}
study the revenue-maximization and 
welfare-maximization problems for
%for single-item environment
%and 
symmetric agents with \emph{public} budgets
in single-item environments.
\citet{BS-18} generalize
their results to agents with
i.i.d.\ values but asymmetric 
public budgets.

\citet{CG-00} consider 
the single agent problem with \emph{private} budget
and valuation 
distribution that
satisfies declining marginal revenues,
and characterize the optimal mechanism by 
a differential equation.
\citet{DW-17} consider the single agent problem with private
budget and 
an arbitrary valuation distribution,
characterize the optimal mechanism by 
a linear program,
and use an algorithmic approach to construct the solution.
\citet{PV-14} provide the characterization of the optimal mechanism 
% for symmetric agents with public budgets 
to
symmetric agents with uniformly distributed private budgets.
% \citet{Richter-18} shows that
% a price-posting mechanism is optimal 
% for selling a divisible good 
% to a continuum of agents with private budgets
% if their valuations are regular with decreasing 
% density.
For more general settings,
no closed-form characterizations 
are known. However,
the optimal mechanism can be solved by 
a polynomial-time solvable linear 
program over interim allocation rules
\citep[cf.][]{AFHHM-12, CKM-13}.

\paragraph{$\zeta$-closeness for other non-linear utilities}
Besides agents with budget constraints, 
the single-agent price-posting approximations 
(i.e., $\zeta$-closeness)
have also been 
studied in other canonical non-linear utility models. 
It is standard to model risk-averse utility 
as a concave function that 
maps agents' wealth to a utility. 
Note that this risk-aversion does not impose challenges in the welfare maximization problem (i.e., risk-averse agents are $1$-close for welfare maximization)
since the both the optimal mechanism (e.g., VCG mechanism)
and the simple price posting mechanisms are deterministic,
and agents behave as if they are linear agents.
However, for the revenue maximization problem, 
this introduces a non-linearity 
into the incentive constraints of the agents 
which in most cases makes mechanism design 
analytically intractable.
In \citet{FHL-19}, authors restrict attention to
a specific form of risk aversion studied in 
\citet{FHH-13},
which is called capacitated utility.
They show that agents are $(2 + \ln \eta)$-close where $\eta$ is the capacity. 
% \YF{Add discussion on outside option model.}
There is also a follow up work which applies our framework to agents with stochastic outside options 
for revenue maximization \citep{GILL-21}. 
% and show that those agents are constant close under various distributional assumptions 
In their model, each agent has a private outside option $c$,
and utility of the agent with value $\val$ for participating the auction 
with allocation $\alloc$ and payment $\price$
is $\max\{0, \val\alloc-\price-c\}$.
Note that this is a special form of non-linear utilities.
They show that the agent is $1$-close if the valuation distribution satisfies decreasing marginal revenue 
and is independent from the non-negative outside option. 

\begin{comment}
\paragraph{Simple verse optimal}
It is well known that simple mechanisms generate robust performance guarantees 
for both welfare maximization \citep{RST-17} and revenue maximization \citep{Car-17, BGLT-19}.
% Finally, there is a huge literature on approximating the optimal revenue using simple mechanisms. 
Moreover, simple mechanism are approximately optimal 
under natural assumptions of type distributions. 
For single item auction and agents with linear utilities, 
\citet{JLQTX-19} show that the tight ratio 
between anonymous pricing and the optimal mechanism
is 2.62 under regularity assumption,
and \citet{yan-11} shows that the tight approximation ratio is $\sfrac{e}{(e-1)}$ for sequential posted pricing. 
The approximate optimality of sequential posted pricing can be generalized to multi-item settings when agents have unit-demand valuations \citep{CHMS-10, CDW-16}.
For agents with non-linear utilities, 
% e.g., when agents have budget constraints, 
given matroid environments,
\citet{CMM-11}
show that a simple lottery mechanism is a constant approximation to the optimal pointwise individually rational mechanism for agents with MHR valuation 
and private budgets. 
Note that in our paper, the benchmark is
the optimal mechanism under interim individually rationality, 
and the revenue gap between those two benchmark
mechanisms may be unbounded. 
For multiple items, \citet{CGMW-18} 
shows that selling items separately 
or as a bundle is approximately optimal
for a single agent with additive valuation. 
\end{comment}
\section{Preliminaries}
\label{sec:prelim}
In this paper, we consider 
general payoff maximization
% both the welfare maximization and the revenue maximization 
in downward-closed environments for non-linear agents (e.g., agents with budgets). 
For example, both welfare maximization and 
revenue maximization 
are special cases of payoff maximization. 

\paragraph{Agent Models}
There is a set of agents $\agents$ where $\abs{\agents} = n$. 
An agent's \emph{utility model} is defined as 
$(\typespace,
\distribution, \util)$ where $\typespace, \distribution$, and 
$\util$ are the type space, distribution and utility function. The outcome
for an agent is the distribution over the pair $(\alloc,\price)$, 
where allocation $\alloc \in \{0,1\}$ and payment $\price \in \R_+$. 
The utility function $\util$ of each player is a mapping
from her private type and the outcome to her von Neumann-Morgenstern
utility for the outcome.
An agent with value $\val$ has 
\emph{linear utility}
if $\util(\alloc,\price) = \val\alloc-\price$.

Though the main result (\Cref{thm:meta thm informal})
imposes no specific restriction (except 
\Cref{asp:ordinary good} defined below)
on the utility models, 
we introduce 
\emph{private-budget model}
as a recurring example in later discussions:
Each agent $i \in \agents$ has private value~$\vali$ 
and private budget constraint $\budgeti$. 
We refer to the pair $(\vali, \budgeti)$ 
as the private type of the agent. 
The valuation~$\vali$ for each agent $i$ is
sampled from the marginal valuation distribution $\distributioni$ 
and her budget~$\budgeti$ is sampled from 
the marginal budget distribution~$\budgetdisti$.
The pair $(\vali, \budgeti)$ is independent across different agents 
while we allow the value $\vali$ to be correlated with budget~$\budgeti$ for each agent~$i$. 
We denote by $\distributioni$ and $\budgetdisti$ 
the cumulative probability function for the valuation and the budget of agent $i$.
For each budgeted agent $i$, her utility given an outcome $(\alloci, \pricei)$ is 
$\utili = \vali \alloci - \pricei$ 
if the payment does not exceed her budget, i.e., $\pricei \leq \budgeti$. 
Otherwise, her utility is $\utili = -\infty$. 
Note that this agent has linear utility if her budget is
always infinite. %with probability~1. 
In the following sections, we will drop the subscripts when we discuss the single agent problems. 

% \begin{definition}
% \label{def:non linear utility}
% An agent's \emph{utility model} is defined as 
% $(\typespace,
% \distribution, \util)$ where $\typespace, \distribution$, and 
% $\util$ are the type space, distributions and utility functions. The outcome
% for an agent is the distribution over the pair $(\alloc,
% \price)$, where allocation $\alloc \in \{0,1\}$ and payment $\price
% \in \R_+$.  The utility function of each player $\util$ is a mapping
% from her private type and the outcome to her von Neumann-Morgenstern
% utility for the outcome.
% \end{definition}

\paragraph{Mechanisms}
In this paper, 
we consider 
sealed-bid mechanisms:
in a mechanism $\{(\alloci,\pricei)\}_{i\in N}$,
% , \prices)$,
agents simultaneously submit sealed bids $\{\bid_i\}_{i\in N}$ from their type spaces 
to the mechanism, 
and each agent $i$ gets allocation 
$\alloc_i(\{\bid_i\}_{i\in N})$ with 
payment $\price_i(\{\bid_i\}_{i\in N})$.
The outcome of the mechanism is a distribution of the allocation payment pair
$(\alloci, \pricei)$ for each agent $i$ 
where the allocation is a probability
$\alloci \in [0, 1]$ and 
the price is $\pricei \in \reals_+$. 
There is a downward-closed constraint $\feasibles \subseteq \{0,1\}^n$ on the set of feasible outcomes. 

% A strategy $\strat_i$ of an agent $i$
% is a mapping from 
% her private type to a distribution on bid which she will submit.
% A strategy profile $\strats= (\strat_1 ,\dots, \strat_n)$ 
% of all agents is a \emph{Bayesian Nash equilibrium}
% if fixing other agents' strategies, 
% no agent can gain
% strictly higher expected utility (over randomness of the mechanism and other agents' valuations and strategies) 
% by deviating her current strategy.
We consider mechanisms
that satisfy
\emph{Bayesian incentive compatibility} (BIC),
i.e., no agent can gain strictly higher 
expected utility than reporting her private
type truthfully 
if 
all other agents are reporting their private types truthfully,
and \emph{interim individual rationality} (IIR),
i.e., the expected utility 
is non-negative for all agents and all private types
if all agents are reporting their private types
truthfully
mechanisms. 
For later discussion, we also define 
\emph{dominant strategy
incentive compatibility} (DSIC) for a mechanism if 
no agent can gain strictly higher expected utility 
than reporting her private type truthfully, 
regardless of 
other agents' reports.
We abuse notation 
and denote $\alloci(\typei), \pricei(\typei)$
as the expected allocation and payment
of agent $i$ with type $\typei$
in the truth-telling equilibrium,
where the expectation is taken over
mechanism and 
the type distribution $\{\distribution_j\}_{j\not=i}$
for all other agents $j$.
% For later discussion, 
% we also define \emph{dominate strategy incentive compatible}
% and \emph{ex post individual rational}.
% \begin{definition}\label{def:ir ic}
% Let $\utili(z; \vali)$ 
% be the expected utility (over randomness of the mechanism and other agents' valuation) of agent $i$ with value $\vali$
% when she reports $z$ in Bayesian equilibrium. 
% A mechanism is {\em Bayesian incentive compatible} if
% no agent $i$ with value $\vali$ prefers to misreport some value
% $z$: 
% $\forall i,\vali,z,\ \utili(\vali; \vali) 
% \geq \utili(z; \vali)$, 
% and a mechanism is {\em interim individual rational} if
% all agents $i$ are willing to participate: 
% $\forall i,\vali,\ \utili(\vali; \vali) 
% \geq 0$. 
% \end{definition}

% \begin{definition}\label{def:ex post ir ic}
% Let $\tilde\utili(z; \vali)$ 
% be the expected utility (over randomness of the mechanism and other agents' valuation) of agent $i$ with value $\vali$
% when she reports $z$ in Bayesian equilibrium. 
% A mechanism is {\em Bayesian incentive compatible} if
% no agent $i$ with value $\vali$ prefers to misreport some value
% $z$: 
% $\forall i,\vali,z,\ \tilde\utili(\vali; \vali) 
% \geq \tilde\utili(z; \vali)$, 
% and a mechanism is {\em interim individual rational} if
% all agents $i$ are willing to participate: 
% $\forall i,\vali,\ \utili(\vali; \vali) 
% \geq 0$. 
% \end{definition}

This paper considers mechanism design problem
for 
both revenue maximization 
and welfare maximization.
We also use term ``payoff''
in the contexts where the discussion applies 
to both revenue and welfare.
In \Cref{sec:conclude}, we also discuss 
the extension to other objectives.

\newcommand{\payoffFunction}{\mathcal J}

% This paper considers mechanism design problem 
% for arbitrary objectives 
% (e.g., revenue, welfare, residual surplus)
% satisfying \Cref{asp:standard objective}.
% Given an objective,
% we denote \linebreak
% $\Rev{\{(\alloci,\pricei)\}_{i\in\agents},
% \{(\typespace_i,
% \distribution_i, \util_i)\}_{i\in N}
% }$ by the payoff (i.e., objective value) of 
% a mechanism 
% $\{(\alloci,\pricei)\}_{i\in\agents}$
% for agents with utility model 
% $\{(\typespace_i,
% \distribution_i, \util_i)\}_{i\in N}$.
% \begin{assumption}
% \label{asp:standard objective}
% There exists a function $\payoffFunction:[0, 1]\times \reals_+ \rightarrow \reals$ such that 
% for any mechanism 
% $\{(\alloci,\pricei)\}_{i\in\agents}$
% and agents with any utility model 
% $\{(\typespace_i,
% \distribution_i, \util_i)\}_{i\in N}$,
% \begin{align*}
% \Rev{\{(\alloci,\pricei)\}_{i\in\agents},
% \{(\typespace_i,
% \distribution_i, \util_i)\}_{i\in N}
% }
% =
% \sum_{i\in\agents}
% \expect[\typei\sim \distributioni]
% {\payoffFunction(\alloci(\typei), \pricei(\typei),\utili)
% }~.
% \end{align*}
% \end{assumption}
% Note that \Cref{asp:standard objective}
% indicates that the payoff is linearly separable across agents, 
% and is satisfied for most common objectives,
% e.g., revenue, welfare, residual surplus 
% and their convex combinations.

\paragraph{Payoff Curves}
% \todo[inline]{Since we define the general payoff curve
% and do the reduction on this general payoff objective function,
% should we rewrite this paragraph for general objective, i.e.
% name it as Payoff curves directly?}
% In this paragraph, we follow the definition of \citet{FHL-19} 
% and consider the \emph{quantile space}, 
% \emph{ex ante revenue curves} and \emph{price-posting revenue curves}. 
% We further extend this idea for the welfare objective 
% and similarly define the 
% \emph{ex ante welfare curves} and \emph{price-posting welfare curves}.

In this paragraph, we define the \emph{payoff curves}, 
and introduce the \emph{revenue curves}
and \emph{welfare curves} 
as special cases of the payoff curves. 
More specifically, we define the 
\emph{optimal payoff curves} and \emph{price-posting payoff curves}
respectively. 

% \begin{definition}
% For an agent with valuation distribution $\distribution$, 
% her \emph{quantile} $\quant(\val)$
% as a function of her value $\val$ is defined as $\quant(\val) = 1 - \distribution(\val)$.
% The inverse of the quantile function, which we refer to as the demand function $\valfunc$, 
% maps her quantile to her value, i.e., $\valfunc(\quant) = F^{-1}(1 - \quant)$.
% \end{definition}

\begin{definition}
Given ex ante probability $\quant$, 
the \emph{optimal payoff curve} $\payoffcurve(\quant)$ 
is a mapping from $\quant$
to the optimal ex ante payoff 
for the single agent problem, 
i.e., the optimal payoff of the mechanism which 
sells the item with 
ex ante probability $\quant$. 
\end{definition}

\begin{fact}
\label{fact:concave}
The optimal payoff curve is %always 
concave.
\end{fact}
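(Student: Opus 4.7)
The plan is a direct convex-combination argument at the level of mechanisms, avoiding any differentiation or duality. Fix any two quantiles $\quant_1,\quant_2 \in [0,1]$ and any $\convexcomb \in [0,1]$, and fix $\epsilon > 0$. By definition of $\payoffcurve$, there exist single-agent BIC and IIR mechanisms $\mech_1$ and $\mech_2$ with ex ante selling probabilities $\quant_1$ and $\quant_2$ whose payoffs are within $\epsilon$ of $\payoffcurve(\quant_1)$ and $\payoffcurve(\quant_2)$ respectively. I would then build a new mechanism $\mech_\convexcomb$ that, before looking at the agent's report, flips an independent coin with bias $\convexcomb$ and runs $\mech_1$ on heads and $\mech_2$ on tails.

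Next I would verify the three properties of $\mech_\convexcomb$ needed to conclude. First, it inherits BIC and IIR from $\mech_1$ and $\mech_2$: because the coin is independent of the agent's type, her expected utility from any report under $\mech_\convexcomb$ is a $\convexcomb$-mixture of her expected utilities under $\mech_1$ and $\mech_2$, truth-telling maximizes each summand, and the IIR floor of zero is preserved under convex combination. Second, by linearity of expectation, its ex ante selling probability is exactly $\convexcomb\quant_1 + (1-\convexcomb)\quant_2$. Third, again by linearity, its payoff equals $\convexcomb\,\payoffcurve(\quant_1) + (1-\convexcomb)\,\payoffcurve(\quant_2)$ up to $O(\epsilon)$.

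Since $\payoffcurve$ is defined as a supremum over feasible mechanisms at the given ex ante constraint, this exhibits a feasible mechanism at the constraint $\convexcomb\quant_1 + (1-\convexcomb)\quant_2$ whose payoff certifies
\[
\payoffcurve\bigl(\convexcomb \quant_1 + (1-\convexcomb) \quant_2\bigr) \;\geq\; \convexcomb\,\payoffcurve(\quant_1) + (1-\convexcomb)\,\payoffcurve(\quant_2) - O(\epsilon).
\]
Letting $\epsilon \to 0$ yields concavity. The only subtle ingredient — and essentially the only place something could fail — is that the payoff functional be linear in the outcome distribution, so that mixing mechanisms mixes their payoffs; this holds for welfare, revenue, and indeed any expectation-based per-outcome objective, matching the paper's general payoff setting. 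I do not anticipate a serious technical obstacle here.
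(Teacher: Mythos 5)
Your argument is exactly the one the paper gives: the paper justifies this fact in one line by noting that the space of mechanisms is closed under convex combination, and your proposal simply spells out that randomization-over-mechanisms argument in full (BIC/IIR preservation, linearity of the ex ante constraint, and linearity of the payoff). Correct, and essentially identical to the paper's reasoning.
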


\Cref{fact:concave} holds because 
the space of mechanisms is closed under convex combination. 
% For agents with budget constraints, 
% the price posting mechanisms we care about 
% are posting 
% per-unit prices and
% the market clearing (per-unit) price.
We also study mechanisms based on simple 
per-unit posted posting. 
\begin{definition}
Posting \emph{per-unit price} $\price$ is 
offering a menu $\{(\alloc, \alloc\cdot\price): \alloc \in [0,1]\}$ 
to the agent. % for any $\alloc \in [0,1]$. 
% the unit price paid for the probability of winning
% the indivisible item. 
% to the price paid for the lottery. 
% the unit price for the lottery of the indivisible item. 
A budgeted agent with value $\val$ 
and budget $\budget$ 
given \emph{per-unit price} $\price$ 
will purchase the lottery 
$\alloc = \min\{1, \sfrac{\budget}{\price}\}$ 
if $\val \geq \price$, 
and purchase the lottery 
$\alloc = 0$ otherwise.
\end{definition}

\begin{definition}
The \emph{market clearing price $\marketclearing$} 
for the ex ante constraint $\quant$ 
is the per-unit price such that the item is sold with probability $\quant$
in expectation 
over the type distribution
and the probabilities of the selected lottery.\footnote{For a fixed ex ante probability $\quant$,
the market clearing price $\marketclearing$
may not be unique. In this case, we restrict 
to the market clearing price with the highest payoff.}
\end{definition}

\begin{definition}
Given ex ante probability $\quant$, 
the \emph{price-posting payoff curve} $\cumpayoff(\quant)$ 
is a mapping from $\quant$
to the optimal price-posting payoff
for the single agent problem, 
i.e., the payoff of posting 
market clearing price $\marketclearing$.
\end{definition}

Price-posting payoff curves
are not generally concave, 
we can iron it to get the concave hull of the price-posting payoff curves. 
\begin{definition}\label{def:iron payoff}
The ironed price-posting payoff curve $\payoffHull$
is the concave hull of the price-posting payoff curve $\cumpayoff$.\footnote{By definition, for every ex ante probability $\quant$,
$\payoffHull(\quant)$
is the optimal payoff 
from posting a 
(possibly randomized) 
per-unit price 
such that the item is sold with probability $\quant$.}
\end{definition}

% Next we review the relation between 
% the optimal payoff curve and 
% the concave hull of the price-posting payoff curve
% for an agent with linear utility.
% \citet{BR-89} show 
% that for revenue-maximization,
% the optimal revenue curve
% $\revcurve$ of a linear agent is equal to 
% her ironed price-posting revenue curve $\concaveHull$.
% This equivalence holds 
% since any mechanism for a linear agent 
% can be implemented by a randomization over pricing posting 
% mechanism
% % which is 
% considered in $\payoffHull$
% \citep[cf.][]{BR-89,har-16}.
% Therefore, 
% the same equivalence is guaranteed for 
% any objective considered in this paper 
% under \Cref{asp:standard objective}.

% \begin{lemma}
% \label{lem:payoff curve equivalence}
% The optimal payoff curve
% $\revcurve$ of a linear agent is equal to 
% her ironed price-posting payoff curve $\concaveHull$.
% \end{lemma}

% \begin{comment}
Next we review the relation between 
the optimal revenue curves and 
the concave hull of the price-posting revenue curves
for agents with linear utilities. 
% Thus, 

\begin{lemma}[\citealp{BR-89}]
\label{lem:price-posting revenue curve to ex ante revenue curve}
The optimal revenue curve
$\revcurve$ of a linear agent is equal to 
her ironed price-posting revenue curve $\concaveHull$.
\end{lemma}

A similar result holds for the welfare curve. 
Note that the price-posting welfare curve is 
always 
concave for agents with linear utility. 

\begin{lemma}
\label{lem:price-posting welfare curve to ex ante welfare curve}
The optimal welfare curve
$\welcurve$ of a linear agent is equal to 
her price-posting welfare curve $\cumwelfare$, both are concave 
and $\welcurve = \cumwelfare = \welfareHull$.
\end{lemma}
% \end{comment}

In general, for agents with non-linear utilities, 
the optimal payoff (e.g., revenue or welfare) curves and the concave hull of 
the price-posting payoff curves are not equivalent, 
and the ex ante optimal 
mechanism is more complicated 
and extracts strictly higher payoff 
than the optimal price posting mechanism
and randomizations over price posting mechanisms.

% Next we provide a formal definition of the price-posting mechanisms, 
% as well as the benchmark of our paper, the ex ante relaxation. 
% First we define the market clearing price. 

% \begin{definition}
% The \emph{market clearing price $\marketclearing$} 
% for the ex ante constraint $\quant$ 
% is the per-unit price such that the item is sold with probability $\quant$.
% \end{definition}

\paragraph{Ex Ante Relaxation}
Next we provide the benchmark of our paper, 
the ex ante relaxation. 
For auctions with downward-closed
feasibility constraints,
any profile of ex ante probabilities $\{\quanti\}_{i\in N}$ 
is ex ante feasible with respect to constraint $\feasibles$ 
if there exists a randomized, ex post feasible allocation 
such that the probability agent $i$ receives an item, 
i.e., marginal allocation probability for agent $i$, 
is exactly equal to~$\quanti$. 
We denote the set of ex ante feasible 
profiles 
with respect to feasibility constraint $\feasibles$ 
by $\exfeasible(\feasibles)$. 
The optimal ex ante payoff given a specific collection of 
payoff curves $\{\payoffcurve_i\}_{i\in N}$ 
and feasibility constraint $\feasibles$
is 
$$\exanterelax(\{\payoffcurve_i\}_{i\in N}, \feasibles)
= \max\limits_{\{\quanti\}_{i\in N} 
\subseteq \exfeasible(\feasibles)}
\sum\nolimits_{i\in N} \payoffcurve_i(\quanti).$$

\paragraph{Ordinary Good Assumption}
% For technical reasons, 
% we make the following assumption 
% on agents' utility models.
Most scenarios of interest
in mechanism design 
satisfy the following 
very permissive assumption.

\begin{assumption}
\label{asp:ordinary good}
The item is \emph{the ordinary good},
i.e.,
for all type $\type\in \typespace$,
when offered a per-unit price $\price$ for the item
to the agent with type $\type$,
her demand 
$\argmax_{\alloc\in[0,1]}
\util(\alloc,\alloc\cdot \price, \type)$ 
is weakly decreasing in price $\price$.
\end{assumption}

Note that 
\Cref{asp:ordinary good} is satisfied 
for most common utility models,
e.g.,
linear utility, budget utility,
risk averse utility.

\section{Reduction Framework
for Pricing-based Mechanisms}
\label{sec:reduction}
In this section, we introduce a reduction framework
that extends the approximation of 
multi-agent mechanisms for agents with linear utilities
to agents with non-linear utilities.

% , risk-averse, \Cref{def:non linear utility}). 
To establish this result, 
we start with the observation that 
every multi-agent mechanism for agents with linear utilities
is a \emph{pricing-based mechanism},
i.e., its payoff is determined by 
the price-posting payoff curves of agents.
% \YF{I intended not to write the definition
% of pricing-based mechanism in the definition
% environment, since it is simple and we already
% have too many definition in definition
% environment.}

\begin{lemma}[\citealp{mye-81}]
In any mechanism $\mech_L = \{(\alloci,\pricei)\}_{i\in N}$
for linear agents with pricing-posting payoff curves $\{\cumpayoff_i\}_{i\in N}$,
the payoff is 
$\sum_{i\in N}
\expect{\alloc_i(\val_i(\quant))\cdot\cumpayoff_i'(\quant)
+\alloc_i(\val_i(0))\cdot \cumpayoff_i(0)
}$,
where $\val_i(\quant) \triangleq \sup\{\val:F_i(\val) = 1 -\quant\}$.
% $\cumpayoff$ is the price-posting payoff curves.
% (i.e.\ the concave hull of $\cumpayoff_i$).
Denote 
the corresponding payoff as
$\mech_L(\{\cumpayoff_i\}_{i\in N})$.
\end{lemma}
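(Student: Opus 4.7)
The plan is to treat ``payoff'' as a linear functional of the (allocation, payment) pair — welfare contributes $\val_i\,\alloc_i$ and revenue contributes $\price_i$, and any paper-relevant payoff is a linear combination of these — and to handle each contribution separately before recombining by linearity. Both pieces then follow from Myerson's classical derivation carried out in quantile space.

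For the revenue contribution, Bayesian incentive compatibility forces the interim allocation $\alloc_i(\val_i)$ to be non-decreasing in $\val_i$; combined with interim individual rationality pinned at value zero, this yields the payment identity $\price_i(\val_i) = \val_i\,\alloc_i(\val_i) - \int_0^{\val_i} \alloc_i(z)\,dz$ in expectation. Taking expectation over $\val_i \sim \distribution_i$ and integrating by parts gives $\mathbb{E}[\price_i] = \mathbb{E}[\virt_i(\val_i)\,\alloc_i(\val_i)]$ with virtual value $\virt_i(\val_i) = \val_i - (1-\distribution_i(\val_i))/\dens_i(\val_i)$. Substituting the quantile $\quant = 1 - \distribution_i(\val_i)$, so that $\val_i'(\quant) = -1/\dens_i(\val_i(\quant))$, one computes $\virt_i(\val_i(\quant)) = \val_i(\quant) + \quant\,\val_i'(\quant) = \tfrac{d}{d\quant}(\quant\,\val_i(\quant))$, which is exactly $\cumpayoff_i'(\quant)$ for the price-posting revenue curve $\cumpayoff_i(\quant) = \quant\,\val_i(\quant)$.

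For the welfare contribution the price-posting curve is $\cumpayoff_i(\quant) = \int_0^\quant \val_i(s)\,ds$ — posting price $\val_i(\quant)$ sells to every agent of value at least $\val_i(\quant)$ — so $\cumpayoff_i'(\quant) = \val_i(\quant)$, and the same change of variable gives $\mathbb{E}[\val_i\,\alloc_i(\val_i)] = \int_0^1 \alloc_i(\val_i(\quant))\,\cumpayoff_i'(\quant)\,d\quant$. Combining the two contributions by linearity, summing over agents, and interpreting the integral as an expectation over $\quant$ uniform on $[0,1]$ yields $\sum_{i \in N} \mathbb{E}[\alloc_i(\val_i(\quant))\,\cumpayoff_i'(\quant)]$; in particular the formula depends on each agent's utility model only through $\cumpayoff_i$, which justifies the notation $\mech_L(\{\cumpayoff_i\}_{i\in N})$.

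The one delicate point, and the main obstacle, is the change of variable from value to quantile space: if $\distribution_i$ has atoms or flat regions then $\cumpayoff_i'$ need not exist pointwise. However, BIC monotonicity implies that $\alloc_i(\val_i(\quant))$ is constant on any flat region of $\val_i(\cdot)$, so the integral against the distributional derivative of $\cumpayoff_i$ is unambiguous and the classical integration-by-parts argument goes through verbatim, as in \citet{mye-81}.
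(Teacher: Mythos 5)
Your derivation is correct and is precisely the classical Myerson argument that the paper invokes by citation rather than proving: the payment identity from BIC monotonicity plus IR normalized at the lowest type, the change of variables to quantile space identifying the virtual value with $\cumpayoff_i'(\quant)$ for the revenue curve $\quant\,\val_i(\quant)$, the analogous (easier) computation for welfare, and linearity to combine them. Your closing remark on flat regions and atoms correctly identifies the only technical subtlety and resolves it in the standard way, so nothing further is needed.
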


For non-linear agents, however,
mechanisms (e.g.\ revenue-optimal mechanism)
are 
not uniquely pinned down by the
pricing-posting payoff curves 
in general even for single-agent settings.
In other words, 
for a linear agent,
the ironed price-posting payoff curve
equals the optimal payoff curve.
For a non-linear agent,
however, the ironed price-posting payoff curve
is not generally equivalent 
to the optimal payoff curve. 
Hence, we introduce \emph{$\zeta$-closeness} 
of an agent 
to
measure
how close her ironed price-posting payoff curve
is to her 
optimal payoff curve. 
In \Cref{sec:welfare close,sec:revenue close}, 
we show that this closeness property for small $\zeta$ is satisfied 
for agent with private budget.

\begin{definition}[$\zeta$-closeness]
\label{def:zeta closeness}
An agent's ironed price-posting payoff curve $\payoffHull$ is
\emph{$\zeta$-close} to her optimal payoff curve $\payoffcurve$, 
if for all $\quant \in [0, 1]$, 
% $\max\limits_{\quant\in[0,\correspondingquant]}
% \payoffHull(\quant) 
% \geq 
% \sfrac{1}{\zeta}\cdot
% \payoffcurve(\correspondingquant)
% $.
there exists 
% an
% ex ante probability
$\quant \leq \correspondingquant$
such that
$\payoffHull(\quant) 
\geq \sfrac{1}{\zeta} \cdot
\payoffcurve(\correspondingquant)$. 
Such an agent is \emph{$\zeta$-close}.
% for ex ante optimization}.
\end{definition}

To extend the approximation 
of mechanisms for linear agents
to non-linear agents,
we need to reduce a non-linear agent
to her linear agent analog as follows.

\begin{definition}%[linear agents analog]
\label{def:linear analog}
Fix any set $\mathcal A$ of (non-linear) 
agents with price-posting payoff curves 
$\{\cumpayoff_i\}_{i\in N}$.
The \emph{linear agents analog} $\mathcal A_L$
is an set of linear agents whose 
price-posting payoff curves  
and 
optimal payoff curves are both
$\{\payoffHull_i\}_{i\in N}$.
\end{definition}
Note that linear agent analog is well-defined 
for both welfare maximization and 
revenue maximization.\footnote{
The price-posting revenue (resp.\ welfare)
curve $\revcurve(\quant)$ of a linear agent uniquely pins down
her valuation distribution as $\val(\quant) = \frac{\revcurve(\quant)}{\quant}$
(resp.\ $\val(\quant) = \revcurve'(\quant)$).}

Based on the definition of $\zeta$-closeness
and linear agent analog, we present the 
meta-theorem (\Cref{thm:meta thm}):
a reduction framework that
converts
multi-agent
mechanisms
for agents with linear utilities
to agents with non-linear utilities,
and approximately preserves its 
payoff approximation guarantee.

\begin{theorem}[Reduction Framework]
\label{thm:meta thm}
Fix any set $\mathcal A$ of (non-linear) 
agents with price-posting payoff curves 
$\{\cumpayoff_i\}_{i\in N}$
and optimal payoff curves $\{\payoffcurve_i\}_{i\in N}$.
For any DSIC,
IIR, deterministic 
mechanism $\mech_L$ for agents with linear utility,
there is a pricing-based mechanism $\mech$
for non-linear agents $\mathcal A$ that is DSIC,
IIR, and satisfies
\begin{enumerate}[label=\roman*.]
    \item \underline{Identical payoff}: 
    mechanism $\mech$ for non-linear agents 
    $\mathcal A$
has the same payoff as mechanism $\mech_L$ for 
the linear agents analog $\mathcal A_L$. 
Denote the payoff of mechanism $\mech$
as $\mech(\{\cumpayoff_i\}_{i\in N})$.
\item \underline{Identical feasibility}: 
mechanism $\mech$ for non-linear agents
$\mathcal A$
has the same distribution 
over outcomes
as mechanism $\mech_L$ for 
the linear agents analog $\mathcal A_L$. 
\end{enumerate}
Fix any downward-closed feasibility constraint $\feasibles$.
Denote by $\gamma$ the approximation 
of mechanism $\mech_L$
for
the linear agents analog $\mathcal A_L$
to
the ex ante relaxation of $\mathcal A_L$
under $\feasibles$,
i.e., 
$\mech_L(\{\cumpayoff_i\}_{i\in N}) \geq
\sfrac{1}{\gamma} \cdot \exanterelax(\{\payoffHull_i\}_{i\in N},\feasibles)$.
If each non-linear agent 
in $\mathcal A$ is $\zeta$-close,
then mechanism $\mech$
for non-linear agents $\mathcal A$
is $\gamma\,\zeta$-approximation  
to the ex ante relaxation of $\mathcal A$ under $\feasibles$, i.e., 
$\mech(\{\cumpayoff_i\}_{i\in N}) \geq
\sfrac{1}{\gamma\,\zeta} \cdot \exanterelax(\{\payoffcurve_i\}_{i\in N},\feasibles)$.
\end{theorem}

\noindent\textsl{Remark.}
As an application of the reduction framework
in \Cref{thm:meta thm},
consider (non-linear) agents with
private budget
utility.
Optimal mechanism for agents with
private budget
utility
have been studied in
the literature
(e.g.\ \citealp{CG-00, DW-17} for single-agent,
\citealp{PV-14} for i.i.d.\ agents
and \citealp{AFHHM-12} for non-i.i.d.\ agents). 
The characterization of these
optimal mechanisms are complicated
even for simple distributions
(e.g.\ uniform value and uniform budget).
However, 
with the reduction framework,
due to the closeness between 
price-posting payoff curve and 
optimal payoff curve,
we can extend simple 
mechanisms (e.g., sequential posted pricing, 
marginal payoff maximization)
from linear agents to private-budgeted agents
with good approximation guarantees.
See \Cref{fig:apx compare} and 
\Cref{app:numerical uniform}
for an toy example where we numerically evaluate
the closeness of revenue for private-budgeted agents with 
uniform values and uniform budgets,
and the performance of simple mechanisms for them.
In \Cref{sec:welfare close,sec:revenue close},
we theoretically analyze the closeness 
of welfare and revenue for private-budgeted agents 
(under certain regularity assumptions),
which again by the reduction framework,
gives constant approximation for simple mechanisms.

In \Cref{sec:implementation}, 
we present the implementation of the reduction framework.
In \Cref{sec:meta theorem proof},
we show how it achieves the claimed properties 
in \Cref{thm:meta thm}.
Finally, in \Cref{sec:mpm,sec:pp},
we discuss the consequence of the reduction framework
on two specific families of mechanisms
(the marginal payoff maximization and posted pricing mechanism) 
which have been studied extensively for linear agents.

\ifEC
\begin{figure}[t]
\centering
\subfloat[]{
% \begin{tikzpicture}[scale = 0.4]
\begin{tikzpicture}[scale = 0.55]

\draw (-0.2,0) -- (11, 0);
\draw (0, -0.2) -- (0, 4.5);

\draw (8, 2.95) node  {$\revcurve$};
\draw (8, 1.65) node {$\cumprice$};

\begin{scope}[very thick]

\draw plot [smooth, tension=0.8] coordinates {
(0, 0)
(0.2, 0.3864060292000004)
(0.4, 0.7439220486399992)
(0.6, 1.0754494355599946)
(0.8, 1.3825563172320008)
(1.0, 1.6670072767200268)
(1.2, 1.9302925999999312)
(1.4, 2.173541036679947)
(1.6, 2.3976743434199452)
(1.8, 2.603597588520095)
(2.0, 2.7921320927700144)
(2.2, 2.964006151760012)
(2.4, 3.1197059949038826)
(2.6, 3.25988924563993)
(2.8, 3.3849912650701652)
(3.0, 3.4955743888800797)
(3.2, 3.591975380739902)
(3.4, 3.6746283790697345)
(3.6, 3.7438839708402445)
(3.8, 3.8000467987599746)
(4.0, 3.843451863744189)
(4.2, 3.87440908757997)
(4.4, 3.8931895898316697)
(4.6, 3.9000658112802578)
(4.8, 3.8953010852596366)
(5.0, 3.8791070548803277)
(5.2, 3.851535729359803)
(5.4, 3.8130138662580104)
(5.6, 3.7638019005197227)
(5.8, 3.7037428702003776)
(6.0, 3.6332942139937083)
(6.2, 3.5527953264319967)
(6.4, 3.461718301815839)
(6.6, 3.36102707806391)
(6.8, 3.2503408972055494)
(7.0, 3.129960590220043)
(7.2, 3.0001479826797604)
(7.4, 2.861023655140151)
(7.6, 2.7125028376000277)
(7.8, 2.5549949461115955)
(8.0, 2.3885136763963017)
(8.2, 2.213746849603848)
(8.4, 2.0296071611100426)
(8.6, 1.8372595653997263)
(8.8, 1.634659285624573)
(9.0, 1.420481613219255)
(9.2, 1.193035300609825)
(9.4, 0.951801293438084)
(9.6, 0.6906039174178892)
(9.8, 0.4055013981680487)
(10, 0)
};
% \draw [name path global = A](0, 0) .. controls 
% (2, 5.5) .. (6, 4.714285714285714);

\draw [draw=gray, dashed] plot [smooth, tension=0.8] coordinates {
(0, 0)
(0.2, 0.3845934077146198)
(0.4, 0.7404349882769578)
(0.6, 1.0700684963642135)
(0.8, 1.3755001220107443)
(1.0, 1.6583592135001264)
(1.2, 1.92)
(1.4000000000000001, 2.161569231001455)
(1.6, 2.38405298071336)
(1.7999999999999998, 2.588310116673605)
(2.0, 2.7750969006805803)
(2.2, 2.9450855003160474)
(2.4, 3.0988782022475854)
(2.6, 3.2370185185560962)
(2.8000000000000003, 3.3600000000000003)
(3.0, 3.4682733256242675)
(3.2, 3.5622520754837734)
(3.4000000000000004, 3.6423174825247906)
(3.5999999999999996, 3.7088223827068045)
(3.8, 3.7620945279481237)
(4.0, 3.802439387232321)
(4.2, 3.830142532515139)
(4.4, 3.8454716847935093)
(4.6000000000000005, 3.8486784797193914)
(4.8, 3.839999999999999)
(5.0, 3.819660112501051)
(5.2, 3.7878706407354317)
(5.4, 3.7448323977555398)
(5.6000000000000005, 3.690736099994021)
(5.8, 3.6257631790360163)
(6.0, 3.550086505449244)
(6.2, 3.4638710364902083)
(6.4, 3.367274397628182)
(6.6000000000000005, 3.2604474062929993)
(6.800000000000001, 3.143534544988651)
(7.0, 3.016674389868819)
(7.199999999999999, 2.879999999999999)
(7.4, 2.7336392718104614)
(7.6, 2.577715262611891)
(7.800000000000001, 2.412346486565011)
(8.0, 2.237647186023292)
(8.2, 2.0537275808172666)
(8.4, 1.860694097724327)
(8.6, 1.6586495820966976)
(8.8, 1.4476934933862289)
(9.0, 1.2279220861025655)
(9.200000000000001, 0.999428577566232)
(9.399999999999999, 0.7623033036663243)
(9.6, 0.516633863700028)
(9.8, 0.2625052552557282)
(10, 0)
};

\end{scope}

\draw (0, -0.6) node {$0$};
\draw (10, -0.6) node {$1$};

\draw [dotted, color=gray] (4.4, 3.8931895898316697) -- (0, 3.8931895898316697);
\draw (-0.1, 3.8931895898316697) -- (0.1, 3.8931895898316697);
\draw (-1, 3.8931895898316697) node {$0.195$};

\end{tikzpicture}
\label{fig:empirical_single}
}
\subfloat[]{
\begin{tikzpicture}[scale = 1.1]
% \begin{tikzpicture}[scale = 0.8]

\draw (0.3,0) -- (6.4, 0);
\draw (0.4, -0.1) -- (0.4, 2.85);

% \draw (8, 2.9) node {$\revcurve$};
% \draw (8, 1.8) node {$\cumprice$};

\begin{scope}[thick]
%opp
\draw (0.4, 0.0535210746618)node[circle,fill,color=gray,inner sep=1.5pt]{};
\draw (0.8, 1.53658369198)node[circle,fill,color=gray,inner sep=1.5pt]{};
\draw (1.2, 2.37553701313)node[circle,fill,color=gray,inner sep=1.5pt]{};
\draw (1.6, 2.44664043457)node[circle,fill,color=gray,inner sep=1.5pt]{};
\draw (2.0, 2.34565170358)node[circle,fill,color=gray,inner sep=1.5pt]{};
\draw (2.4, 2.20823749212)node[circle,fill,color=gray,inner sep=1.5pt]{};
\draw (2.8, 2.07081895521)node[circle,fill,color=gray,inner sep=1.5pt]{};
\draw (3.2, 1.9435200419)node[circle,fill,color=gray,inner sep=1.5pt]{};
\draw (3.6, 1.82853863379)node[circle,fill,color=gray,inner sep=1.5pt]{};
\draw (4.0, 1.72544558931)node[circle,fill,color=gray,inner sep=1.5pt]{};
\draw (4.4, 1.63317295362)node[circle,fill,color=gray,inner sep=1.5pt]{};
\draw (4.8, 1.55033984237)node[circle,fill,color=gray,inner sep=1.5pt]{};
\draw (5.2, 1.47570703511)node[circle,fill,color=gray,inner sep=1.5pt]{};
\draw (5.6, 1.40809459974)node[circle,fill,color=gray,inner sep=1.5pt]{};
\draw (6.0, 1.34696821373)node[circle,fill,color=gray,inner sep=1.5pt]{};
\draw (0.4, 0.0535210746618)node[rectangle,fill,color=black,inner sep=1.7pt]{};
\draw (0.8, 1.05198734266)node[rectangle,fill,color=black,inner sep=1.7pt]{};
\draw (1.2, 1.57840324973)node[rectangle,fill,color=black,inner sep=1.7pt]{};
\draw (1.6, 1.51348317923)node[rectangle,fill,color=black,inner sep=1.7pt]{};
\draw (2.0, 1.36373128617)node[rectangle,fill,color=black,inner sep=1.7pt]{};
\draw (2.4, 1.22043506192)node[rectangle,fill,color=black,inner sep=1.7pt]{};
\draw (2.8, 1.09852727399)node[rectangle,fill,color=black,inner sep=1.7pt]{};
\draw (3.2, 0.997222469497)node[rectangle,fill,color=black,inner sep=1.7pt]{};
\draw (3.6, 0.912958165823)node[rectangle,fill,color=black,inner sep=1.7pt]{};
\draw (4.0, 0.842207913585)node[rectangle,fill,color=black,inner sep=1.7pt]{};
\draw (4.4, 0.782217222591)node[rectangle,fill,color=black,inner sep=1.7pt]{};
\draw (4.8, 0.730726189517)node[rectangle,fill,color=black,inner sep=1.7pt]{};
\draw (5.2, 0.686051949186)node[rectangle,fill,color=black,inner sep=1.7pt]{};
\draw (5.6, 0.6468305086)node[rectangle,fill,color=black,inner sep=1.7pt]{};
\draw (6.0, 0.612434876128)node[rectangle,fill,color=black,inner sep=1.7pt]{};

%opp rev curve
\draw [color=gray] plot [smooth, tension=0.5] coordinates {
(0.4, 0.0535210746618)
(0.8, 1.53658369198)
(1.2, 2.37553701313)
(1.6, 2.44664043457)
(2.0, 2.34565170358)
(2.4, 2.20823749212)
(2.8, 2.07081895521)
(3.2, 1.9435200419)
(3.6, 1.82853863379)
(4.0, 1.72544558931)
(4.4, 1.63317295362)
(4.8, 1.55033984237)
(5.2, 1.47570703511)
(5.6, 1.40809459974)
(6.0, 1.34696821373)
};

%MRM rev curve

\draw plot [smooth, tension=0.5] coordinates {
(0.4, 0.0535210746618)
(0.8, 1.05198734266)
(1.2, 1.57840324973)
(1.6, 1.51348317923)
(2.0, 1.36373128617)
(2.4, 1.22043506192)
(2.8, 1.09852727399)
(3.2, 0.997222469497)
(3.6, 0.912958165823)
(4.0, 0.842207913585)
(4.4, 0.782217222591)
(4.8, 0.730726189517)
(5.2, 0.686051949186)
(5.6, 0.6468305086)
(6.0, 0.612434876128)
};

\end{scope}

\draw [dotted, color=gray] (0.4, 2.44664043457) -- (1.6, 2.44664043457);

\draw (0.4, 2.44664043457) -- (0.5, 2.44664043457);
\draw (0.03, 2.44664043457) node {$1.41$};

\draw [dotted] (0.4, 1.57840324973) -- (1.2, 1.57840324973);

\draw (0.4, 1.57840324973) -- (0.5, 1.57840324973);
\draw (0.03, 1.57840324973) node {$1.27$};

\draw (0.4, -0.3) node {$1$};
\draw (6, 0) -- (6, 0.1);
\draw (6, -0.3) node {$15$};
% \draw (0.4, 3) -- (0.5, 3);
% \draw (0.1, 3) node {$1.5$};
\draw (0.1, 0) node {$1$};

\end{tikzpicture}
\label{fig:apx mrm and opp}
}
\caption{\Cref{fig:empirical_single} illustrates the comparison 
between the price-posting revenue curve (dashed line)
and the ex ante revenue curve (solid line)
for selling a single item to 
a private-budgeted agent with value and budget
both drawn uniformly from $[0, 1]$.
The $x$-axis is the ex ante probability 
and the $y$-axis is the expected revenue. 
The price-posting revenue curve for this uniform budgeted agent 
is $1.02$-close to her ex ante revenue curve. 
\\
\Cref{fig:apx mrm and opp} illustrates the comparison 
between approximation ratio of optimal oblivious posted pricing (grey line) 
and marginal payoff mechanism (black line)
to the ex ante relaxation
for selling a single item to 
i.i.d.\ private-budgeted agents with value and budget
both drawn uniformly from $[0, 1]$.
The $x$-axis is the number of agents and the $y$-axis is the approximation ratio. 
When there are 15 agents, 
the approximation ratio for oblivious posted pricing is 1.23 
and the approximation 
ratio for marginal payoff mechanism is 1.11. 
Note that the revenue for optimal oblivious posted pricing is calculated by backward induction 
instead of applying the prices from correlation gap. 
See \Cref{app:numerical uniform} for more details. 
}
\label{fig:apx compare}
\end{figure}
\fi

\subsection{Implementation in \Cref{thm:meta thm}}
\label{sec:implementation}

\Cref{alg:meta theorem} describes the
implementation of \Cref{thm:meta thm}.~\footnote{
The construction is a simplification of a construction
in \citet{AFHH-13}.
}
This implementation includes 
two notations 
$\exquanti^{\mech_L}
\left(\{\quant_j\}_{j\in N\backslash\{i\}}\right)$
and 
$\alloc^{\exquant}(\type)$ which we define below.

For any 
deterministic DSIC, IIR mechanism $\mech_L$ for linear agents,
it can be represented by a mapping from the quantiles of other agents to a threshold
quantile for each
agent.  The agent wins when her quantile is below the threshold and
loses when her quantile is above the threshold.  
We denote the function
that maps the profile of other agent quantiles 
$\{\quant_j\}_{j\in N\backslash\{i\}}$ 
to a quantile threshold
for agent $i$ as $\exquanti^{\mech_L}
\left(\{\quant_j\}_{j\in N\backslash\{i\}}\right)$.
% where $\quantsmi$ is 
% $\{\quant_j\}_{j\in N\backslash\{i\}}$.

For any non-linear agent model $(\typespace, \distribution,
\util)$, the single-agent pricing problem identifies the per-unit
(market clearing) price $\marketclearing[\exquant]$ to offer the agent
for any ex ante allocation constraint $\exquant$.  Denote the
allocation probability selected by an agent with type $\type$ 
when offered per-unit price $\marketclearing[\exquant]$
as
$\alloc^{\exquant}(\type)$.

\begin{algorithm}
 	\caption{Reduction Framework for Pricing-based Mechanism}
 	\label{alg:meta theorem}
 	\KwIn{Non-linear agents $\{(\typespace_i,
\distribution_i, \util_i)\}_{i\in N}$; 
and
deterministic, DSIC, IIR
mechanism $\mech_L$ for linear agents}
 	\vspace{2mm}
 	For each agent $i$ with private type $\type_i$,
  map the type to a random quantile $\quanti$ according to the
  distribution $H_i$ with cdf $H_i(\quant) = \alloci^{\quant}(\typei)$.
  
   	\tcc{$H_i(q)$ is well-defined. See
   	\Cref{lem: ex ante optimal pricing}} 
 	\vspace{2mm}
  	
For each agent $i$, calculate quantile threshold as $\exquanti =
  \exquanti^{\mech_L}\left(\{\quant_j\}_{j\in N\backslash\{i\}}\right)$.
  
  \tcc{$\exquanti^{\mech_L}\left(
%   \{\quant_j\}_{j\in N\backslash\{i\}}
\cdot
  \right)$ is well-defined since $\mech_L$ is deterministic and DSIC.} 
 	\vspace{2mm}
 	
  For each agent $i$, set payment 
  $\pricei = \marketclearing[\exquanti]\,\alloci^{\exquanti}(\type_i)$,
  and allocation $\alloci = 1$ if $\quanti < \exquanti$ and $\alloci=0$
  otherwise.
\end{algorithm}

Note that distribution 
$\{H_i\}$  
in Step 1 of \Cref{alg:meta theorem}
is well-defined since function 
$H_i(\quant) = \alloci^{\quant}(\typei)$
is weakly increasing in $\quant$
for all type $\typei$
under ordinary good assumption
(\Cref{asp:ordinary good}),
and thus can be viewed as a distribution.
See \Cref{lem: ex ante optimal pricing}.

\begin{lemma}
\label{lem: ex ante optimal pricing}
For an ordinary good (\Cref{asp:ordinary good}),
the
allocation probability  $\alloc^{\quant}(\type)$
is weakly increasing in $\quant$
for all type $\type$.
\end{lemma}
\begin{proof}
% Due to \Cref{asp:ordinary good},
For an ordinary good
% an agent with standard utility model,
by definition,
the
agent's expected allocation probability
is weakly decreasing in the price.
Thus, the per-unit price in each $\quant$
ex ante mechanism (with respect to 
the price-posting payoff curve $\cumpayoff$)
is weakly decreasing in $\quant$.
Now consider 
the $\quant$ ex ante mechanism 
with respect to the ironed
price-posting payoff curve $\ironed\cumpayoff$
for all quantile~$\quant$.
The per-unit price is monotone (by the previous argument) 
on quantiles that are not in ironed intervals.  
Within an ironed interval, 
the mechanism is a mix over 
two end-points of  
non-ironed intervals 
which linearly interpolates 
between the end-points 
and is thus monotone.
% If $\ironed\cumpayoff(\quant) =
% \cumpayoff(\quant)$, 
% the $\quant$ ex ante mechanism with respect to $\ironed\cumpayoff$
% is the same as $\quant$ ex ante mechanism
% with respect to $\cumpayoff$.
% Otherwise, it falls in some ironed interval
% $[\quant\primed, \quant\doubleprimed]$
% where $\quant$ ex ante mechanism with respect to $\ironed\cumpayoff$
% randomly runs 
% $\quant\primed$ ex ante mechanism with respect to $\cumpayoff$ with probability 
% $\sfrac{(\quant\doubleprimed - \quant)}{
% (\quant\doubleprimed - \quant\primed)}$
% and $\quant\doubleprimed$
% ex ante mechanism with respect to $\cumpayoff$ otherwise. 
% Thus, for any type, her allocation 
% in $\quant$ ex ante mechanism
% with respect to $\ironed\cumpayoff$
% is weakly increasing in $\quant$.
\end{proof}

\subsection{Proof of \Cref{thm:meta thm}}
\label{sec:meta theorem proof}

We first show the implementation 
(\Cref{alg:meta theorem}) is 
DSIC, IIR and satisfies 
both identical payoff and 
identical feasibility properties.

\begin{lemma}
Given a deterministic, DSIC, IIR mechanism
$\mech_L$ for linear agents,
the mechanism $\mech$ from the implementation
(\Cref{alg:meta theorem}) is DSIC, IIR, 
and satisfies 
identical payoff and 
identical feasibility properties
in \Cref{thm:meta thm}.
\end{lemma}
\begin{proof}
Since mechanism $\mech_L$ is 
deterministic and DSIC,
\Cref{alg:meta theorem} is well-defined.
Since for each agent $i$, her type $\type_i$ is drawn from 
$\distribution_i$ and $\quant_i$ is drawn from $H_i$ condition on $\type_i$,
the (unconditional) distribution of $\quant_i$ is uniform on $[0, 1]$.
Thus,
from each agent $i$'s perspective, 
the other agents' quantiles 
are distributed independently and uniformly on $[0, 1]$.
This agent faces a distribution over 
ex ante posted pricing that is identical to the distribution of quantile thresholds in the mechanism $\mech_L$.
Thus, DSIC and the identical payoff property
is satisfied.
Since $\mech_L$ is IIR, $\mech$ is also IIR.
Finally, 
note that the distribution of $\quant_i$ is uniform on $[0, 1]$, 
identical feasibility property is satisfied by construction.
\end{proof}

We now show that the implementation  
extends the approximation guarantee 
of
mechanism $\mech_L$ for linear agents.
Note that this is immediately
implied by the identical payoff property
and the following lemma.

\begin{lemma}
\label{thm:ear} 
For downward-closed feasibility constraint $\feasibles$ 
and agents with ironed price-posting payoff curves 
$\{\payoffHull_i\}_{i\in N}$ 
and the optimal payoff curves $\{\payoffcurve_i\}_{i\in N}$, 
if each agent is $\zeta$-close, 
the ex ante relaxation on the ironed price-posting payoff curve 
is a $\zeta$-approximation 
to the ex ante relaxation on the 
optimal payoff curves, i.e., 
$\exanterelax(\{\payoffHull_i\}_{i\in N},\feasibles) \geq
\sfrac{1}{\zeta} \cdot \exanterelax(\{\payoffcurve_i\}_{i\in N},\feasibles)$.
\end{lemma}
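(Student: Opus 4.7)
The plan is to transfer the optimal ex ante solution for the optimal payoff curves into a (smaller) feasible ex ante solution for the ironed price-posting payoff curves, losing only a factor of $\zeta$ per agent.

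Concretely, let $\{\quanti^*\}_{i\in N} \in \exfeasible(\feasibles)$ be the maximizer of $\exanterelax(\{\payoffcurve_i\}_{i\in N},\feasibles) = \sum_{i\in N} \payoffcurve_i(\quanti^*)$. For each agent $i$, apply \Cref{def:zeta closeness} to the quantile $\quanti^*$: this yields a $\correspondingquant_i \leq \quanti^*$ with $\payoffHull_i(\correspondingquant_i) \geq \sfrac{1}{\zeta}\cdot \payoffcurve_i(\quanti^*)$. I would then show that the candidate profile $\{\correspondingquant_i\}_{i\in N}$ is still ex ante feasible with respect to $\feasibles$, and conclude
\[
\exanterelax(\{\payoffHull_i\}_{i\in N},\feasibles) \;\geq\; \sum_{i\in N}\payoffHull_i(\correspondingquant_i) \;\geq\; \tfrac{1}{\zeta}\sum_{i\in N}\payoffcurve_i(\quanti^*) \;=\; \tfrac{1}{\zeta}\cdot \exanterelax(\{\payoffcurve_i\}_{i\in N},\feasibles).
\]

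The main obstacle is the feasibility step: showing that if $\{\quanti^*\}_{i\in N} \in \exfeasible(\feasibles)$ and $\correspondingquant_i \leq \quanti^*$ for every $i$, then $\{\correspondingquant_i\}_{i\in N} \in \exfeasible(\feasibles)$. I would prove this by taking a randomized ex post feasible allocation rule that implements $\{\quanti^*\}_{i\in N}$ as its vector of marginal winning probabilities, and then independently ``dropping'' agent $i$'s allocation with probability $1 - \correspondingquant_i/\quanti^*$ whenever the underlying rule would have served agent $i$. The resulting rule yields agent $i$ a marginal probability of exactly $\correspondingquant_i$, and every realized allocation is a subset of an original feasible set, hence still feasible by the downward-closed property of $\feasibles$.

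With ex ante feasibility of $\{\correspondingquant_i\}_{i\in N}$ established, the chain of inequalities above gives the lemma. \Cref{cor:meta thm} then follows immediately: applying the linear-agent $\gamma$-approximation guarantee of $\mech_L$ against $\exanterelax(\{\payoffHull_i\}_{i\in N},\feasibles)$ and combining with \Cref{thm:ear} (and the identical-payoff property from \Cref{def:pricing-based mechanism} that gives $\widehat\mech(\{\cumpayoff_i\}_{i\in N}) = \mech_L(\{\cumpayoff_i\}_{i\in N})$) yields the claimed $\gamma\,\zeta$ approximation for non-linear agents.
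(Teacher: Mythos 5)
Your proposal is correct and follows essentially the same argument as the paper's proof: take the optimal quantile profile for the optimal payoff curves, replace each $\quanti$ by the $\zeta$-closeness witness $\correspondingquant_i \leq \quanti$, and invoke downward-closedness to keep the profile in $\exfeasible(\feasibles)$. Your explicit ``independent dropping'' construction just fills in the feasibility step that the paper asserts in one line, so there is no substantive difference.
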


\begin{proof} 
Let $\{\quanti\primed\}_{i\in N} \in \exfeasible(\feasibles)$ be the profile of optimal ex ante quantiles 
for optimal payoff curves $\{\payoffcurve_i\}_{i\in N}$. 
Since the ironed price-posting payoff curves 
$\{\payoffHull_i\}_{i\in N}$ are $\zeta$-close
to the optimal payoff curves $\{\payoffcurve_i\}_{i\in N}$, 
there exists a sequence of quantiles $\{\quanti\}_{i\in N}$ 
such that for any agent $i$, 
$\quanti \leq \quanti\primed$ 
and $\payoffHull(\quanti)
\geq \sfrac{1}{\zeta} \cdot \payoffcurve(\quanti\primed)$. 
Since $\feasibles$ is downward-closed, 
$\{\quanti\}_{i\in N}$ is also feasible for $\exfeasible(\feasibles)$. 
Therefore, 
\begin{equation*}
\exanterelax(\{\payoffcurve_i\}_{i\in N},\feasibles)
= \sum_{i\in N} \payoffcurve_i(\quanti\primed)
\leq \zeta \cdot \sum_{i\in N}
\payoffHull_i(\quanti)
\leq \zeta \cdot 
\exanterelax(\{\payoffHull_i\}_{i\in N},\feasibles).
\qedhere
\end{equation*}
\end{proof}
\subsection{Application on Marginal Payoff Mechanism}
\label{sec:mpm}

The ex ante relaxation gives an upper bound on the optimal mechanism.
The amount by which it is an upper bound depends on the feasibility
constraint and the single-agent payoff curves.  In the special case
of linear agents, the gap between the ex ante relaxation and the
optimal mechanism is precisely determined by the payoff curves and the
feasibility constraint.

\begin{definition}
  The {\em ex ante gap} for feasibility constraint $\feasibles$ and
  optimal payoff curves $\{\payoffcurve_i\}_{i\in N}$ is the ratio
  between the ex ante relaxation
  $\exanterelax(\{\payoffcurve_i\}_{i\in N},\feasibles)$ and the payoff of the
  optimal mechanism for linear agents
  $\OPT(\{\payoffcurve_i\}_{i\in N},\feasibles)$.
\end{definition}

By applying the marginal revenue mechanism of \citet{BR-89} in
\Cref{alg:meta theorem}, we obtain the 
marginal payoff mechanism for non-linear agents.  
The implementation is
simpler than \citet{AFHH-13} where the marginal revenue mechanism is
implemented based on the ex ante optimal mechanisms in optimal payoff curves 
(instead of price-posting payoff curves).

\begin{definition}\label{def:mrm model free}
The \emph{marginal payoff mechanism}, 
denoted by $\MRM_{\feasibles}$
(defined in \Cref{alg:meta theorem})
corresponds to the linear agent marginal revenue mechanism subject to feasibility constraint $\feasibles$. 
Denote 
the payoff of $\MRM_{\feasibles}$ for agents
with price-posting payoff curves $\{\cumpayoff_i\}_{i\in N}$ as
$\MRM_{\feasibles}(\{\cumpayoff_i\}_{i\in N})$.
\end{definition}

\begin{proposition}
For downward-closed feasibility constraint $\feasibles$, 
given agents with the ironed price-posting payoff curves
$\{\payoffHull_i\}_{i\in N}$ 
and the optimal payoff curves $\{\payoffcurve_i\}_{i\in N}$, 
if each agent is $\zeta$-close, 
the worst case ratio between the 
the marginal payoff mechanism with respect to price-posting payoff curves
and the ex ante relaxation on the optimal payoff curves
is $\zeta\gamma$, i.e., 
$\MRM_{\feasibles}(\{\cumpayoff_i\}_{i\in N}) 
\geq \sfrac{1}{\zeta \gamma} \cdot 
\exanterelax(\{\payoffcurve_i\}_{i\in N},\feasibles)$, 
where $\gamma$ is the ex ante gap 
for $\feasibles$ with 
curves $\{\payoffHull_i\}_{i\in N}$.
\end{proposition}

\ifEC
\else
\begin{figure}[t]
\centering
\subfloat[]{
% \begin{tikzpicture}[scale = 0.4]
\begin{tikzpicture}[scale = 0.55]

\draw (-0.2,0) -- (11, 0);
\draw (0, -0.2) -- (0, 4.5);

\draw (8, 2.95) node  {$\revcurve$};
\draw (8, 1.65) node {$\cumprice$};

\begin{scope}[very thick]

\draw plot [smooth, tension=0.8] coordinates {
(0, 0)
(0.2, 0.3864060292000004)
(0.4, 0.7439220486399992)
(0.6, 1.0754494355599946)
(0.8, 1.3825563172320008)
(1.0, 1.6670072767200268)
(1.2, 1.9302925999999312)
(1.4, 2.173541036679947)
(1.6, 2.3976743434199452)
(1.8, 2.603597588520095)
(2.0, 2.7921320927700144)
(2.2, 2.964006151760012)
(2.4, 3.1197059949038826)
(2.6, 3.25988924563993)
(2.8, 3.3849912650701652)
(3.0, 3.4955743888800797)
(3.2, 3.591975380739902)
(3.4, 3.6746283790697345)
(3.6, 3.7438839708402445)
(3.8, 3.8000467987599746)
(4.0, 3.843451863744189)
(4.2, 3.87440908757997)
(4.4, 3.8931895898316697)
(4.6, 3.9000658112802578)
(4.8, 3.8953010852596366)
(5.0, 3.8791070548803277)
(5.2, 3.851535729359803)
(5.4, 3.8130138662580104)
(5.6, 3.7638019005197227)
(5.8, 3.7037428702003776)
(6.0, 3.6332942139937083)
(6.2, 3.5527953264319967)
(6.4, 3.461718301815839)
(6.6, 3.36102707806391)
(6.8, 3.2503408972055494)
(7.0, 3.129960590220043)
(7.2, 3.0001479826797604)
(7.4, 2.861023655140151)
(7.6, 2.7125028376000277)
(7.8, 2.5549949461115955)
(8.0, 2.3885136763963017)
(8.2, 2.213746849603848)
(8.4, 2.0296071611100426)
(8.6, 1.8372595653997263)
(8.8, 1.634659285624573)
(9.0, 1.420481613219255)
(9.2, 1.193035300609825)
(9.4, 0.951801293438084)
(9.6, 0.6906039174178892)
(9.8, 0.4055013981680487)
(10, 0)
};
% \draw [name path global = A](0, 0) .. controls 
% (2, 5.5) .. (6, 4.714285714285714);

\draw [draw=gray, dashed] plot [smooth, tension=0.8] coordinates {
(0, 0)
(0.2, 0.3845934077146198)
(0.4, 0.7404349882769578)
(0.6, 1.0700684963642135)
(0.8, 1.3755001220107443)
(1.0, 1.6583592135001264)
(1.2, 1.92)
(1.4000000000000001, 2.161569231001455)
(1.6, 2.38405298071336)
(1.7999999999999998, 2.588310116673605)
(2.0, 2.7750969006805803)
(2.2, 2.9450855003160474)
(2.4, 3.0988782022475854)
(2.6, 3.2370185185560962)
(2.8000000000000003, 3.3600000000000003)
(3.0, 3.4682733256242675)
(3.2, 3.5622520754837734)
(3.4000000000000004, 3.6423174825247906)
(3.5999999999999996, 3.7088223827068045)
(3.8, 3.7620945279481237)
(4.0, 3.802439387232321)
(4.2, 3.830142532515139)
(4.4, 3.8454716847935093)
(4.6000000000000005, 3.8486784797193914)
(4.8, 3.839999999999999)
(5.0, 3.819660112501051)
(5.2, 3.7878706407354317)
(5.4, 3.7448323977555398)
(5.6000000000000005, 3.690736099994021)
(5.8, 3.6257631790360163)
(6.0, 3.550086505449244)
(6.2, 3.4638710364902083)
(6.4, 3.367274397628182)
(6.6000000000000005, 3.2604474062929993)
(6.800000000000001, 3.143534544988651)
(7.0, 3.016674389868819)
(7.199999999999999, 2.879999999999999)
(7.4, 2.7336392718104614)
(7.6, 2.577715262611891)
(7.800000000000001, 2.412346486565011)
(8.0, 2.237647186023292)
(8.2, 2.0537275808172666)
(8.4, 1.860694097724327)
(8.6, 1.6586495820966976)
(8.8, 1.4476934933862289)
(9.0, 1.2279220861025655)
(9.200000000000001, 0.999428577566232)
(9.399999999999999, 0.7623033036663243)
(9.6, 0.516633863700028)
(9.8, 0.2625052552557282)
(10, 0)
};

\end{scope}

\draw (0, -0.6) node {$0$};
\draw (10, -0.6) node {$1$};

\draw [dotted, color=gray] (4.4, 3.8931895898316697) -- (0, 3.8931895898316697);
\draw (-0.1, 3.8931895898316697) -- (0.1, 3.8931895898316697);
\draw (-1, 3.8931895898316697) node {$0.195$};

\end{tikzpicture}
\label{fig:empirical_single}
}
\subfloat[]{
\begin{tikzpicture}[scale = 1.1]
% \begin{tikzpicture}[scale = 0.8]

\draw (0.3,0) -- (6.4, 0);
\draw (0.4, -0.1) -- (0.4, 2.85);

% \draw (8, 2.9) node {$\revcurve$};
% \draw (8, 1.8) node {$\cumprice$};

\begin{scope}[thick]
%opp
\draw (0.4, 0.0535210746618)node[circle,fill,color=gray,inner sep=1.5pt]{};
\draw (0.8, 1.53658369198)node[circle,fill,color=gray,inner sep=1.5pt]{};
\draw (1.2, 2.37553701313)node[circle,fill,color=gray,inner sep=1.5pt]{};
\draw (1.6, 2.44664043457)node[circle,fill,color=gray,inner sep=1.5pt]{};
\draw (2.0, 2.34565170358)node[circle,fill,color=gray,inner sep=1.5pt]{};
\draw (2.4, 2.20823749212)node[circle,fill,color=gray,inner sep=1.5pt]{};
\draw (2.8, 2.07081895521)node[circle,fill,color=gray,inner sep=1.5pt]{};
\draw (3.2, 1.9435200419)node[circle,fill,color=gray,inner sep=1.5pt]{};
\draw (3.6, 1.82853863379)node[circle,fill,color=gray,inner sep=1.5pt]{};
\draw (4.0, 1.72544558931)node[circle,fill,color=gray,inner sep=1.5pt]{};
\draw (4.4, 1.63317295362)node[circle,fill,color=gray,inner sep=1.5pt]{};
\draw (4.8, 1.55033984237)node[circle,fill,color=gray,inner sep=1.5pt]{};
\draw (5.2, 1.47570703511)node[circle,fill,color=gray,inner sep=1.5pt]{};
\draw (5.6, 1.40809459974)node[circle,fill,color=gray,inner sep=1.5pt]{};
\draw (6.0, 1.34696821373)node[circle,fill,color=gray,inner sep=1.5pt]{};
\draw (0.4, 0.0535210746618)node[rectangle,fill,color=black,inner sep=1.7pt]{};
\draw (0.8, 1.05198734266)node[rectangle,fill,color=black,inner sep=1.7pt]{};
\draw (1.2, 1.57840324973)node[rectangle,fill,color=black,inner sep=1.7pt]{};
\draw (1.6, 1.51348317923)node[rectangle,fill,color=black,inner sep=1.7pt]{};
\draw (2.0, 1.36373128617)node[rectangle,fill,color=black,inner sep=1.7pt]{};
\draw (2.4, 1.22043506192)node[rectangle,fill,color=black,inner sep=1.7pt]{};
\draw (2.8, 1.09852727399)node[rectangle,fill,color=black,inner sep=1.7pt]{};
\draw (3.2, 0.997222469497)node[rectangle,fill,color=black,inner sep=1.7pt]{};
\draw (3.6, 0.912958165823)node[rectangle,fill,color=black,inner sep=1.7pt]{};
\draw (4.0, 0.842207913585)node[rectangle,fill,color=black,inner sep=1.7pt]{};
\draw (4.4, 0.782217222591)node[rectangle,fill,color=black,inner sep=1.7pt]{};
\draw (4.8, 0.730726189517)node[rectangle,fill,color=black,inner sep=1.7pt]{};
\draw (5.2, 0.686051949186)node[rectangle,fill,color=black,inner sep=1.7pt]{};
\draw (5.6, 0.6468305086)node[rectangle,fill,color=black,inner sep=1.7pt]{};
\draw (6.0, 0.612434876128)node[rectangle,fill,color=black,inner sep=1.7pt]{};

%opp rev curve
\draw [color=gray] plot [smooth, tension=0.5] coordinates {
(0.4, 0.0535210746618)
(0.8, 1.53658369198)
(1.2, 2.37553701313)
(1.6, 2.44664043457)
(2.0, 2.34565170358)
(2.4, 2.20823749212)
(2.8, 2.07081895521)
(3.2, 1.9435200419)
(3.6, 1.82853863379)
(4.0, 1.72544558931)
(4.4, 1.63317295362)
(4.8, 1.55033984237)
(5.2, 1.47570703511)
(5.6, 1.40809459974)
(6.0, 1.34696821373)
};

%MRM rev curve

\draw plot [smooth, tension=0.5] coordinates {
(0.4, 0.0535210746618)
(0.8, 1.05198734266)
(1.2, 1.57840324973)
(1.6, 1.51348317923)
(2.0, 1.36373128617)
(2.4, 1.22043506192)
(2.8, 1.09852727399)
(3.2, 0.997222469497)
(3.6, 0.912958165823)
(4.0, 0.842207913585)
(4.4, 0.782217222591)
(4.8, 0.730726189517)
(5.2, 0.686051949186)
(5.6, 0.6468305086)
(6.0, 0.612434876128)
};

\end{scope}

\draw [dotted, color=gray] (0.4, 2.44664043457) -- (1.6, 2.44664043457);

\draw (0.4, 2.44664043457) -- (0.5, 2.44664043457);
\draw (0.03, 2.44664043457) node {$1.41$};

\draw [dotted] (0.4, 1.57840324973) -- (1.2, 1.57840324973);

\draw (0.4, 1.57840324973) -- (0.5, 1.57840324973);
\draw (0.03, 1.57840324973) node {$1.27$};

\draw (0.4, -0.3) node {$1$};
\draw (6, 0) -- (6, 0.1);
\draw (6, -0.3) node {$15$};
% \draw (0.4, 3) -- (0.5, 3);
% \draw (0.1, 3) node {$1.5$};
\draw (0.1, 0) node {$1$};

\end{tikzpicture}
\label{fig:apx mrm and opp}
}
\caption{\Cref{fig:empirical_single} illustrates the comparison 
between the price-posting revenue curve (dashed line)
and the ex ante revenue curve (solid line)
for selling a single item to 
a private-budgeted agent with value and budget
both drawn uniformly from $[0, 1]$.
The $x$-axis is the ex ante probability 
and the $y$-axis is the expected revenue. 
The price-posting revenue curve for this uniform budgeted agent 
is $1.02$-close to her ex ante revenue curve. 
\\
\Cref{fig:apx mrm and opp} illustrates the comparison 
between approximation ratio of optimal oblivious posted pricing (grey line) 
and marginal payoff mechanism (black line)
to the ex ante relaxation
for selling a single item to 
i.i.d.\ private-budgeted agents with value and budget
both drawn uniformly from $[0, 1]$.
The $x$-axis is the number of agents and the $y$-axis is the approximation ratio. 
When there are 15 agents, 
the approximation ratio for oblivious posted pricing is 1.23 
and the approximation 
ratio for marginal payoff mechanism is 1.11. 
Note that the revenue for optimal oblivious posted pricing is calculated by backward induction 
instead of applying the prices from correlation gap. 
See \Cref{app:numerical uniform} for more details. 
}
\label{fig:apx compare}
\end{figure}
\fi

For matroid environments, 
applying correlation gap 
\citep[cf.][]{ADSY-10, yan-11}, the gap between
optimal payoff and ex ante relaxation for linear agents is at most
$\sfrac{e}{(e-1)}$.  For $k$-unit environment, it is at most
$\sfrac{1}{(1-\sfrac{1}{\sqrt{2\pi k}})}$.  
We note that these
worst-case bounds of marginal payoff maximization 
against the ex ante relaxation are the same as those
given by the sequential posted pricings that follow from the
correlation gap (discussed in the next subsection).  However, the advantage of
using the marginal payoff mechanism over these sequential pricings is
the same magnitude as the advantage of using the optimal mechanism
over sequential posted pricing for linear agents.  There can be
significant improvement in payoff.
We quantify the benefits of using marginal payoff mechanism
for budgeted agents with uniform distributions
in \Cref{fig:apx mrm and opp}. 
Moreover, for downward-closed environments, 
the gap between
optimal payoff and ex ante relaxation for linear agents is at most
$O(\log n)$ \citep{AFHH-13}.

\subsection{Application on Posted Pricing Mechanism}
\label{sec:pp}
In this subsection, 
we focus on 
\emph{posted pricing mechanisms},
for which
there exists a simpler 
implementation \Cref{alg:meta theorem simplified}
of 
\Cref{thm:meta thm}
% is simpler 
than \Cref{alg:meta theorem}.
% We apply closeness property
% between the
% ironed price-posting payoff curve 
% and optimal payoff curve 
% to obtain approximation results
% for agents with non-linear utility.
In particular, we consider
the following three families of mechanisms:
\begin{enumerate}
    \item \emph{sequential posted pricing
    (with non-adaptive prices)}:
    agents in sequence (specified by mechanisms) are offered
    take-it-or-leave-it non-adaptive prices.
    \item \emph{oblivious posted pricing with non-adaptive prices}:
    agents in sequence (unknown to mechanisms in advance) are offered 
    take-it-or-leave-it non-adaptive prices.
    \item \emph{oblivious posted pricing with adaptive prices}\footnote{With the technique \emph{online contention resolution} introduced by \citet{FSZ-16}, it is known that
    for some environments (e.g.\ matroid)
    the same approximation bounds with 
    adaptive prices
    can be guaranteed in oblivious posted pricing
    which posts non-adaptive prices but adaptively rejects agents.}:
    agents in sequence (unknown to mechanisms in advance) are offered 
    take-it-or-leave-it adaptive prices.
\end{enumerate}
% % To see why these posted pricing mechanisms are 
% To implement those mechanisms as pricing based mechanisms (i.e.\ payoff 
% only
% depends on the price-posting payoff curves and 
% the feasibility constraint), 
% for each agent, we simply post 
% % notice that
% % (i) any posted pricing mechanisms can be defined in quantile space;
% % and (ii)
% % given the ex ante probability $\quant$, 
% % the payoff of posting 
% the market clearing price $\marketclearing$ 
% which is uniquely determined by the price-posting payoff curve 
% and quantile $\quant$ given in the mechanism for linear agents.\footnote{As the actual prices to be posted are not important for our analysis,
% it is convenient to consider posted pricing
% mechanisms in quantile space without loss of generality.}

All posted pricing mechanisms for linear agents
can be converted into quantile space 
without loss of generality.
% i.e., instead of posting
% price $\hat\price_i$ to each agent $i$,
% the posted pricing mechanisms specify quantile $\quant_i$
% for each agent $i$ and post the market clearing price $\price_i^{\quant_i} = \hat\price_i$ to agent $i$.

% \begin{definition}\label{def:implement model free posted pricing}
% The model-free mechanism $\widehat\mech$
% for posted pricing mechanism $\mech_L$ and agents $\{(\typespace_i,
% \distribution_i, \util_i)\}_{i\in N}$ is 
% \begin{enumerate}
% \item For each agent $i$, map the price $\pricei$ posted to agent $i$
% in mechanism $\mech_L$ to quantile $\exquanti$, i.e.,
% the probability that the item is sold to a linear agent with the same price-posting payoff curve.  
% % according to the price
% %   cumulative distribution $H_i(\quant) = \alloci^{\quant}(\typei)$.

% \item For each agent $i$, 
% post the market clearing price $\marketclearing[\exquanti]$.
% % calculate quantile threshold as $\exquanti =
% %   \exquanti^{\mech_L}(\quantsmi)$.
% % \item For each agent $i$, set payment as
% %   $\marketclearing[\exquanti]\,\alloci^{\exquanti}(\type)$,
% %   allocation $\alloci = 1$ if $\quanti < \exquanti$ and $\alloci=0$
% %   otherwise.
% \end{enumerate}
% \end{definition}

\begin{algorithm}
 	\caption{(Simplified) Reduction Framework for Posted Pricing Mechanism}
 	\label{alg:meta theorem simplified}
 	\KwIn{Non-linear agents $\{(\typespace_i,
\distribution_i, \util_i)\}_{i\in N}$; 
and
posted pricing
mechanism $\mech_L$ for linear agents}
 	\vspace{2mm}
 	
    For each agent $i$, map the price $\pricei$ posted to agent $i$
in mechanism $\mech_L$ to quantile $\exquanti$, i.e.,
the probability that the item is sold to a linear agent with the same price-posting payoff curve.  
 	\vspace{2mm}
 	
  	For each agent $i$, 
post the market clearing price $\marketclearing[\exquanti]$.

\end{algorithm}

It is obvious that \Cref{alg:meta theorem simplified}
guarantees the claimed properties in
\Cref{thm:meta thm} for 
posted pricing mechanisms and we omitted its 
proof.

For agents with linear utilities, the approximations
for posted pricing mechanisms against the optimal 
Bayesian mechanisms have been studied 
in \citet{CHMS-10,yan-11,FSZ-16}, etc. 
Note that our reduction framework
\Cref{thm:meta thm} requires 
a stronger guarantee 
(i.e.\ an approximation of $\gamma$ against the ex ante relaxation) for linear agents.
For sequential posted pricing, 
we can directly apply the approximation results
for linear agents given by 
\emph{correlation gap}
\citep[cf.][]{ADSY-10, yan-11} in our framework,
since the benchmark for correlation gap 
is indeed the ex ante relaxation.
However, for oblivious posted pricing with/without
adaptive prices, approximation results for linear agents
are given by \emph{prophet inequalities}
(\Cref{def:prophet inequality}),
where the standard benchmark is smaller than the ex ante relaxation.
Therefore, a stronger definition (i.e.\ 
\emph{ex ante prophet inequality}
--
\Cref{def:ex ante prophet inequality})
is required for the reduction to non-linear agents. 

\begin{definition}[Bayesian online selection problem]
A gambler faces a series of $n$ games, one on 
each of $n$ days. 
Game $i$ has prize $\vali$ drawn 
independently from distribution $F_i$.
There is a feasibility constraint $\feasibles$
including all subsets of prizes which the
gambler can pick.
The gambler knows
the feasibility constraint
and the prize distribution
in advance.
On day $i$ the gambler realizes the 
prize $\vali \sim F_i$ of 
game $i$ and must immediately
make an irrevocable decision
on whether to select this prize.
The final set of prizes selected must 
satisfies feasibility constraint $\feasibles$.
The gambler's value is the total value 
of prizes selected.
\end{definition}

\begin{definition}[prophet inequality]
\label{def:prophet inequality}
In Bayesian online selection problem,
\emph{prophet inequality}
is the ratio of the gambler with an online algorithm
to a \emph{prophet} who knows all prize realizations
in advance and 
picks any feasible subset 
respect to 
feasibility constraint
$\feasibles$.
\end{definition}

\begin{definition}[ex ante prophet inequality]
\label{def:ex ante prophet inequality}
In Bayesian online selection problem,
\emph{ex ante prophet inequality}
compares the gambler to the ex ante relaxation (a.k.a.\ the expected value of an ex ante prophet). 
\end{definition}

In fact, for certain feasibility constraints,
the approximation  
guarantees in ex ante prophet inequality
are the same as in the
original prophet inequality.
\citet{LS-18} proved that the same approximation bound
(i.e.\ 2) holds for matroid environments. 
In \Cref{sec:prophet},
we introduce a meta approach
(\Cref{prop:meta approach}) to
extend the results from prophet inequalities
to ex ante prophet inequalities. 
% results
% from comparing with the expected welfare to
% comparing with the ex ante relaxation.
Applying this meta approach
to the analysis in 
\citet{CHMS-10} for $k$-unit environment 
and 
\citet{KW-12}
for matroid environment, 
we extend their results to
ex ante prophet inequalities
% compare to the ex ante relaxation 
with basically the same argument.

In conclusion, 
the approximation guarantee $\gamma$ 
of sequential posted pricing
for linear is given by correlation gap.
For matroid environment, 
$\gamma$ is $\sfrac{e}{(e-1)}$;
and
for $k$-unit environment, 
$\gamma$ is $\kunit$
\citep{ADSY-10,yan-11}.
For oblivious posted pricing,
the approximation guarantee $\gamma$
is given by ex ante prophet inequality.
For $k$-unit environment, $\gamma$
is $2$ with non-adaptive prices (\Cref{sec:prophet});
for matching environment, $\gamma$
is $6.75$ with non-adaptive prices \citep{CHMS-10};
and for matroid environment,
$\gamma$ is $2$ with adaptive prices
\ifEC
(\Cref{sec:prophet})
\else
(\Cref{apx:matroid prophet}).
\fi

\section{Closeness of Welfare Maximization
for a Budgeted Agent}
\label{sec:welfare close}
% We have introduced the welfare curve and the reduction framework for any payoff functions. 
In this section, 
we 
focus on showing that for agents with budgets, 
the ironed price-posting welfare curve is 
close to the optimal welfare curve.
Note that the closeness of welfare curve
is a single-agent problem.
Thus, we drop subscript of all notations.

The ex ante optimal mechanism might be complicated
and hard to characterize.
However, as we show below, 
without any assumption on the valuation distribution or the budget distribution, 
posting the market clearing price 
guarantees a 2-approximation in welfare.
% To show the closeness property, 
% we adapt the price decomposition technique 
% from \citet{FHL-19} 
% and extend it for welfare analysis. 

\begin{theorem}\label{thm:welfare private}
For a single agent with private-budget utility
and any ex ante constraint $\quant$, 
the welfare from market clearing 
is a 2-approximation to ex ante optimal welfare, 
i.e., the price-posting welfare curve is 
2-close to the optimal welfare curve, 
which implies that the ironed price-posting welfare curve $\welfareHull$ is 
2-close to the optimal welfare curve $\welcurve$, i.e., 
$\welfareHull(\quant) \geq \sfrac{1}{2} \cdot \welcurve(\quant)$ for any $\quant$. 
\end{theorem}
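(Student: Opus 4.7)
My plan is to prove the stronger claim that the raw price-posting welfare curve $\cumpayoff$ itself is 2-close, i.e., $\cumpayoff(q) \ge \tfrac12\,\welcurve(q)$ at every $q$; the 2-closeness of the ironed hull $\welfareHull\ge\cumpayoff$ then follows immediately from \Cref{def:iron payoff}. The mechanism I analyze is the (possibly randomized) posted per-unit price at the market clearing level $p = \marketclearing$, which by construction realizes ex ante allocation exactly $q$.

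\textbf{Key steps.} I would proceed in three steps. \emph{First}, I upper bound $\welcurve(q)$. For any IC+IR+budget mechanism with ex-ante allocation $\mathbb{E}[x^*]\le q$, split the welfare contribution by whether the value exceeds $p$:
\[
\welcurve(q) \;=\; \mathbb{E}\bigl[v\,x^*\,\mathbb{1}[v\ge p]\bigr] + \mathbb{E}\bigl[v\,x^*\,\mathbb{1}[v<p]\bigr] \;\le\; \mathbb{E}\bigl[v\,\mathbb{1}[v\ge p]\bigr] + p\,q,
\]
using $x^*\le 1$ in the first term and $v<p$ together with $\mathbb{E}[x^*\,\mathbb{1}[v<p]]\le q$ in the second. \emph{Second}, I lower bound $\cumpayoff(q) = \mathbb{E}[v\min(1,w/p)\mathbb{1}[v\ge p]]$ in two ways: revenue from MC equals $\mathbb{E}[\min(p,w)\mathbb{1}[v\ge p]] = pq$ by the market-clearing identity, and since welfare exceeds revenue we obtain $\cumpayoff(q)\ge pq$; pointwise, when $w\ge p$ the agent captures full welfare $v$, and when $w<p$ captures $vw/p\ge w$, so $\cumpayoff(q)\ge \mathbb{E}[\min(v,w)\mathbb{1}[v\ge p]]$. \emph{Third}, I combine: summing the two lower bounds gives $2\cumpayoff(q)\ge pq+\mathbb{E}[\min(v,w)\mathbb{1}[v\ge p]]$, and I would finish by showing this dominates the upper bound $\mathbb{E}[v\mathbb{1}[v\ge p]]+pq$, which reduces via $\min(v,w)=v-(v-w)^+$ and the market-clearing identity $pq=\mathbb{E}[\min(p,w)\mathbb{1}[v\ge p]]$ to controlling $\mathbb{E}[(v-w)^+\mathbb{1}[v\ge p]]$ against $\mathbb{E}[\min(p,w)\mathbb{1}[v\ge p]]$.

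\textbf{Main obstacle.} The hardest part is exactly this comparison in Step~3. The inequality $\mathbb{E}[(v-w)^+\mathbb{1}[v\ge p]]\le\mathbb{E}[\min(p,w)\mathbb{1}[v\ge p]]$ does not hold pointwise: when $v$ is enormous and $w\ll p$, the left side can dwarf the right. The fix cannot be pointwise; it must use the joint structure enforced by the market-clearing condition and, crucially, tighten the Step-1 upper bound so that it respects the IC+budget constraints on the optimal mechanism (a high-value low-budget agent cannot actually be allocated $x^*=1$ without violating IC when paired with other types, since the payment extractable from her is capped by $w$). Replacing the naive relaxation with a budget-aware upper bound, or equivalently invoking the ex ante characterization of the optimal single-agent mechanism and matching term by term against $\cumpayoff$, is where the proof's technical weight lies; once this refined upper bound is in place, the two lower bounds on $\cumpayoff$ suffice to close the factor-2 gap.
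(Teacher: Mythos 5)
Your plan does not close, and you have correctly located where it breaks: the Step-1 upper bound $\welcurve(\quant)\le \expect{\val\,\mathbf{1}[\val\ge \marketclearing]}+\marketclearing\,\quant$ is too weak for any version of Step 3 to succeed. To see that this is not merely a difficulty but a dead end, take a single deterministic type with huge value $V$ and tiny budget $\epsilon$, and $\quant$ small: the market-clearing price is $\epsilon/\quant$, the price-posting welfare is $V\quant$, yet your upper bound evaluates to $V+\epsilon$, so the chain $\welcurve(\quant)\le \mathrm{UB}\le 2\cumwelfare(\quant)$ cannot hold for $\quant<\sfrac{1}{2}$ (the theorem is still true here because the actual optimum is only $V\quant$, attained by a free lottery). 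The repair you gesture at --- a ``budget-aware upper bound'' obtained from the characterization of the optimal single-agent mechanism --- is exactly the content of the paper's proof, and it is not a refinement of your Step 1 but a different argument: the paper represents the $\quant$ ex ante welfare-optimal mechanism $\EX$ by its convex allocation-payment functions $\tau_\budget$ and splits each one at the allocation level where the marginal per-unit price crosses $\marketclearing$, producing two mechanisms $\EXS$ (all per-unit prices at most $\marketclearing$) and $\EXL$ (all per-unit prices at least $\marketclearing$) whose allocations sum to at least that of $\EX$ and each of which still satisfies the ex ante constraint $\quant$. Then $\Wel{\EXL}\le\cumwelfare(\quant)$ holds pointwise, since higher per-unit prices yield lower allocation for every type with $\val\ge\marketclearing$; and $\Wel{\EXS}\le\cumwelfare(\quant)$ follows from a stochastic-dominance argument: because all of its per-unit prices are at most $\marketclearing$, $\EXS$ cannot distinguish among values above $\marketclearing$, so its allocation is constant in $\val$ on that range, and the ex ante budget of $\quant$ forces that constant to be at most the market-clearing allocation. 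Summing gives the factor of $2$. Since your proposal neither constructs this decomposition nor supplies any substitute for it, it has a genuine gap at its central step.
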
 

The proof of \Cref{thm:welfare private}
adapts the price decomposition technique
from \citet{FHL-19} 
and extends it for welfare analysis.

% For agent with private budget constraints, 
% any mechanism can be viewed as posting an allocation payment function $\APF_\budget$ 
% to the agent with budget $\budget$, 
% where the allocation payment function is a convex function mapping the allocation to the payment. 
% Then the agent receiving the allocation payment function $\APF_\budget$ 
% will choose her utility maximization allocation according to $\APF_\budget$.

Fix an arbitrary ex ante constraint $\quant$,
denote $\EX$ as the $\quant$ 
ex ante welfare-optimal mechanism, 
and $\Wel{\EX}$ as its welfare. 
We want to decompose $\EX$
into two mechanisms 
$\EXS$ and $\EXL$ 
according to the market clearing price $\marketclearing$ 
and bound the welfare from those two mechanisms separately. 
The decomposed mechanism may violate the incentive constraint for budgets, 
and we refer to this setting as the
random-public-budget utility model.
Note that the market clearing price 
is the same in both the private budget model
and the random-public-budget utility model.
% More specifically, 
% $\EXS$ and $\EXL$ will be incentive compatible for values, 
% but not necessarily for budgets; 
% a setting we refer to as the random-public-budget utility model.
Intuitively, mechanism $\EXS$ 
contains per-unit prices at most the market clearing price, 
while mechanism $\EXL$ contains per-unit prices at least the market clearing price. 
Both mechanisms $\EXS$ and $\EXL$ satisfy the ex ante constraint~$\quant$, 
and the sum of their welfare upper bounds 
the original ex ante mechanism $\EX$,
i.e.,
$\Wel{\EX} \leq \Wel{\EXS} + \Wel{\EXL}$. 

To construct $\EXS$ and $\EXL$ that satisfy
the properties above, we first introduce 
a characterization 
of all incentive compatible mechanisms 
for a single agent with private-budget utility, 
and her behavior in the mechanisms.

\begin{definition}
%[\citealp{FHL-19}]
An \emph{allocation-payment function} 
$\APF:[0,1] \rightarrow \R_+$
is a mapping from the allocation $\alloc$
to the payment $\price$.
\end{definition}

\begin{lemma}[\citealp{FHL-19}]
Any incentive compatible mechanism for a private budgeted agent
is equivalent to 
providing a convex
and non-decreasing allocation-payment function
for each budget 
and letting the agent 
choose the utility maximization allocation and payment 
according to the allocation-payment functions. 
\end{lemma}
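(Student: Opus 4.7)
The plan is to establish the characterization in two directions, with the bulk of the work being the within-budget IC analysis. In the forward direction, starting from an IC mechanism $\mech$, I would invoke the revelation principle and work with a direct-revelation mechanism mapping reports $(\val, \budget)$ to an allocation-payment pair $(\alloc(\val, \budget), \price(\val, \budget))$. Fixing the reported budget at $\budget$ (assumed truthful), the value dimension of IC reduces to linear-agent IC over the menu $M_\budget \triangleq \{(\alloc(\val, \budget), \price(\val, \budget)) : \val \in \reals_+\}$: the agent with type $(\val, \budget)$ selects her utility-maximizer within $M_\budget$.

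Next I would take the convex hull $\mathrm{conv}(M_\budget)$ (operationally, allowing lotteries across menu items). Because $\val \alloc - \price$ is linear in $(\alloc, \price)$, maximizing over $M_\budget$ and over $\mathrm{conv}(M_\budget)$ yield the same optimum. Defining $\APF_\budget(\alloc)$ as the minimum payment achievable for allocation $\alloc$ in $\mathrm{conv}(M_\budget)$, I obtain a function that is convex (it parameterizes the lower boundary of a convex set in the $(\alloc, \price)$ plane) and non-decreasing (otherwise some pair $(\alloc, \APF_\budget(\alloc))$ would be strictly dominated by $(\alloc', \price')$ with $\alloc' > \alloc$ and $\price' \leq \APF_\budget(\alloc)$, contradicting its optimality for every $\val \geq 0$). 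The mechanism's outcome for type $(\val, \budget)$ therefore coincides with the agent's utility-maximizing choice against $\APF_\budget$.

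For the reverse direction, given any family $\{\APF_\budget\}$ of convex non-decreasing APFs, the induced mechanism asks the agent to report her type $(\val, \budget)$ and returns the allocation $\alloc^* \in \argmax_{\alloc \in [0,1]} \bigl(\val \alloc - \APF_\budget(\alloc)\bigr)$ with payment $\APF_\budget(\alloc^*)$. Within-budget IC in value is immediate by construction. The existence of the maximizer follows from compactness of $[0,1]$ and continuity of $\APF_\budget$ (inherited from convexity on an interval). One then verifies that this representation is closed under the operation of implementation, so that the class of mechanisms described by APF families coincides with the class of IC direct mechanisms on the value dimension.

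The main obstacle will not be the convexity/monotonicity derivation, which is essentially standard Myerson-style reasoning lifted from the linear-agent regime, but rather formalizing the identification between arbitrary direct-revelation mechanisms and APF-based menus when $M_\budget$ may be uncountable or contain limit points. The key technical observation is that along the realized allocation for type $(\val, \budget)$ the payment never exceeds $\budget$ (otherwise the agent would deviate to avoid $-\infty$ utility, contradicting IC), so the restriction to the feasible region of the APF never bites on the equilibrium path. With this in place, the forward and reverse constructions are mutual inverses, yielding the claimed equivalence.
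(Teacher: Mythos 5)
The paper imports this lemma from \citet{FHL-19} without proof, so your argument has to stand on its own. The skeleton you chose --- revelation principle, per-budget menus, lower convex envelope, taxation principle --- is the right one, and the within-budget half is essentially fine once the affordability observation is moved to the front rather than appended at the end: individual rationality forces every on-path payment for a type with budget $\budget$ to be at most $\budget$, so $M_\budget$ and hence $\mathrm{conv}(M_\budget)$ lie entirely in the region where utility really is linear in $(\alloc,\price)$, and only then is the claim that maximizing over $M_\budget$ and over $\mathrm{conv}(M_\budget)$ yields the same optimum valid. As written you invoke linearity first and patch affordability last, which is backwards for a utility that equals $-\infty$ above the budget; in general a convex combination of an affordable menu item with an unaffordable one can produce a new affordable item that strictly improves the agent, so convexification is not automatically innocuous in this model. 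Relatedly, outcomes here are distributions over $(\alloc,\price)$ pairs with the budget constraint enforced realization by realization, so collapsing a random payment to its expectation also requires that every payment realization on path is affordable, not merely the expected payment.

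The genuine gap is the budget dimension of incentive compatibility. You fix the reported budget, prove value-IC, and then assert that the forward and reverse constructions are mutual inverses; the sentence ``one then verifies that this representation is closed under the operation of implementation'' is exactly the step that is missing. The content of the lemma is that the agent with type $(\val,\budget)$ chooses her utility-maximizing point across \emph{all} the allocation-payment functions (note the plural in the statement), subject to affordability --- that is how cross-budget deviations are absorbed by the taxation principle. Your reverse direction instead hands the agent her optimum against $\APF_\budget$ alone; for an arbitrary family $\{\APF_{\budget'}\}$ of convex non-decreasing functions this mechanism need not be incentive compatible, because an agent with budget $\budget$ may strictly prefer an affordable point on $\APF_{\budget'}$ for some $\budget'\neq\budget$ (in particular she can always afford anything offered to lower budgets). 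You need either to let the agent optimize over the union of the functions restricted to payments at most her true budget, or to state and verify the cross-budget IC constraints that the family must satisfy; without one of these the claimed equivalence does not go through.
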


Now we give the construction of 
$\EXS$ and $\EXL$ by constructing 
their allocation-payment functions.
For agent with budget $\budget$, 
let $\APF_\budget$ be the allocation-payment function
in mechanism $\EX$, and 
$\alloc^*_\budget$ be the utility maximization allocation for a linear agent 
with value equal to the market clearing price $\marketclearing$, i.e.,
$\alloc^*_\budget =\argmax\{\alloc : \APF_\budget'(\alloc )\leq \marketclearing\} $.
For agents with budget~$\budget$, we define the allocation-payment functions
$\APFS_\budget$ and 
$\APFL_\budget$ for 
$\EXS$ and $\EXL$ respectively below,
\begin{align*}
    \APFS_\budget(\alloc) &=
    \left\{
    \begin{array}{ll}
      \APF_\budget(\alloc)   &   
    \text{if }\alloc \leq \alloc^*_\budget,\\
        \infty   &  \text{otherwise};
    \end{array}
    \right.
    \quad
    \APFL_\budget(\alloc) = 
    \left\{
    \begin{array}{ll}
      \APF_\budget(\alloc^*_\budget+\alloc) -
      \APF_\budget(\alloc^*_\budget)&   
    \text{if }\alloc \leq 1 - \alloc^*_\budget,\\
        \infty   &  \text{otherwise}.
    \end{array}
    \right.
\end{align*}
By construction, for each type of the agent, 
the allocation from $\EX$ is upper bounded 
by the sum of the allocation from $\EXS$ and $\EXL$, 
which implies that 
the welfare from $\EX$ is upper bounded 
by the sum of the welfare from $\EXS$ and $\EXL$, 
and the requirements for the decomposition are satisfied. 

% Intuitively, 
% the per-unit price in $\EXS$ is smaller than 
% the market clearing price
% $\marketclearing$;
% and the per-unit price in $\EXL$ is at least 
% the market clearing price
% $\marketclearing$, 
% and both $\EXS$ and $\EXL$ 
% satisfies the ex ante constraint $\quant$.
% For a more detailed discussion 
% on the allocation-payment function 
% and the decomposition, 
% see \citet{FHL-19}. 

As sketched above, we separately bound
the welfare in $\EXS$ and $\EXL$ 
by the welfare from posting 
the market clearing price.

\begin{lemma}
\label{lem:welfare ex1}
	For a single agent with random-public-budget utility, 
	independently distributed value and budget, 
	and any ex ante constraint $\quant$, 
	the welfare from posting the market clearing price $\marketclearing$
	is at least the welfare from $\EXS$, 
	i.e., $\cumwelfare(\quant) \geq \Wel{\EXS}$.
\end{lemma}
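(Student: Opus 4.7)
My plan is to prove $\cumwelfare(\quant) \geq \Wel{\EXS}$ via a direct calculation that leverages two ingredients: the convexity of $\APF_\budget$ together with the cutoff at $\alloc^*_\budget$ (which forces every marginal price of $\APFS_\budget$ to be at most $\marketclearing$), and the independence of $\val$ and $\budget$. Writing $\hat\quant \triangleq \prob{\val \geq \marketclearing}$, the market-clearing condition together with independence yields $\quant = \hat\quant \cdot \expect{\min(1, \budget/\marketclearing)}$, so
\[
\cumwelfare(\quant) = \expect{\val \ones[\val \geq \marketclearing]} \cdot \expect{\min(1, \budget/\marketclearing)} = \quant \cdot \expect{\val \given \val \geq \marketclearing},
\]
which I will use as the target expression.

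The first step I would carry out is to establish a structural fact: on $[0, \alloc^*_\budget]$ every marginal price of $\APF_\budget$ is at most $\marketclearing$, by convexity and the definition of $\alloc^*_\budget$, so the agent's utility in $\EXS$ is non-decreasing in $\alloc$ whenever $\val \geq \marketclearing$. Hence for any such value she selects the maximum feasible allocation --- a quantity I will denote $\alloc^\dagger_\budget \leq \alloc^*_\budget$ that depends only on $\budget$, not on $\val$.

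Then I would decompose $\Wel{\EXS}$ into high-value and low-value contributions. On $\{\val < \marketclearing\}$, the trivial bound $\val < \marketclearing$ gives
\[
\expect{\val \alloc^{\EXS} \ones[\val < \marketclearing]} \leq \marketclearing \cdot B, \qquad B \triangleq \expect{\alloc^{\EXS} \ones[\val < \marketclearing]}.
\]
On $\{\val \geq \marketclearing\}$, the structural fact plus independence of $\val$ and $\budget$ gives $\expect{\val \alloc^{\EXS} \ones[\val \geq \marketclearing]} = \expect{\val \ones[\val \geq \marketclearing]} \cdot \expect{\alloc^\dagger_\budget}$. Finally, I would invoke the ex ante feasibility of $\EXS$, which it inherits from $\EX$ since $\alloc^{\EXS} \leq \alloc^{\EX}$ pointwise: $\hat\quant \expect{\alloc^\dagger_\budget} + B \leq \quant$, i.e., $\expect{\alloc^\dagger_\budget} \leq (\quant - B)/\hat\quant$. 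Substituting and using $\expect{\val \ones[\val \geq \marketclearing]} = \hat\quant \expect{\val \given \val \geq \marketclearing}$,
\[
\Wel{\EXS} \leq \expect{\val \given \val \geq \marketclearing}(\quant - B) + \marketclearing \cdot B = \quant \expect{\val \given \val \geq \marketclearing} - B\bigl(\expect{\val \given \val \geq \marketclearing} - \marketclearing\bigr) \leq \quant \expect{\val \given \val \geq \marketclearing},
\]
where the last inequality uses $\expect{\val \given \val \geq \marketclearing} \geq \marketclearing$. Combining with the target expression for $\cumwelfare(\quant)$ closes the argument.

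The main obstacle will be the structural fact in the first step --- in particular, arguing carefully that in the random-public-budget model (which is precisely what the decomposition requires) the agent's choice $\alloc^{\EXS}$ on $\{\val \geq \marketclearing\}$ is indeed a function of $\budget$ alone. Once this is in hand, the remaining chain of inequalities is routine, combining independence of $\val$ and $\budget$ with the ex ante constraint that $\EXS$ inherits from $\EX$.
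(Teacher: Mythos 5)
Your proof is correct, and it takes a genuinely different route from the paper's. Both arguments rest on the same structural fact --- every marginal price in $\EXS$ is at most $\marketclearing$, so the allocation of any type with $\val \geq \marketclearing$ is a function of $\budget$ alone --- but they diverge from there. The paper asserts pointwise allocation dominance, $\alloc\primed(\val,\budget) \leq \alloc^{\quant}(\val,\budget)$ for every high value and every budget (arguing by contradiction with the ex ante constraint), and then concludes by a first-order stochastic dominance argument. You instead compute $\cumwelfare(\quant) = \quant\cdot\expect{\val \given \val\geq\marketclearing}$ in closed form and charge the welfare of $\EXS$ against the allocation mass it consumes: high types contribute at rate $\expect{\val\given\val\geq\marketclearing}$ per unit of allocation, low types at rate at most $\marketclearing$, and the total mass is at most $\quant$. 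What your aggregate accounting buys is that it never requires a per-budget comparison between $\EXS$ and market clearing. That comparison is the delicate point of the paper's proof: $\EXS$ can perfectly well allocate \emph{more} than $\min(1,\budget/\marketclearing)$ to some budget levels (by charging them a lower per-unit price), and the ex ante constraint only controls the allocation in expectation over $\budget$ --- which is exactly the quantity your argument uses. Your version also makes explicit the contribution of types with $\val<\marketclearing$, who buy nothing at the market-clearing price but may buy in $\EXS$; the paper's dominance statement absorbs them only implicitly. The remaining loose ends (tie-breaking at $\val=\marketclearing$, existence of an exact market-clearing price under atoms) are minor and shared with the paper's own treatment.
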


\begin{proof}
Consider agent with type $(\val, \budget)$ 
and agent with type $(\val', \budget)$,
where both value $\val$ and $\val'$ are
higher than the market clearing price $\marketclearing$.
Notice that the allocations for these two types
are the same in $\EXS$ and in
market clearing,
since the per-unit price in both mechanisms 
is at most $\marketclearing$ which makes
the mechanisms unable
to distinguish these two types. 
% i.e., 
% for any budget $\budget$ and any value $\val, \val' \geq \marketclearing$, 
% we have 
% $\alloc^\quant(\val, \budget)
% = \alloc^\quant(\val', \budget)$
% and 
% $\alloc\primed(\val, \budget)
% = \alloc\primed(\val', \budget)$.
	
Let $\alloc\primed$ be the allocation rule in $\EXS$ 
and let $\alloc^\quant$ be the allocation rule in posting the market clearing price $\marketclearing$. 
For any value $\val \geq \marketclearing$, 
% and for any budget $\budget$, 
the expected allocation for types
with value~$\val$
% $(\val, \budget)$ 
is lower in $\EXS$
than in market clearing, 
i.e., 
$\expect[\budget]{\alloc\primed(\val, \budget)} \leq 
\expect[\budget]{\alloc^\quant(\val, \budget)}$.
% and 
% consider all types $(\val, \budget)$ with fixed value $\val$.
% The total allocation from types with value $\val$ 
% in $\EXS$ is at most 
% the allocation from these types in the mechanism
% that posts the market clearing price.
Otherwise suppose the types with value $\val^*$ 
% $(\val^*, \budget)$ 
has strictly higher allocation in $\EXS$ for some value $\val^* \geq \marketclearing$, 
i.e, $\expect[\budget]{\alloc\primed(\val^*, \budget)}
> \expect[\budget]{\alloc^\quant(\val^*, \budget)}$. 
By the fact stated in previous paragraph, 
we have that 
for any budget $\budget$ and any value $\val, \val^* \geq \marketclearing$, 
$\alloc^\quant(\val, \budget)
= \alloc^\quant(\val^*, \budget)$,
$\alloc\primed(\val, \budget)
= \alloc\primed(\val^*, \budget)$,
and the expected allocation in $\EXS$ is 
\begin{align*}
\expect[\val, \budget]{\alloc\primed(\val, \budget)}
\geq\,& \Pr[\val \geq \marketclearing]
\cdot 
\expect[\val, \budget]{\alloc\primed(\val, \budget) 
\given \val \geq \marketclearing} \\
=\,& \Pr[\val \geq \marketclearing]
\cdot 
\expect[\budget]{\alloc\primed(\val^*, \budget)} \\
>\,& \Pr[\val \geq \marketclearing]
\cdot 
\expect[\budget]{\alloc^\quant(\val^*, \budget)} \\
=\,& \Pr[\val \geq \marketclearing]
\cdot 
\expect[\val, \budget]{\alloc^\quant(\val, \budget) 
\given \val \geq \marketclearing}
= \quant, 
\end{align*}
where the qualities hold due to
the independence between the value and the budget.
Note that this implies that $\EXS$ violates the ex ante constraint $\quant$, a contradiction.
Further, 
for any type %with value $(\val, \budget)$
with value $\val \geq \marketclearing$,
$\expect[\budget]{\alloc\primed(\val, \budget)} \leq 
\expect[\budget]{\alloc^\quant(\val, \budget)}$ 
implies that the allocation
in market clearing 
``first order stochastic dominantes'' 
the allocation in $\EXS$, i.e.,
% for any budget $\budget$, 
for any threshold $\val^\dagger$,
the expected allocation from all types with value $\val \geq \val^\dagger$ %and budget $\budget$
in market clearing  
is at least the expected allocation from those types in $\EXS$. 
Taking expectation over the valuation and the budget, 
the expected welfare from market clearing 
is at least the welfare from $\EXS$, 
i.e., $\cumwelfare(\quant) \geq \Wel{\EXS}$.
\end{proof}

\begin{lemma}
\label{lem:welfare ex2}
	For a single agent with random-public-budget utility, 
	independently distributed value and budget, 
	and any ex ante constraint $\quant$; 
	the welfare from market clearing 
	is at least the welfare from $\EXL$, 
	i.e., $\cumwelfare(\quant) \geq \Wel{\EXL}$.
\end{lemma}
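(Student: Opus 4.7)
The plan is to establish a pointwise (per-type) welfare inequality and then take expectation over $(\val,\budget)$. The key structural property is that the allocation-payment function $\APFL_\budget$ has every marginal per-unit price at least $\marketclearing$: since $\APF_\budget$ is convex and non-decreasing and $\alloc^*_\budget$ is the maximum allocation with marginal payment at most $\marketclearing$, we have $\APFL_\budget(\alloc) = \APF_\budget(\alloc^*_\budget+\alloc) - \APF_\budget(\alloc^*_\budget) \geq \marketclearing\cdot\alloc$ for every $\alloc\in[0,1-\alloc^*_\budget]$.

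First I would argue that any agent with $\val<\marketclearing$ picks allocation $0$ in $\EXL$, because the utility $\val\alloc - \APFL_\budget(\alloc) \leq (\val-\marketclearing)\alloc \leq 0$ for all $\alloc\geq 0$. Thus types with $\val<\marketclearing$ contribute zero to $\Wel{\EXL}$; these same types also contribute zero under market clearing at per-unit price $\marketclearing$, so the two welfare contributions agree on this set.

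Next, for any type $(\val,\budget)$ with $\val\geq\marketclearing$, let $\alloc^L(\val,\budget)$ denote the utility-maximizing allocation in $\EXL$ subject to the budget constraint $\APFL_\budget(\alloc)\leq\budget$. The marginal price lower bound gives $\marketclearing\cdot\alloc^L(\val,\budget)\leq\APFL_\budget(\alloc^L(\val,\budget))\leq\budget$, and hence $\alloc^L(\val,\budget)\leq\budget/\marketclearing$. Combined with the trivial bound $\alloc^L(\val,\budget)\leq 1$, we get $\alloc^L(\val,\budget)\leq\min\{1,\budget/\marketclearing\}$, which is exactly the allocation this type receives under posting per-unit price $\marketclearing$. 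Multiplying by $\val$ and taking expectation over $(\val,\budget)$ yields $\Wel{\EXL}\leq\cumwelfare(\quant)$.

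The main point requiring care is the interaction between the budget constraint and the decomposition: while $\EXL$ may violate ex post budget IR relative to the original $\EX$ (which is precisely why we pass to the random-public-budget model), each purchase in $\EXL$ is still bounded by the budget because it corresponds to a feasible incremental purchase in $\EX$ beyond $\alloc^*_\budget$. So the payment cap $\APFL_\budget(\alloc^L(\val,\budget))\leq\budget$ used above is legitimate. Unlike the companion lemma for $\EXS$, no stochastic-dominance argument is needed here: the bound is pointwise in $(\val,\budget)$ and uses the ex ante constraint $\quant$ only implicitly, through the definition of $\marketclearing$.
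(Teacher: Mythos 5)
Your proof is correct and follows the same route as the paper's: both arguments note that types with $\val<\marketclearing$ buy nothing in either mechanism, and that for types with $\val\geq\marketclearing$ the allocation in $\EXL$ is pointwise at most the market-clearing allocation because every marginal per-unit price in $\APFL_\budget$ is at least $\marketclearing$. Your version merely makes the paper's one-line allocation comparison explicit via the bound $\marketclearing\cdot\alloc^L\leq\APFL_\budget(\alloc^L)\leq\budget$, which is a welcome elaboration rather than a different argument.
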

\begin{proof}
In both $\EXL$ and market clearing,
types with value lower than $\marketclearing$ will purchase nothing,
so we only consider the types with value at least $\marketclearing$ in this proof.
Consider any type $(\val, \budget)$ where $\val \geq \marketclearing$, its allocation
in market clearing 
is at least its allocation in $\EXL$, 
because
the per-unit price in $\EXL$ is higher.
Thus, the welfare from market clearing 
is at least the welfare from $\EXL$, 
i.e., $\cumwelfare(\quant) \geq \Wel{\EXL}$.
\end{proof}

% \todo[inline]{Here we need
% a proof of Theorem 4.1.}
\begin{proof}[Proof of \Cref{thm:welfare private}]
Combining \Cref{lem:welfare ex1} and \ref{lem:welfare ex2}, 
for any quantile $\quant$, we have 
\begin{equation*}
\welcurve(\quant) = \Wel{\EX} 
\leq \Wel{\EXS} + \Wel{\EXL} 
\leq 2\cumwelfare(\quant) 
\leq \max_{\quant' \leq \quant}
2\welfareHull(\quant'). 
\qedhere
\end{equation*}
\end{proof}
\section{Closeness of Revenue Maximization
for a Budgeted Agent}
\label{sec:revenue close}
In this section we analyze the closeness of revenue curves for 
an agent with budget. 
We show that approximate closeness is satisfied under weaker assumptions 
than those given by \citet{FHL-19}.
% Since we improve the definition 
% of closeness by
% comparing the ironed price-posting revenue curve 
% with the optimal revenue curve. 
For simplicity,
in this section,
we use the notation 
$\Rev[\budget]{\cdot}$ to denote the revenue given any mechanism
if the budget of the agent is $\budget$, 
and $\Rev{\cdot}$ to denote the revenue
by taking expectation over the budget~$\budget$. 

\subsection{Public Budget}
\label{sec:public budget}
In this section, we consider the setting 
where
agents have public budgets.
% i.e., the budget is public 
% (since the budget distributions are known).
For an agent with a public budget,
\citet{FHL-19} show
that the ironed price-posting revenue
curve
is $1$-close to her optimal revenue curve
if her valuation distribution is regular
(\Cref{thm:public budget regular}).
Here we show that 
for agents with general valuation distribution,
the ironed price-posting revenue 
curve
is $2$-close to her optimal revenue curve 
(\Cref{thm:public budget irregular}).

\begin{theorem}[\citealp{FHL-19}]
\label{thm:public budget regular}
An agent with public budget and regular valuation distribution 
has the ironed price-posting revenue curve $\concaveHull$ 
that equals to (i.e.\ $1$-close) 
her optimal revenue curve $\revcurve$. 
\end{theorem}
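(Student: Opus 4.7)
The plan is to show $\concaveHull = \revcurve$ by establishing both inequalities. Since every price-posting mechanism is a feasible mechanism, $\cumprice \leq \revcurve$ pointwise; combining with the concavity of $\revcurve$ (Fact~\ref{fact:concave}), this gives $\concaveHull \leq \revcurve$. The crux is the reverse inequality, which I would establish by proving the stronger statement that for every ex ante constraint $\quant$, the revenue-optimal mechanism is exactly a per-unit pricing (so $\cumprice(\quant) = \revcurve(\quant)$ already).

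First, I would parameterize any IC/IR mechanism for a public-budget agent by its monotone non-decreasing allocation $\alloc(\val)$, with payment pinned down by the payment identity $\price(\val) = \val\,\alloc(\val) - \int_0^\val \alloc(s)\,ds$ (after zeroing out $\util(0)$ by optimality). The public-budget constraint then becomes the pointwise condition $\price(\val) \leq \budget$, and Myerson's payment identity gives expected revenue $\int \virt(\val)\,\alloc(\val)\,f(\val)\,d\val$. Next, I would relax the ex ante constraint $\int \alloc(\val)\,f(\val)\,d\val \leq \quant$ with a Lagrange multiplier $\lagrange$, reducing the problem to maximizing $\int (\virt(\val) - \lagrange)\,\alloc(\val)\,f(\val)\,d\val$ subject to monotonicity and the budget constraint.

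Under regularity, $\virt$ is non-decreasing, so (absent the budget constraint) the Lagrangian is maximized by a threshold rule $\alloc(\val) = \ones[\val \geq \val^*]$ with $\val^* = \virt^{-1}(\lagrange)$. Allowing a lottery level $\alloc^* \in [0,1]$ above the threshold, the payment identity gives $\price(\val) = \alloc^*\,\val^*$ for all winning types $\val \geq \val^*$, so the budget constraint collapses to the single inequality $\alloc^*\,\val^* \leq \budget$, and the Lagrangian-optimal choice is $\alloc^* = \min\{1, \budget/\val^*\}$. This is precisely the purchase behavior of a budgeted agent facing the per-unit price $\val^*$: when $\val^* \leq \budget$ the agent buys the full item, and when $\val^* > \budget$ the agent buys the lottery $\budget/\val^*$. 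Hence the ex ante optimum coincides with a per-unit pricing mechanism and $\revcurve(\quant) = \cumprice(\quant)$ for all $\quant$.

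Finally, I would verify that $\cumprice$ is already concave under regularity so that $\concaveHull = \cumprice$. A routine derivative computation on the two regimes gives $dP/dq = \virt(p)$ in the low-price regime $p \leq \budget$ and $dP/dq = p^2 f(p) / (p f(p) + 1 - F(p))$ in the high-price regime $p \geq \budget$; regularity makes each monotone in the correct direction, and the one-sided derivatives match up at the crossover quantile $\quant^\dagger = 1 - F(\budget)$ with the left derivative weakly larger than the right. Combining this with $\revcurve = \cumprice$ yields $\concaveHull = \cumprice = \revcurve$. The main subtlety is handling the budget constraint inside the Lagrangian: the clean closed form $\alloc^*\val^*$ for the budget-binding payment on threshold rules is exactly what lets regularity pin the optimum down to per-unit pricing, without having to characterize arbitrary non-threshold mechanisms.
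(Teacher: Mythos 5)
First, note that the paper does not prove this statement itself---it is imported from \citet{FHL-19}---so there is no in-paper proof to compare against; I am evaluating your argument on its own. Your first step ($\concaveHull \leq \revcurve$ from feasibility of price posting plus concavity of $\revcurve$) is fine, and your observation that the payment identity forces $\price(\val) = \alloc^*\val^*$ on a one-step menu, so that the budget-feasible one-step menus are exactly the per-unit pricings, is correct. The gap is in the reverse direction, and it is twofold. (1) You restrict to menus with a single lottery level (``allowing a lottery level $\alloc^*$ above the threshold'') without justification. Ruling out multi-item menus---e.g.\ a cheap small lottery for moderate types together with an expensive large lottery for high types---is precisely the content of the theorem; it is what fails for irregular distributions (where the paper's Theorem~\ref{thm:public budget irregular} must invoke the menu-size-two characterization of \citet{AFHH-13} and loses a factor of $2$), so regularity must enter exactly here, and your sketch never uses it for this purpose.

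(2) Even within one-step menus, your identification of the Lagrangian optimum is wrong when the budget binds. You fix the threshold at $\val^* = \virt^{-1}(\lagrange)$ from the unconstrained problem and then scale the allocation down to $\min\{1,\budget/\val^*\}$; but the binding budget constraint shifts the optimal threshold (it contributes its own multiplier to the first-order condition). Concretely, take $\val$ uniform on $[0,1]$, $\budget = 0.1$, and $\lagrange = 0$: your recipe gives $\val^* = \virt^{-1}(0) = \sfrac{1}{2}$, allocation $0.2$, and revenue $0.1 \cdot 0.5 = 0.05$, whereas simply posting the price $0.1$ is budget-feasible and earns $0.09$. So the mechanism you exhibit does not maximize the Lagrangian, and the complementary-slackness argument that would certify ex ante optimality of your candidate does not go through (indeed, matching the ex ante constraint with your formula can force $\lagrange = \virt(\val^*) < 0$). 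The conclusion $\revcurve = \cumprice$ may still be salvageable---the true optimum in the example is again a per-unit price---but the proof needs (i) a structural reduction to small menus and (ii) a correct treatment of the two multipliers (ex ante and budget) jointly, neither of which is present. Your final paragraph on concavity of $\cumprice$ is also unnecessary: once $\cumprice = \revcurve$ is established, concavity is inherited from Fact~\ref{fact:concave}.
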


For an agent with a
general valuation distribution,
closeness follows from 
a characterization of the ex ante optimal mechanism 
from \citet{AFHH-13}.
\begin{lemma}[\citealp{AFHH-13}]
\label{lem:public budget irregular}
 For a single agent with public budget,
  the $q \in [0, 1]$ ex ante optimal mechanism
  has a menu with size at most two.
\end{lemma}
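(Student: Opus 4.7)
The plan is to reformulate the ex ante optimization as an infinite-dimensional linear program in the agent's interim utility function and invoke an extreme-point argument to bound the menu complexity.

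First, by the revelation principle, any IC-IR mechanism for a public-budget agent is characterized by her utility function $u(\val) := \val\,\alloc(\val) - \price(\val)$, which must be convex and non-decreasing with $u(0) = 0$, satisfy $u'(\val) = \alloc(\val) \in [0,1]$ almost everywhere, and respect the budget cap $\val\,u'(\val) - u(\val) \leq \budget$. The revenue $\expect{\val u'(\val) - u(\val)}$ and ex ante probability $\expect{u'(\val)}$ are linear functionals of $u$, so the ex ante optimization is a linear program with a single scalar inequality $\expect{u'(\val)} \leq q$ beyond the structural constraints.

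Second, I would characterize the extreme points of the structural feasible set (before imposing the ex ante constraint). At any extreme point, $u$ is piecewise linear with each segment's slope drawn only from the boundary values of binding pointwise constraints, namely $u'(\val) = 0$ (lower bound on $u'$ tight), $u'(\val) = 1$ (upper bound tight), or $u'(\val) = (u(\val) + \budget)/\val$ (budget tight). Convexity of $u$ permits at most three such segments to be concatenated in monotone order, and this corresponds to a menu with at most two nontrivial options: an interior lottery at the budget-tight slope, followed by the full-allocation option $(1, \budget)$. Adding the single scalar ex ante constraint makes the optimum a convex combination of at most two such extreme points, and any such combination can be re-expressed as a single menu with at most two nontrivial options by merging or removing dominated items.

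The main obstacle is making the extreme-point argument fully rigorous in the infinite-dimensional setting of convex functions on $\reals_+$. A cleaner alternative, essentially the route in \citet{AFHH-13}, is Lagrangian duality: attach multiplier $\lagrange$ to the ex ante constraint, reducing the inner problem to maximizing $\expect{(\phi(\val) - \lagrange)\,\alloc(\val)}$ subject to IC-IR and the budget cap, where $\phi$ is the Myerson virtual value. The resulting budget-constrained single-agent problem is solved by classical Myerson ironing, supplemented by a single lottery option that saturates the budget whenever the unconstrained posted price would exceed $\budget$. Strong duality then lifts this two-option structure back to the original ex ante problem, giving the claimed menu bound.
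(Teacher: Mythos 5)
The paper does not actually prove this lemma; it is imported verbatim from \citet{AFHH-13} as a black box, so there is no in-paper argument to compare yours against and I have to judge the proposal on its own terms. On those terms there is a genuine gap, and it sits at the last step of both of your routes. In the primal route, the claim that ``any such combination can be re-expressed as a single menu with at most two nontrivial options by merging or removing dominated items'' is false as stated: a convex combination of two extreme points, each of which is a two-option menu, is a mechanism whose allocation rule is a monotone step function with up to four jump points, i.e.\ a four-option menu, and you cannot merge or delete options without changing the induced allocation rule and hence both the ex ante probability $\expect{\util'(\val)}$ and the revenue. Even the cleanest version of the counting argument (note that IC forces the payment to be non-decreasing in type, so the pointwise budget constraints collapse to the single linear constraint $\price(\bar\val)\le\budget$; then a Dubins-type theorem says extreme points of the base set of monotone $[0,1]$-valued allocations intersected with two hyperplanes are mixtures of at most three posted prices) yields menu size three, not two. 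Getting from three or four down to two requires using optimality and the specific structure of the budget constraint --- essentially showing the candidate maximizers are nested so the required mixture degenerates --- and that is exactly the content of the \citet{AFHH-13} argument you are trying to reproduce. The Lagrangian route at the end has the identical problem: after dualizing the ex ante constraint, meeting it with equality generically requires randomizing over two maximizers of the Lagrangian, and a mixture of two two-option menus is not a two-option menu; ``strong duality lifts the structure back'' does not address this tie-breaking step.

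There is also a smaller but concrete error in your enumeration of extreme-point segments. A segment on which the budget binds pointwise, $\util'(\val)=(\util(\val)+\budget)/\val$, integrates to $\util(\val)=c\val-\budget$, i.e.\ an option with allocation $c$ and payment exactly $\budget$. Such an interior option cannot be ``followed by the full-allocation option $(1,\budget)$'': every type strictly prefers $(1,\budget)$ to $(c,\budget)$, so the interior budget-tight option is never selected, and if instead the top option charged more than $\budget$ it would violate the budget. The correct structure is that the intermediate lottery probability is \emph{not} determined by any binding pointwise constraint --- it is a free level whose endpoints are pinned down by $\util(r)=0$ below and by the budget binding at the \emph{top} option above (which is what makes the three-segment function extreme). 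This part is fixable, but as written the segment taxonomy that your ``at most three segments'' count rests on is not the right one.
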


\begin{theorem}
\label{thm:public budget irregular}
An agent with public budget 
has the ironed price-posting revenue curve $\concaveHull$ 
that is $2$-close 
to her optimal revenue curve $\revcurve$. 
\end{theorem}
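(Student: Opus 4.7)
The plan is to invoke \Cref{lem:public budget irregular} to reduce to the case where the $\quant$ ex ante optimal mechanism is a menu of size at most two, and then decompose that mechanism's revenue across two per-unit price-posting mechanisms on the ironed curve. Fix $\quant \in [0,1]$ and write the menu as $(\alloc_1, \price_1), (\alloc_2, \price_2)$ with $\alloc_1 < \alloc_2 \leq 1$ and $\price_1 < \price_2 \leq \budget$; menus of size zero or one are degenerate and reduce to this argument. Let $p_1 = \price_1/\alloc_1$ and $p_2 = (\price_2 - \price_1)/(\alloc_2 - \alloc_1)$, with $p_1 \leq p_2$ forced by incentive compatibility. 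Writing $G(t) = \Pr[\val \geq t]$ for the survival function of the value distribution, the standard 2-menu analysis gives $\quant = \alloc_1 G(p_1) + (\alloc_2 - \alloc_1) G(p_2)$ and $\revcurve(\quant) = \price_1 G(p_1) + (\price_2 - \price_1) G(p_2)$.

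Next, I would analyze price-posting at per-unit price $p_k$ for each $k \in \{1,2\}$. With $\hat\alpha_k = \min\{1, \budget/p_k\}$, this mechanism achieves ex ante allocation $\hat\quant_k = \hat\alpha_k G(p_k)$ and revenue $\hat R_k = \hat\alpha_k p_k G(p_k)$, so $\payoffHull(\hat\quant_k) \geq \cumprice(\hat\quant_k) \geq \hat R_k$. Define scaled quantiles $\quant'_1 = \alloc_1 G(p_1)$ and $\quant'_2 = (\alloc_2 - \alloc_1) G(p_2)$. Two inequalities are then immediate: $\quant'_1 + \quant'_2 = \quant$, so each $\quant'_k \leq \quant$; and budget feasibility of the original menu ($\alloc_1 p_1 = \price_1 \leq \budget$ and $(\alloc_2 - \alloc_1) p_2 = \price_2 - \price_1 \leq \budget$) gives $\alloc_1 \leq \hat\alpha_1$ and $\alloc_2 - \alloc_1 \leq \hat\alpha_2$, hence $\quant'_k \leq \hat\quant_k$. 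Using concavity of $\payoffHull$ with $\payoffHull(0)=0$ yields $\payoffHull(\quant'_k) \geq (\quant'_k/\hat\quant_k)\,\hat R_k$, and the $\hat\alpha_k$ factors cancel to give $\payoffHull(\quant'_1) \geq \price_1 G(p_1)$ and $\payoffHull(\quant'_2) \geq (\price_2 - \price_1) G(p_2)$. Summing, $\payoffHull(\quant'_1) + \payoffHull(\quant'_2) \geq \revcurve(\quant)$, so at least one of $\quant'_1, \quant'_2$ (both $\leq \quant$) certifies $\payoffHull(\quant') \geq \revcurve(\quant)/2$, which is $2$-closeness.

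The main obstacle is the bookkeeping around budget feasibility: making sure the inequalities $\alloc_1 \leq \hat\alpha_1$ and $\alloc_2 - \alloc_1 \leq \hat\alpha_2$ hold cleanly, so that each menu item can genuinely be viewed as a downscaling of the corresponding per-unit price-posting mechanism. These are the hooks that let concavity of $\payoffHull$ translate revenue bounds across quantiles; once they are set up, the rest is arithmetic, and the structural input (menu of size at most two) comes for free from \Cref{lem:public budget irregular}.
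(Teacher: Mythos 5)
Your proof is correct and follows essentially the same route as the paper: it invokes \Cref{lem:public budget irregular} to get a two-item menu and then splits the allocation rule into two layers, each of which is a (scaled) per-unit price-posting rule whose quantile is at most $\quant$, so that the ironed curve covers each layer's revenue and the sum gives the factor $2$. The only difference is presentational -- you make explicit, via budget feasibility ($\alloc_1 \le \hat\alpha_1$, $\alloc_2-\alloc_1 \le \hat\alpha_2$) and concavity of $\payoffHull$ through the origin, the step the paper compresses into the assertion that both layers are ``randomized price-posting allocation rules.''
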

\begin{proof}
By \Cref{lem:public budget irregular},
the allocation rule $\allocq$ of the ex ante revenue maximization mechanism 
for the single agent with public budget 
has a menu of size at most two. 
We decompose its allocation into 
$\alloclow$ and $\allochigh$ 
as illustrated in \Cref{f:alloc decompose}.
% Let $\Rev{\alloc}$ denote the revenue under allocation rule $\alloc$. 
Note that both allocation $\alloclow$ 
and $\allochigh$ are (randomized)
price-posting allocation rules, 
and neither allocation violates the allocation constraint $\quant$. 
Thus,
\begin{equation*}
\revcurve(\quant) = \Rev{\allocq} 
= \Rev{\alloclow} + \Rev{\allochigh}
\leq 2\, %\textstyle
\max_{\quant\primed \leq \quant} 
\concaveHull(\quant\primed).
\qedhere
\end{equation*}
\end{proof}

\begin{figure}[t]
\begin{flushleft}
\hspace{-10pt}
\begin{minipage}[t]{0.48\textwidth}
\centering
\begin{tikzpicture}[scale = 0.5]

\draw (-0.2,0) -- (11.5, 0);
\draw (0, -0.2) -- (0, 5);

\draw (3, 0) -- (3, 2);
\draw (3, 2) -- (6, 2);
\draw (6, 2) -- (6, 4);
\draw (6, 4) -- (11, 4);

\begin{scope}[ultra  thick]

\draw [dashed] (0, 0) -- (3, 0);
\draw [dashed] (3, 0) -- (3, 2);
\draw [dashed] (3, 2) -- (11, 2);

\end{scope}

\draw (0, -0.8) node {$0$};
\draw (-0.2, 4.8) -- (0.2, 4.8);
\draw (-0.6, 4.8) node {$1$};

\draw (11, -0.2) -- (11, 0.2);
\draw (11, -0.8) node {$\val$};

\draw (10.7, 4.5) node {$\allocq$};
\draw (10.7, 2.5) node {$\alloclow$};

\end{tikzpicture}
\end{minipage}
\begin{minipage}[t]{0.48\textwidth}
\centering
\begin{tikzpicture}[scale = 0.5]

\draw (-0.2,0) -- (11.5, 0);
\draw (0, -0.2) -- (0, 7);

\draw (-0.2,2) -- (11.5, 2);

\draw (3, 0) -- (3, 2);
\draw (3, 2) -- (6, 2);
\draw (6, 2) -- (6, 4);
\draw (6, 4) -- (11, 4);

\begin{scope}[ultra  thick]

\draw [dashed] (0, 2) -- (6, 2);
\draw [dashed] (6, 2) -- (6, 4);
\draw [dashed] (6, 4) -- (11, 4);

\end{scope}

\draw (-0.6, 2) node {$0$};
\draw (-0.2, 6.8) -- (0.2, 6.8);
\draw (-0.6, 6.8) node {$1$};

\draw (11, 1.8) -- (11, 2.2);
\draw (11, 1.2) node {$\val$};

\draw (10.7, 4.5) node {$\allocq$};
\draw (10.7, 3.5) node {$\allochigh$};

\end{tikzpicture}
\end{minipage}
\end{flushleft}
\caption{\label{f:alloc decompose}
The thin solid line is the allocation rule 
for the optimal ex ante mechanism. 
The thick dashed line on the left side 
is the allocation of the decomposed mechanism with lower price, 
while the thick dashed line on the right side 
is the allocation of the decomposed mechanism with higher price. 
}
\end{figure}

\subsection{Private Budget}
\label{sec:private budget}
In this section,
we study the closeness of the 
ironed price-posting revenue curve
and the optimal revenue curve
for agents with private budget.
For agents with linear utilities, 
those two curves are equivalent
for any valuation distribution. 
However, for an agent with private budget,
the gap between them can be unbounded. 
Specifically, according to \citet{FHL-19}, 
when the budget distribution is correlated with the valuation distribution, 
posting prices is not a constant approximation 
to the optimal revenue for a single agent
even with strong regularity assumption 
on the marginal valuation distribution and budget distribution. 
Therefore, in this section, 
we focus on the case when 
the budget distribution is independent with
the valuation distribution for each agent.
Note that even with the independence assumption,
without any further assumption on the
valuation or the budget distribution, 
posting prices is not approximately optimal
even for a single agent, see \Cref{example:post price is bad}
in the appendix.
Therefore, we consider 
mild assumption on either
the valuation distribution 
or the budget distribution
and show the corresponding 
closeness property.

From \citet{FHL-19}, 
we know that regularity on the 
valuation distribution is sufficient 
to guarantee the closeness between the ironed price-posting revenue curves 
and the optimal revenue curve, 
without further assumption on the budget distribution.
\begin{theorem}[\citealp{FHL-19}]
\label{thm: private budget regular value}
A single agent with private-budget utility
and regular valuation distribution 
has a ironed price-posting revenue curve $\concaveHull$ 
that is $3$-close 
% for ex ante relaxation 
to her optimal revenue curve $\revcurve$, 
if 
her value and budget 
are independently
distributed.
\end{theorem}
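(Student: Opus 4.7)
The plan is to extend the market-clearing price-decomposition argument used for welfare (\Cref{thm:welfare private}) to the revenue setting, with the regularity of the valuation distribution and the independence of budget together absorbing a factor of $3$ in place of the factor of $2$ we obtained for welfare.

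Fix an ex ante constraint $\quant$, and let $\EX$ be the $\quant$ ex ante revenue-optimal mechanism for a private-budget agent with regular valuation distribution. Using the allocation-payment function characterization of incentive-compatible mechanisms, I would decompose $\EX$ at the market clearing price $\marketclearing$ exactly as in \Cref{sec:welfare close}: let $\alloc^*_\budget=\argmax\{\alloc:\APF_\budget'(\alloc)\leq \marketclearing\}$, define $\APFS_\budget$ to be the restriction of $\APF_\budget$ to $[0,\alloc^*_\budget]$ (per-unit price at most $\marketclearing$) and $\APFL_\budget$ to be the shifted tail on $[\alloc^*_\budget,1]$ (per-unit price at least $\marketclearing$). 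Convexity of $\APF_\budget$ gives that for every type the payment in $\EX$ is at most the payment in $\EXS$ plus the payment in $\EXL$, so $\Rev{\EX}\leq \Rev{\EXS}+\Rev{\EXL}$.

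Bounding $\EXS$ is straightforward: each realized payment in $\EXS$ is at most $\marketclearing$ times the realized allocation, so
\[
\Rev{\EXS}\ \leq\ \marketclearing\cdot \expect[\val,\budget]{\alloc^S_\budget(\val)}\ \leq\ \marketclearing\cdot \quant\ =\ \cumprice(\quant)\ \leq\ \concaveHull(\quant).
\]
The meat of the proof is showing $\Rev{\EXL}\leq 2\,\concaveHull(\quant)$. In $\EXL$ the per-unit price is always at least $\marketclearing$, so only agents with $\val\geq\marketclearing$ participate, and each realized payment is bounded both by $\val$ (IR) and by $\budget$ (budget feasibility). Using independence of $\val$ and $\budget$, I would split $\Rev{\EXL}$ into the contribution capped by the IR constraint and the contribution capped by the budget constraint. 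For the IR-bounded term I would take the expectation over $\val$ first, conditional on each $\budget$, and compare to a purely linear agent with the same regular distribution $F$: by regularity the ironed price-posting revenue curve equals the optimal revenue curve, so the expected IR-payment at any ex ante quantile $\quant'\leq\quant$ is at most $\concaveHull(\quant)$. For the budget-bounded term I would use independence to factor the expectation and identify the resulting expression with the price-posting revenue $\cumprice(\quant)\leq\concaveHull(\quant)$ under the market clearing price. Adding the two contributions yields $\Rev{\EXL}\leq 2\,\concaveHull(\quant)$, and combined with the $\EXS$ bound gives $\revcurve(\quant)=\Rev{\EX}\leq 3\,\concaveHull(\quant)$, establishing $3$-closeness.

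The main obstacle is the bound on $\Rev{\EXL}$, because $\APFL_\budget$ is not itself a posted price and so regularity of $F$ does not apply directly. The trick is to exploit independence to integrate out $\val$ first, reducing each budget slice of $\EXL$ to a linear-agent pricing problem on the conditional value distribution where regularity kicks in. Matching the quantiles across budget slices so that the aggregate ex ante allocation remains $\leq \quant$, and keeping the value-capped and budget-capped contributions from double counting, is the delicate part of the argument.
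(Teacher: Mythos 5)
This theorem is imported from \citet{FHL-19}; the paper states it with a citation and gives no proof of its own, so there is nothing internal to compare against. Judged on its own terms, your proposal has the right architecture --- the market-clearing price decomposition into $\EXS$ and $\EXL$ is exactly the technique the paper uses for welfare (\Cref{thm:welfare private}) and for the small-tail revenue case (\Cref{thm: private budget small tail}), and your bound $\Rev{\EXS}\leq\marketclearing\cdot\quant=\cumprice(\quant)$ is correct. The problem is that the entire difficulty of the theorem sits in the step you describe only as a plan: $\Rev{\EXL}\leq 2\,\concaveHull(\quant')$.

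The concrete gap is in the ``IR-bounded term.'' You propose to integrate out $\val$ for each fixed $\budget$, view the slice as a linear-agent pricing problem, and invoke regularity to conclude the expected payment at quantile $\quant'$ is at most $\concaveHull(\quant)$. But regularity only identifies the \emph{linear} agent's optimal revenue curve with the \emph{linear} agent's price-posting revenue curve, $\quant'\cdot\val(\quant')$. The benchmark in the definition of $\zeta$-closeness is the \emph{budgeted} agent's price-posting curve, and posting a per-unit price $\price$ to a budgeted agent yields only $\prob{\val\geq\price}\cdot\expect{\min\{\price,\budget\}}$, since a type with budget $\budget<\price$ buys the fractional lottery $\sfrac{\budget}{\price}$. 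The ratio $\price/\expect{\min\{\price,\budget\}}$ is unbounded without further assumptions, so the chain ``linear revenue $\leq$ linear price-posting revenue $\leq\concaveHull$'' fails at its last link; \Cref{example:post price is bad} is exactly a witness that this truncation loss cannot be waved away. Closing the argument requires interleaving the value cap and the budget cap type-by-type (the payment of a non-budget-bound type is at most $\min\{\tau_\budget(\alloc_{\val}),\budget\}$, not just $\tau_\budget(\alloc_\val)$) and then choosing a single comparison quantile $\quant'$ that works simultaneously across all budget slices --- which is precisely the lemma you defer to ``the delicate part'' without supplying. A symmetric objection applies to the ``budget-capped contribution'': types that hit their budget in $\EXL$ pay $\budget\geq\min\{\marketclearing,\budget\}$ each, so the claimed identification with $\cumprice(\quant)$ at the market-clearing price needs the reduction in participation probability at the higher prices of $\EXL$, again mediated by regularity, and no such argument is given. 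As written, the proposal establishes the factor $1$ for $\EXS$ but not the factor $2$ for $\EXL$, and hence not the theorem.
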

% \todo[inline]{I think we should add
% a theorem about this 3-approximation, 
% since we list $3\gamma$ as a 
% result in our intro.}

% To settle restriction 
% on the budget distribution, 
We also consider the assumption that
the budget exceeds its expectation 
with constant probability 
at least $\sfrac{1}{\budgetQuantile}$.
This assumption on budget distribution
is also studied in \citet{CGMW-18}
and \citet{FHL-19}.
Notice that 
a common distribution assumption,
monotone hazard rate, is a special case 
of it with $\budgetQuantile = e$ 
\citep[cf.][]{BM-65}.

\begin{theorem}\label{thm: private budget small tail}
A single agent with private-budget utility
has a price-posting revenue curve~$\cumprice$ 
that is $\alphaPrivate$-close 
% for ex ante relaxation 
to her optimal revenue curve $\revcurve$, 
if 
her value and budget 
are independently
distributed,
% the valuation distribution is regular,
and the probability the budget exceeds its expectation is $\sfrac{1}{\budgetQuantile}$. 
\end{theorem}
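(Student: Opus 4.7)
\emph{Approach.} The plan is to adapt the price-decomposition technique used in the proofs of \Cref{thm:welfare private} and \Cref{thm:public budget irregular}, now combined with the tail assumption $\prob{\budget \geq \hbudget} \geq \sfrac{1}{\budgetQuantile}$ on the budget distribution (where $\hbudget = \expect{\budget}$). Fix any ex ante constraint $\quant$, let $\marketclearing = \marketclearing[\quant]$ be the corresponding market clearing per-unit price, and let $\EX$ be the $\quant$-ex ante revenue-optimal mechanism, represented for each realized budget $\budget$ by an allocation-payment function $\APF_\budget$. Define $\marginP = \argmax\{\alloc : \APF_\budget'(\alloc) \leq \marketclearing\}$ and decompose $\APF_\budget$ into the low-price part $\APFS_\budget$ and the high-price part $\APFL_\budget$ exactly as in the welfare proof, inducing stand-alone mechanisms $\EXS$ and $\EXL$ (interpreted in the random-public-budget model). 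Because in $\EX$ the agent chooses an allocation whose ``lower piece'' is served by $\EXS$ and whose ``upper piece'' is served by $\EXL$, we have $\Rev{\EX} \leq \Rev{\EXS} + \Rev{\EXL}$.

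\emph{Bounding $\Rev{\EXS}$.} Since every marginal price in $\APFS_\budget$ is at most $\marketclearing$, the total payment at allocation $\alloc \leq \marginP$ is at most $\marketclearing \cdot \alloc$. As the expected allocation in $\EXS$ is at most that in $\EX$, which is at most $\quant$,
\begin{equation*}
\Rev{\EXS} \;\leq\; \marketclearing \cdot \quant \;=\; \cumprice(\quant) \;\leq\; \concaveHull(\quant).
\end{equation*}
This contributes the leading ``$1$'' in $\alphaPrivate = 1 + 3\budgetQuantile - \sfrac{1}{\budgetQuantile}$.

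\emph{Bounding $\Rev{\EXL}$.} Since every marginal price in $\APFL_\budget$ is at least $\marketclearing$, only types with $\val \geq \marketclearing$ contribute, and each such type pays at most its budget $\budget$. Using independence of $\val$ and $\budget$,
\begin{equation*}
\Rev{\EXL} \;\leq\; \expect[\val,\budget]{\budget \cdot \mathbb{1}[\val \geq \marketclearing]} \;=\; \hbudget \cdot \prob{\val \geq \marketclearing}.
\end{equation*}
The goal is to bound this by $(3\budgetQuantile - \sfrac{1}{\budgetQuantile}) \cdot \concaveHull(\quant')$ for some $\quant' \leq \quant$. The plan is to case-split on the relative sizes of $\marketclearing$ and $\hbudget$. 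When $\marketclearing \geq \hbudget$, posting the bundling per-unit price $\hbudget$ induces quantile $\quant^{\dagger} \leq \quant$ (since a smaller price yields a weakly larger ex ante sale probability and $\marketclearing$ already sells at quantile $\quant$ only because $\val$ must clear $\marketclearing \geq \hbudget$---so we compare via the ironed curve $\concaveHull$ at $\quant$ by a concavity interpolation from $\quant^{\dagger}$), and the tail assumption gives sale probability at least $\prob{\val \geq \hbudget}/\budgetQuantile$, yielding a factor $\budgetQuantile$ on that portion. When $\marketclearing < \hbudget$, the tail assumption $\prob{\budget \geq \marketclearing} \geq \sfrac{1}{\budgetQuantile}$ guarantees $\cumprice(\quant) = \marketclearing \cdot \prob{\val \geq \marketclearing}\cdot\expect{\min(1, \budget/\marketclearing)}$ already captures a constant fraction of $\hbudget \cdot \prob{\val\geq \marketclearing}/\budgetQuantile$; the remaining gap is handled by a second comparison with a bundling-price mechanism at $\hbudget$, and the overlap between the two comparisons on the event $\{\budget \geq \hbudget\}$ is subtracted exactly once, explaining the $-\sfrac{1}{\budgetQuantile}$ correction.

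\emph{Main obstacle.} The delicate step is the quantitative accounting in the bound on $\Rev{\EXL}$: obtaining the sharp factor $3\budgetQuantile - \sfrac{1}{\budgetQuantile}$ rather than a loose $O(\budgetQuantile^2)$ bound requires carefully choosing which quantile $\quant' \leq \quant$ to invoke in each regime, exploiting concavity of $\concaveHull$ to transfer bounds across quantiles, and avoiding double-counting the mass $\{\budget \geq \hbudget\}$ which appears in both the market-clearing comparison (as the budget tail that makes the clearing allocation non-trivial) and in the bundling-price comparison (as the fraction of agents that can afford $\hbudget$). After summing Steps~2 and~3 and using monotonicity/concavity of $\concaveHull$ to choose a single $\quant' \leq \quant$, we obtain $\revcurve(\quant) \leq \alphaPrivate \cdot \max_{\quant' \leq \quant} \concaveHull(\quant')$, as desired.
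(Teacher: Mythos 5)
Your overall instinct (decompose the ex ante optimal mechanism by per-unit price via allocation--payment functions, then invoke the budget-tail assumption) matches the paper's strategy, but the argument diverges from the paper's in a way that matters and, as written, has a fatal quantitative gap. The paper does \emph{not} use a two-way split at $\marketclearing$: it first disposes of the case $\marketclearing \geq \budget^*$ entirely by \Cref{lem:large market clearing} (which already gives $(2+\budgetQuantile-\sfrac{1}{\budgetQuantile})\,\cumprice(\quant)$), and in the remaining case $\marketclearing < \budget^*$ it performs a \emph{three}-way decomposition, cutting at both $\marketclearing$ and the expected budget $\budget^*$, into $\EXS$, $\EXM$, $\EXL$. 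The low and high pieces are handled by \Cref{lem:price high and low} (contributing $1$ and $1+\budgetQuantile-\sfrac{1}{\budgetQuantile}$), and the genuinely new work is \Cref{lem:price between market clear and expected budget} for the middle band, which contributes $2\budgetQuantile-1$ via the chain: raise every budget to $\budget^*$ (paying a factor $\sfrac{\budget}{\budget^*}$ on the tail), replace the menu by the best single price $\randprice\in[\marketclearing,\budget^*]$ for the budget-$\budget^*$ agent, integrate using the tail assumption to get $2-\sfrac{1}{\budgetQuantile}$, and invoke the \citet{CGMW-18} inequality $\Rev[\budget^*]{\randprice}\leq\budgetQuantile\,\Rev{\randprice}$. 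The constants $1+(2\budgetQuantile-1)+(1+\budgetQuantile-\sfrac{1}{\budgetQuantile})$ sum exactly to $\alphaPrivate$; none of this bookkeeping appears in your sketch.

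The concrete gap is in your bound on the high-price piece. The inequality $\Rev{\EXL}\leq\expect{\budget}\cdot\prob{\val\geq\marketclearing}$ discards the individual-rationality cap (a type with value $\val$ never pays more than $\val$, so the right bound is $\expect{\min(\val,\budget)}$ restricted to $\val\geq\marketclearing$), and this loss cannot be recovered: take $\val\equiv 1$ deterministically and $\budget$ equal to $M$ with probability $\sfrac{1}{\budgetQuantile}$ and a tiny $\varepsilon>0$ otherwise, so the tail assumption holds and $\budget^*\approx\sfrac{M}{\budgetQuantile}$; for $\quant$ near $1$ the market clearing price is near $\varepsilon$, so your bound reads roughly $\sfrac{M}{\budgetQuantile}$, unbounded in $M$, while every mechanism (hence $\revcurve$ and $\concaveHull$ at every quantile) earns at most $\expect{\min(\val,\budget)}\leq 1$ from this agent. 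Thus no choice of $\quant'\leq\quant$ can satisfy $\expect{\budget}\prob{\val\geq\marketclearing}\leq(3\budgetQuantile-\sfrac{1}{\budgetQuantile})\,\concaveHull(\quant')$. The subsequent case analysis is in any event a narrative of intentions rather than a derivation (e.g., ``the overlap \ldots is subtracted exactly once, explaining the $-\sfrac{1}{\budgetQuantile}$ correction''), and your closing paragraph concedes that the quantitative accounting is not carried out. To repair the argument you need the second cut at $\budget^*$, so that in the middle band budgets do not bind for the budget-$\budget^*$ agent and the single-price reduction plus $\Rev[\budget^*]{\randprice}\leq\budgetQuantile\,\Rev{\randprice}$ applies, and in the top band the price exceeds $\budget^*$ so the tail assumption directly controls the purchase probability.
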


\Cref{thm: private budget small tail} implies that 
for this private-budget agent, 
her ironed price-posting revenue curve~$\concaveHull$ 
is also $\alphaPrivate$-close 
% for ex ante relaxation 
to her optimal revenue curve $\revcurve$. 
Let $\budget^*$ denote the expected budget of the agent. 
For any ex ante constraint $\quant$,
denote $\EX$ as the
$\quant$ ex ante optimal mechanism. 
We consider two cases whether the 
market clearing price $\marketclearing$
is larger than the expected budget $\wealth^*$.
For the case where the market clearing price is at least 
the expected budget $\wealth^*$,
we use \Cref{lem:large market clearing}
in \citet{FHL-19}.
% We use $\Rev[\budget]{\cdot}$ to denote the revenue given any mechanism
% if the budget of the agent is $\budget$, 
% and use $\Rev{\cdot}$ to denote the revenue
% by taking expectation over the budget $\budget$. 
% \todo[inline]{
% I would like to 
% introduce this notion of $\Rev{\cdot}$
% at the beginning
% of this section since we also use
% them for public budget.}

\begin{lemma}[\citealp{FHL-19}]
\label{lem:large market clearing}
When the market clearing price $\marketclearing$
is at least the expected budget $\budget^*$, 
$\Rev{\EX} \leq 
\left(2 + \budgetQuantile - \sfrac{1}{\budgetQuantile} \right) 
\cumprice(\quant). $
\end{lemma}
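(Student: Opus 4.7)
I will mirror the allocation--payment function decomposition used in the welfare proof (\Cref{thm:welfare private}). For each budget $\wealth$, split the allocation--payment function $\APF_\wealth$ of $\EX$ at $\alloc^*_\wealth$, the point where its marginal slope first reaches $\marketclearing$, producing sub-mechanisms $\EXS$ (marginal per-unit prices at most $\marketclearing$) and $\EXL$ (marginal per-unit prices at least $\marketclearing$). Exactly as in Section~4, each type's payment in $\EX$ decomposes as the sum of its payments in $\EXS$ and $\EXL$, so $\Rev{\EX} = \Rev{\EXS} + \Rev{\EXL}$, and the ex ante allocations of $\EXS$ and $\EXL$ sum to $\quant$.

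For $\Rev{\EXS}$: every payment in $\EXS$ is bounded by $\min\{\marketclearing \cdot \alloc,\, \wealth\}$, since the marginal price is at most $\marketclearing$ and payments never exceed the budget. Viewing $\EXS$ conditional on each budget realization as an ex ante mechanism whose marginal prices are capped by $\marketclearing$, I apply the public-budget closeness result \Cref{thm:public budget irregular} fiber-wise and then aggregate over $\wealth$ using independence of $\val$ and $\wealth$. This gives $\Rev{\EXS} \le 2\,\cumprice(\quant)$.

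For $\Rev{\EXL}$: since the marginal per-unit price is at least $\marketclearing$, only types with $\val \ge \marketclearing$ participate (otherwise any positive allocation yields negative marginal utility), and each pays at most her budget. By independence of $\val$ and $\wealth$,
\[
\Rev{\EXL} \;\le\; \expect{\wealth \cdot \mathbb{1}[\val \ge \marketclearing]} \;=\; \Pr[\val \ge \marketclearing] \cdot \wealth^*.
\]
Using the identity $\cumprice(\quant) = \Pr[\val \ge \marketclearing]\cdot\expect{\min\{\marketclearing,\wealth\}}$ together with the tail assumption $\Pr[\wealth \ge \wealth^*] \ge 1/\budgetQuantile$ and $\marketclearing \ge \wealth^*$, I first obtain $\expect{\min\{\marketclearing,\wealth\}} \ge \wealth^*/\budgetQuantile$, giving the crude estimate $\Rev{\EXL} \le \budgetQuantile\cdot\cumprice(\quant)$. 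A sharper accounting --- separating the ``tail'' contribution $\wealth\cdot\mathbb{1}[\wealth \ge \wealth^*]$ from the ``body'' contribution $\wealth\cdot\mathbb{1}[\wealth < \wealth^*]$ in both $\Rev{\EXL}$ and $\cumprice(\quant)$ so that the tail mass is not double-counted --- tightens this to $\Rev{\EXL} \le (\budgetQuantile - 1/\budgetQuantile)\,\cumprice(\quant)$. Adding the two sub-bounds yields $\Rev{\EX} \le (2+\budgetQuantile-1/\budgetQuantile)\,\cumprice(\quant)$, as claimed.

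The \textbf{main technical difficulty} is the tightened bookkeeping in the bound on $\Rev{\EXL}$. The direct application of the tail assumption already yields the factor $\budgetQuantile$; extracting the additional $-1/\budgetQuantile$ saving requires carefully isolating the portion of $\cumprice(\quant)$ that is already being collected from large-budget ($\wealth \ge \wealth^*$) types, so that this mass is credited once rather than simultaneously upper-bounding $\Rev{\EXL}$ and appearing in the denominator of the ratio.
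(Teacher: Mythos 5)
The paper itself gives no proof of this lemma --- it is imported verbatim from \citet{FHL-19} --- so there is no in-paper argument to compare against; judged on its own terms, your proposal contains a genuine gap. The fatal step is the ``sharper accounting'' claim $\Rev{\EXL} \leq (\budgetQuantile - \sfrac{1}{\budgetQuantile})\,\cumprice(\quant)$: it is asserted rather than derived, and it is false. As $\budgetQuantile \to 1$ (budget concentrated at or above its mean) your factor tends to $0$, forcing $\Rev{\EXL} \to 0$; but with a deterministic budget and an irregular value distribution the ex ante optimal mechanism is a two-price menu (\Cref{lem:public budget irregular}) whose higher per-unit price lies above $\marketclearing$ and can carry revenue on the order of $\cumprice(\quant)$ --- this is exactly the tight case of \Cref{thm:public budget irregular}. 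No bookkeeping of tail versus body mass can rescue a factor that vanishes in this limit. The bound on the other piece is also not established as written: applying \Cref{thm:public budget irregular} ``fiber-wise'' compares each budget fiber to its own best posted price at its own quantile, and these per-fiber benchmarks neither aggregate to the single-price private-budget benchmark $\cumprice(\quant)$ nor jointly respect the ex ante constraint, so the factor $2$ for $\Rev{\EXS}$ does not follow from the cited theorem.

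The frustrating part is that the constants are simply attached to the wrong pieces, and your own tools suffice once they are reassigned. Since every marginal per-unit price in $\EXS$ is at most $\marketclearing$ and its expected allocation is at most $\quant$, convexity of the allocation--payment function gives payment at most $\marketclearing$ times allocation, hence $\Rev{\EXS} \leq \marketclearing\,\quant \leq \cumprice(\quant)$ --- factor $1$, not $2$, with no appeal to the public-budget theorem (this mirrors the first half of \Cref{lem:price high and low}). For $\EXL$, your ``crude estimate'' already closes the argument: only types with $\val \geq \marketclearing$ purchase and each pays at most her budget, so by independence $\Rev{\EXL} \leq \prob{\val \geq \marketclearing}\cdot\budget^*$, while $\cumprice(\quant) \geq \prob{\val \geq \marketclearing}\cdot\expect{\min\{\marketclearing,\budget\}} \geq \prob{\val \geq \marketclearing}\cdot\budget^*\cdot\prob{\budget \geq \budget^*} \geq \prob{\val \geq \marketclearing}\cdot\sfrac{\budget^*}{\budgetQuantile}$, using $\marketclearing \geq \budget^*$ and the tail assumption. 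This yields $\Rev{\EXL} \leq \budgetQuantile\,\cumprice(\quant)$ and hence $\Rev{\EX} \leq (1+\budgetQuantile)\,\cumprice(\quant) \leq (2+\budgetQuantile-\sfrac{1}{\budgetQuantile})\,\cumprice(\quant)$, which proves the lemma (in fact slightly more). As submitted, however, the proof rests on one false intermediate inequality and one unjustified one, so it does not stand.
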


% Now we focus on the case where 
% the market clearing price is smaller than the expected budget, 
% i.e., $\marketclearing < \budget^*$. 
% Our analysis here
% is similar to the analysis for welfare,
% i.e., the price decomposition technique.
% Consider the decomposition of
% $\EX$ into three mechanisms $\EXS$, $\EXM$ and $\EXL$  
% such that 
% (a) the per-unit prices in $\EXS$ for all types
% are at most the market clearing price;
% (b) the per-unit prices in $\EXM$ for all types
% are between the market clearing price and 
% the expected budget;
% (c) the per-unit prices in $\EXL$ for all types
% are at least the expected budget;
% (d) the allocation in $\EXS$ or $\EXM$ or $\EXL$  
% for all types
% are at most the allocation in $\EX$, i.e., $\quant$;
% (d) the revenue from $\EX$ (for a private-budgeted agent) 
% is upper bounded by the revenue from $\EXS$, $\EXM$ and 
% $\EXL$ (for a random-public-budgeted agent with the 
% same distribution), i.e.,
% $\Rev{\EX} \leq \Rev{\EXS} +\Rev{\EXM} + \Rev{\EXL}$.

Now we focus on the case where 
the market clearing price is smaller than the expected budget, 
i.e., $\marketclearing < \budget^*$. 
Our analysis here
is similar to the analysis for welfare,
i.e., the price decomposition technique.
Consider the decomposition of
$\EX$ into three mechanisms $\EXS$, $\EXM$ and $\EXL$  
such that 
mechanism $\EXS$ 
contains per-unit prices at most the market clearing price, 
mechanism $\EXL$ contains per-unit prices at least the expected budget, 
while mechanism $\EXM$ contains per-unit prices between the market clearing price and 
the expected budget. 
All mechanisms satisfy the ex ante constraint $\quant$, 
and the sum of their welfare is upper bounded 
by the welfare of the original ex ante mechanism $\EX$,
i.e.,
$\Rev{\EX} \leq \Rev{\EXS} +\Rev{\EXM} + \Rev{\EXL}$.
% (a) the per-unit prices in $\EXS$ for all types
% are at most the market clearing price;
% (b) the per-unit prices in $\EXM$ for all types
% are between the market clearing price and 
% the expected budget;
% (c) the per-unit prices in $\EXL$ for all types
% are at least the expected budget;
% (d) the allocation in $\EXS$ or $\EXM$ or $\EXL$  
% for all types
% are at most the allocation in $\EX$, i.e., $\quant$;
% (e) the revenue from $\EX$ (for a private-budgeted agent) 
% is upper bounded by the revenue from $\EXS$, $\EXM$ and 
% $\EXL$ (for a random-public-budgeted agent with the 
% same distribution), 

We construct
the allocation-payment functions $\APFS_\budget$, 
$\APFL_\budget$ and $\APFM_\budget$
for 
$\EXS$, $\EXL$, and $\EXM$ respectively.
For each budget $\budget$, 
let $\APF_\budget$ be the allocation-payment function
for types with budget $\budget$
in mechanism $\EX$, and 
$\marginP$ be the utility maximization allocation for the agent 
with value and budget equal to the market clearing price $\marketclearing$, i.e.,
$\marginP =\argmax\{\alloc : \APF_\budget'(\alloc )\leq \marketclearing\} $.
Let $\marginW$ be the utility maximization allocation for the agent 
with value and budget equal to the expected budget $\budget^*$, i.e., 
$\marginW =\argmax\{\alloc : \APF_\budget'(\alloc )\leq \budget^*\} $.
Then the allocation-payment functions 
$\APFS_\budget$, 
$\APFL_\budget$ and $\APFM_\budget$ 
are defined respectively as follows,
\begin{align*}
    \APFS_\budget(\alloc) &=
    \left\{
    \begin{array}{ll}
      \APF_\budget(\alloc)   &   
    \text{if }\alloc \leq \marginP,\\
        \infty   &  \text{otherwise};
    \end{array}
    \right.
    \quad
    \APFM_\budget(\alloc) = 
    \left\{
    \begin{array}{ll}
      \APF_\budget(\marginP+\alloc) -
      \APF_\budget(\marginP)&   
    \text{if }\alloc \leq \marginW - \marginP,\\
        \infty   &  \text{otherwise};
    \end{array}
    \right.
\end{align*}
\begin{align*}
    \APFL_\budget(\alloc) = 
    \left\{
    \begin{array}{ll}
      \APF_\budget(\marginW+\alloc) -
      \APF_\budget(\marginW)&   
    \text{if }\alloc \leq 1 - \marginW,\\
        \infty   &  \text{otherwise}.
    \end{array}
    \right.
\end{align*}

% \begin{align*}
%     \APFS_\budget(\alloc) &=
%     \left\{
%     \begin{array}{ll}
%       \APF_\budget(\alloc)   &   
%     \text{if }\alloc \leq \marginP,\\
%         \infty   &  \text{otherwise};
%     \end{array}
%     \right.
%     \quad
%     \APFL_\budget(\alloc) = 
%     \left\{
%     \begin{array}{ll}
%       \APF_\budget(\marginW+\alloc) -
%       \APF_\budget(\marginW)&   
%     \text{if }\alloc \leq 1 - \marginW,\\
%         \infty   &  \text{otherwise};
%     \end{array}
%     \right.
% \end{align*}
% \begin{align*}
%     \APFM_\budget(\alloc) = 
%     \left\{
%     \begin{array}{ll}
%       \APF_\budget(\marginP+\alloc) -
%       \APF_\budget(\marginP)&   
%     \text{if }\alloc \leq \marginW - \marginP,\\
%         \infty   &  \text{otherwise}.
%     \end{array}
%     \right.
% \end{align*}

\begin{lemma}[\citealp{FHL-19}]
\label{lem:price high and low}
When $\marketclearing \leq \budget^*$, 
$\Rev{\EXS} \leq \cumprice(\quant)$ 
and 
there exists $\quant\primed \in
[0,\quant]$ such that 
$\Rev{\EXL} \leq
(1+\budgetQuantile-\sfrac{1}{\budgetQuantile})
\cdot \cumprice(\quant\primed)$.
\end{lemma}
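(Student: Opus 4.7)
My plan is to prove the two bounds separately, exploiting the price decomposition and the fact that each decomposed mechanism has per-unit prices confined to a specific range. For part (a), I would use that, by construction, $\EXS$ only allows allocations in $[0, \marginP]$, and on this range the slope of $\APF_\wealth$ is at most the market clearing price $\marketclearing$. Therefore for every type $(\val,\wealth)$, the payment in $\EXS$ is at most $\marketclearing\cdot\alloc$, where $\alloc$ is the allocation chosen in $\EXS$. Since the allocation in $\EXS$ is pointwise weakly smaller than the allocation in $\EX$, the ex ante expected allocation of $\EXS$ is at most $\quant$. Combining these two observations gives $\Rev{\EXS}\leq\marketclearing\cdot\quant$. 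A direct computation shows $\cumprice(\quant)=\marketclearing\cdot\quant$ (because each buyer either pays the market clearing price or is budget-constrained at the same per-unit price), which yields $\Rev{\EXS}\leq\cumprice(\quant)$.

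For part (b), I would exploit that all marginal prices in $\EXL$ are at least $\wealth^*$, so no type with $\val<\wealth^*$ purchases, and every other type pays at most her budget $\wealth$. Thus $\Rev{\EXL}\leq\texpect{\wealth\cdot\mathbb{1}\{\val\geq\wealth^*\}}$. Using the independence of value and budget, this equals $\wealth^*\cdot\Pr[\val\geq\wealth^*]$. I would then choose $\quant\primed$ to be the ex ante quantile of the per-unit posted price $\wealth^*$, namely $\quant\primed=\Pr[\val\geq\wealth^*]\cdot\texpect{\min\{1,\wealth/\wealth^*\}}$. Since the case hypothesis is $\marketclearing\leq\wealth^*$ and demand is weakly decreasing in per-unit price (ordinary good), $\quant\primed\leq\quant$. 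The revenue from posting $\wealth^*$ is $\cumprice(\quant\primed)=\wealth^*\cdot\quant\primed$, so the ratio reduces to $1/\texpect{\min\{1,\wealth/\wealth^*\}}$, which I need to bound.

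The main obstacle is the final step: turning the tail assumption $\Pr[\wealth\geq\wealth^*]\geq\sfrac{1}{\budgetQuantile}$ into a lower bound on $\texpect{\min\{1,\wealth/\wealth^*\}}$ that yields the stated $(1+\budgetQuantile-\sfrac{1}{\budgetQuantile})$ factor. A crude bound gives $\texpect{\min\{1,\wealth/\wealth^*\}}\geq\Pr[\wealth\geq\wealth^*]\geq\sfrac{1}{\budgetQuantile}$; to recover the exact constant in the lemma, I would instead split the expectation as $\texpect{\wealth/\wealth^*}-\texpect{(\wealth/\wealth^* - 1)^+}=1-\texpect{(\wealth/\wealth^*-1)^+}$ and bound the excess mass $\texpect{(\wealth/\wealth^*-1)^+}$ against the tail probability using the mean constraint $\texpect{\wealth}=\wealth^*$. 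The adversarial budget distribution (placing the tail mass as high as possible subject to $\Pr[\wealth\geq\wealth^*]\geq\sfrac{1}{\budgetQuantile}$ and $\texpect{\wealth}=\wealth^*$) pins down the extremal ratio, and the resulting arithmetic gives the advertised $(1+\budgetQuantile-\sfrac{1}{\budgetQuantile})$ bound. Verifying that $\quant\primed\leq\quant$ rigorously, and matching the extremal configuration to the constant, are the only nontrivial steps; everything else follows directly from the decomposition and budget feasibility.
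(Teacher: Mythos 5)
The paper does not actually prove this lemma --- it is imported verbatim from \citet{FHL-19} --- so there is no internal proof to compare against; I can only judge your argument on its own terms. Part (a) is correct and is the natural argument: on $[0,\marginP]$ the slope of $\APF_\budget$ is at most $\marketclearing$ and $\APF_\budget(0)=0$ by individual rationality, so every payment in $\EXS$ is at most $\marketclearing$ times the purchased allocation; the allocation under $\EXS$ is pointwise dominated by that under $\EX$ (the agent maximizes a concave objective over a sub-interval of her original choice set), so the expected allocation is at most $\quant$, giving $\Rev{\EXS}\leq\marketclearing\cdot\quant=\cumprice(\quant)$. Part (b) is also correct up to your last paragraph: every purchaser in $\EXL$ faces an opening marginal price $\APF_\budget'(\marginW)$ of essentially $\budget^*$, so only types with $\val\geq\budget^*$ buy, each pays at most her budget, and independence gives $\Rev{\EXL}\leq\budget^*\cdot\prob{\val\geq\budget^*}$; taking $\quant\primed$ to be the demand at per-unit price $\budget^*$ (which is at most $\quant$ since $\marketclearing\leq\budget^*$ and demand is decreasing) gives $\cumprice(\quant\primed)\geq\budget^*\,\prob{\val\geq\budget^*}\,\expect{\min\{1,\sfrac{\budget}{\budget^*}\}}$.

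The one genuine problem is your closing paragraph, which treats the final step as an open gap when it is not. Your ``crude'' bound $\expect{\min\{1,\sfrac{\budget}{\budget^*}\}}\geq\prob{\budget\geq\budget^*}\geq\sfrac{1}{\budgetQuantile}$ yields $\Rev{\EXL}\leq\budgetQuantile\cdot\cumprice(\quant\primed)$, and since $\budgetQuantile\geq 1$ we have $\budgetQuantile\leq 1+\budgetQuantile-\sfrac{1}{\budgetQuantile}$: the crude bound already implies (indeed strengthens) the stated inequality, so there is nothing left to ``recover.'' The proposed detour via $\expect{\min\{1,\sfrac{\budget}{\budget^*}\}}=1-\expect{(\sfrac{\budget}{\budget^*}-1)^+}$ and an extremal-distribution computation is both unnecessary and fruitless --- the two-point budget distribution with mass $1-\sfrac{1}{\budgetQuantile}$ at $0$ and mass $\sfrac{1}{\budgetQuantile}$ at $\budgetQuantile\budget^*$ satisfies the mean and tail constraints and makes the crude bound tight, so no arithmetic of that form produces a constant smaller than $\budgetQuantile$, and as written the claim that ``the resulting arithmetic gives the advertised bound'' is asserted rather than shown. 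Delete that detour, keep the crude bound, and your proof is complete; the larger constant in the cited lemma presumably reflects a coarser decomposition used in \citet{FHL-19} rather than anything you need to reproduce here.
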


\begin{lemma}\label{lem:price between market clear and expected budget}
For a single agent 
with private-budget utility,
independently distributed value 
and budget,
% if the market clearing price 
% $\marketclearing$ 
% is smaller than the expected budget $\budget^*$, 
when $\marketclearing \leq \budget^*$, 
there exists $\quant\primed \leq \quant$
such that 
the ironed price-posting revenue 
from $\quant\primed$
is a $(2\budgetQuantile-1)$-approximation 
to the revenue from $\EXM$, 
i.e., 
$(2\budgetQuantile-1) \cumprice(\quant\primed) \geq 
\Rev{\EXM}$.
\end{lemma}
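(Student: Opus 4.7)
The plan is to upper bound $\Rev{\EXM}$ by a simple envelope and then split the bound at the threshold $\budget^*$. Since $\APFM_\budget$ has per-unit prices in $[\marketclearing,\budget^*]$ and allocations in $[0,\marginW-\marginP]\subseteq[0,1]$, individual rationality gives payment $\le \val\cdot\alloc^M$, the per-unit price bound gives payment $\le \budget^*\cdot\alloc^M$, and the budget constraint gives payment $\le \budget$. Moreover, a type with $\val<\marketclearing$ strictly prefers allocation $0$ and contributes nothing. Hence
\begin{equation*}
\Rev{\EXM}\;\le\;\expect{\min(\val,\budget^*,\budget)\cdot \mathbb{1}[\val\ge \marketclearing]}.
\end{equation*}
Next, I would split this expectation at $\val=\budget^*$ into a ``high-value'' piece $\expect{\min(\budget^*,\budget)\,\mathbb{1}[\val\ge \budget^*]}$ and a ``mid-value'' piece $\expect{\min(\val,\budget)\,\mathbb{1}[\marketclearing\le \val<\budget^*]}$.

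For the high-value piece, independence of $\val$ and $\budget$ identifies it with the price-posting revenue at per-unit price $\budget^*$, namely $\Pr[\val\ge\budget^*]\cdot\expect{\min(\budget^*,\budget)}=\cumprice(q_{\budget^*})$, where $q_{\budget^*}\le\quant$ since $\budget^*\ge\marketclearing$. Thus this piece is bounded by $\cumprice(\quant\primed)$ for some $\quant\primed\le\quant$.

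For the mid-value piece, the strategy is to select a single optimizing price $p^\ast\in[\marketclearing,\budget^*]$ and exploit the budget-tail assumption $\Pr[\budget\ge\budget^*]\ge 1/\budgetQuantile$. For every $p\in[\marketclearing,\budget^*]$, $\Pr[\budget\ge p]\ge\Pr[\budget\ge\budget^*]\ge1/\budgetQuantile$, so $\expect{\min(p,\budget)}\ge p/\budgetQuantile$, and thus $\cumprice(q_{p})\ge p\,\Pr[\val\ge p]/\budgetQuantile$. I would then bound the mid-value piece by $\min(\val,\budget)\le\val$ and apply a peak argument on the function $g(p)=p\,\Pr[\val\ge p]$ over the interval $[\marketclearing,\budget^*]$, analogous to the decomposition used for $\EXL$ in \citet{FHL-19}, to show the mid-value piece is at most $(2\budgetQuantile-2)\cdot\cumprice(\quant\primed)$ for a suitable $\quant\primed\le\quant$ (corresponding to the posted price $p^\ast$). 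Adding the two pieces and taking the worse of the two quantiles gives $\Rev{\EXM}\le(2\budgetQuantile-1)\,\cumprice(\quant\primed)$.

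The main obstacle is the mid-value piece: a crude bound replacing $\val$ by $\budget^*$ loses a factor of $\budget^*/\marketclearing$, and a layer-cake bound $\int_{\marketclearing}^{\budget^*}\Pr[\val\ge t]\,\dd t\le \budgetQuantile\int_{\marketclearing}^{\budget^*}\cumprice(q_t)/t\,\dd t$ loses a logarithmic factor $\ln(\budget^*/\marketclearing)$. Avoiding both losses requires pairing the IR bound ``payment $\le$ value'' with the budget-tail bound in a single peak argument that yields a constant multiple of $\cumprice(\quant\primed)$, mimicking the FHL-19 $\EXL$ analysis but on the shifted interval $[\marketclearing,\budget^*]$. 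The clean identity $(2\budgetQuantile-2)+1=2\budgetQuantile-1$ is what pins down the final constant.
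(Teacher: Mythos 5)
There is a genuine gap in the treatment of the mid-value piece, and it is not just a missing detail: the envelope you start from is provably too weak to support any constant-factor ``peak argument.'' Once you replace $\Rev{\EXM}$ by the pointwise individual-rationality bound $\expect{\min(\val,\budget^*,\budget)\cdot\mathbf{1}[\val\geq\marketclearing]}$, you have discarded the incentive-compatibility structure of $\EXM$, and the resulting quantity can exceed \emph{every} price-posting revenue $\cumprice(\quant\primed)$ by a factor of $\ln(\sfrac{\budget^*}{\marketclearing})$. Concretely, take $\budget=\budget^*$ deterministically (so $\budgetQuantile=1$) and an equal-revenue value distribution with $\prob{\val\geq t}=\sfrac{\marketclearing}{t}$ on $[\marketclearing,\budget^*]$: your mid-value piece equals $\expect{\val\cdot\mathbf{1}[\marketclearing\leq\val<\budget^*]}=\marketclearing\ln(\sfrac{\budget^*}{\marketclearing})$, while $p\cdot\prob{\val\geq p}=\marketclearing$ for every $p$ in the range, so no choice of a single optimizing price $p^\ast$ and no peak argument on $g(p)=p\,\prob{\val\geq p}$ can bound the envelope by $(2\budgetQuantile-2)\cumprice(\quant\primed)=0$. (The lemma itself still holds in this example because the \emph{actual} revenue of $\EXM$ is capped by the best posted price, but that fact is exactly what your envelope throws away.) You flag this tension yourself in the last paragraph, but the resolution you gesture at cannot be carried out on the envelope; the FHL-19-style argument must be applied to the mechanism, not to a pointwise welfare-like upper bound.

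The paper's proof keeps the mechanism structure throughout. It first reduces every budget to the fixed budget $\budget^*$: for $\budget\leq\budget^*$ the revenue of $\APFM_\budget$ only increases when the budget is raised to $\budget^*$, and for $\budget>\budget^*$ it is at most $\sfrac{\budget}{\budget^*}$ times the budget-$\budget^*$ revenue. The key step is then that for the budget-$\budget^*$ agent the budget never binds on $\APFM_\budget$ (all per-unit prices lie in $[\marketclearing,\budget^*]$), so the agent is effectively linear and the revenue of the convex menu is dominated by the best posted price $\randprice$ in that range --- this is the IC step your envelope loses. Integrating the two budget ranges gives the factor $(1-\sfrac{1}{\budgetQuantile})+1=2-\sfrac{1}{\budgetQuantile}$, and converting $\Rev[\budget^*]{\randprice}$ back to the random-budget price-posting revenue via $\Rev[\budget^*]{\randprice}\leq\budgetQuantile\cdot\Rev{\randprice}$ yields $(2-\sfrac{1}{\budgetQuantile})\budgetQuantile=2\budgetQuantile-1$. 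Your high-value/mid-value split at $\val=\budget^*$ also does not match where the constants come from: the paper's $2-\sfrac{1}{\budgetQuantile}$ arises from splitting the \emph{budget} at $\budget^*$, not the value.
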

\begin{proof}
Let $\quant\primed = 
\argmax_{\quant' \leq \quant} \cumprice(\quant')$. 
Suppose the support of the budget distribution is from $[\lbudget, \hbudget]$. 
Let $\randprice$ be the price 
larger than the market clearing price $\marketclearing$ 
and smaller than the expected budget $\budget^*$ 
that maximizes revenue without the budget constraint. 
Consider the following calculation with justification below.
\begin{align*}
\Rev{\EXM} &=
\int_{\lbudget}^{\budget^*} 
\RevAPFMww dG(\budget) 
+
\int_{\budget^*}^{\hbudget} 
\RevAPFMww dG(\budget) \\
&\overset{(a)}{\leq} 
\int_{\lbudget}^{\budget^*} 
\RevAPFMwEw dG(\budget) 
+
\int_{\budget^*}^{\hbudget} 
\frac{\budget}{\budget^*} 
\RevAPFMwEw dG(\budget) \\
&\overset{(b)}{\leq} 
\int_{\lbudget}^{\budget^*} 
\Rev[\budget^*]{\randprice} 
dG(\budget) 
+ \int_{\budget^*}^{\hbudget} 
\frac{\budget}{\budget^*} 
\Rev[\budget^*]{\randprice} dG(\budget)\\
&\overset{(c)}{\leq} 
(2 - \frac{1}{\budgetQuantile}) \,
\Rev[\budget^*]{\randprice} \\
&\overset{(d)}{\leq} 
(2\budgetQuantile-1)\, \Rev{\randprice}
\overset{(e)}{\leq} 
(2\budgetQuantile-1)\,
\cumprice(\quant\primed).
\end{align*}
Inequality (a) holds because given 
the allocation payment function $\APFM_\budget$,
the revenue only increases if we increase the budget to $\budget^*$, 
i.e., $\RevAPFMww \leq \RevAPFMwEw$ for any $\budget \leq \budget^*$. 
Moreover, for any $\budget > \budget^*$,
given the allocation payment function $\APFM_\budget$, 
the revenue is either the same for budget $\budget$ and $\budget^*$, 
or the budget binds for agent with expected budget $\budget^*$. 
Since the revenue from agent with budget $\budget$ is at most $\budget$, 
we know that 
$\RevAPFMww \leq \sfrac{\budget}{\budget^*}
\cdot \RevAPFMwEw$. 
Note that for allocation payment rule $\APFM_\budget$, 
per-unit prices are larger than the market clearing price $\marketclearing$ 
and smaller than the expected budget $\budget^*$, 
and budget does not bind for agents with budget $\budget^*$. 
Therefore, by definition, the optimal per-unit 
price in this range is $\randprice$, 
$\RevAPFMwEw \leq \Rev[\budget^*]{\randprice}$ 
and inequality (b) holds. 
Inequality (c) holds because 
$\int_{\lbudget}^{\budget^*} 
dG(\budget) \leq 1 - \sfrac{1}{\budgetQuantile}$ 
by the assumption that 
the probability the budget exceeds its expectation is at least $\budgetQuantile$, 
and $\int_{\budget^*}^{\hbudget} 
\frac{\budget}{\budget^*} dG(\budget) \leq 1$. 
Inequality (d) holds because 
$\Rev[\budget^*]{\randprice} 
\leq \budgetQuantile \cdot \Rev{\randprice}$ 
for any randomized prices $\randprice$ 
according to \citet{CGMW-18}. 
Inequality (e) holds by the definition
of the price-posting revenue curve $\cumprice$ 
and quantile~$\quant\primed$, 
the fact that price $\randprice$ is 
larger than the market clearing price $\marketclearing$. 
\end{proof}

\begin{proof}[Proof of \Cref{thm: private budget small tail}]
Let $\quant\primed = 
\argmax_{\quant' \leq \quant} \cumprice(\quant')$. 
Combining \Cref{lem:large market clearing},
\ref{lem:price high and low} and
\ref{lem:price between market clear and expected budget}, 
we have 
\begin{equation*}
\Rev{\EX} \leq 
\Rev{\EXS} + \Rev{\EXL} + \Rev{\EXM} 
\leq \alphaPrivate 
\cumprice(\quant\primed). \qedhere
\end{equation*}
\end{proof}
\section{Ex Ante Prophet Inequality}
\label{sec:prophet}
% \begin{definition}
% For any feasibility constraint $\feasibles$, 
% $$\exanterelax(\{\cumpayoffi\}_{i\in N}, \feasibles)
% = \max\limits_{\{\quanti\}_{i\in N} 
% \subseteq \exfeasible(\feasibles)}
% \sum_i \cumpayoffi(\quanti).$$
% \end{definition}
Prophet inequality, which is closely related
to oblivious posted pricing for
agents with linear utility, 
mostly analyzes the approximation 
guarantee of online algorithms 
with respect to the optimal offline algorithm.
To study the approximation of 
posted pricing for non-linear agents,
our framework
utilize results in
\emph{ex ante prophet inequality},
a stronger version
with respect to 
ex ante relaxation.

In this section, we provide 
meta approach, which 
shows that some prophet inequality results
in the literature
can be extended to be ex ante 
prophet inequalities.
The main idea is as follows,
(a) show the gambler with
an online algorithm
under product distribution
can approximate 
the ex post prophet
who can correlate realizations
with the same marginals;
(b) construct a correlated distribution
with the same marginals such
that 
the expected value of the ex post prophet
given the correlated distribution
is equivalent to 
the expected value of the ex ante prophet.

\begin{lemma}
The expected value of an ex
ante prophet subject to 
ex ante feasible quantiles 
$\exfeasible(\feasibles)$
is equal 
to the ex post prophet who can correlate prize
realization (with the same marginals)
and picks feasible subset subject to 
feasibility constraint $\feasibles$.
\end{lemma}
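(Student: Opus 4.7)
The plan is to prove equality by showing both directions of inequality, leveraging the standard quantile–value correspondence and a coupling argument.

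For the $(\geq)$ direction, I would start from an optimal ex ante prophet: fix feasible quantiles $\{\quanti\}_{i\in N} \in \exfeasible(\feasibles)$ together with the randomized ex post feasible allocation (a distribution over sets $S \in \feasibles$) whose marginal acceptance probabilities are exactly $\{\quanti\}$. I then construct a joint distribution over $(\val_1,\dots,\val_n)$ with the given marginals $\{F_i\}$ by sampling $S$ first and then coupling: conditional on $i \in S$, draw $\val_i$ from the top $\quanti$ quantile of $F_i$, and conditional on $i \notin S$, draw $\val_i$ from the bottom $1-\quanti$ quantile of $F_i$. This is a valid joint distribution with marginals $F_i$ (the independent randomness inside each quantile range is harmless). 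The correlated ex post prophet can then simply mimic the random set $S$, which is feasible by construction, collecting expected payoff $\sum_i \quanti \cdot \expect{\val_i \mid \val_i \text{ in top } \quanti}$, precisely the ex ante prophet's value.

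For the $(\leq)$ direction, I would take an arbitrary joint distribution with marginals $\{F_i\}$ together with an arbitrary (possibly randomized) ex post selection rule that always outputs $S \in \feasibles$. Define $\quanti = \prob{i \in S}$; since $S$ is always feasible, the vector $\{\quanti\}$ lies in $\exfeasible(\feasibles)$. The correlated prophet's payoff is $\sum_i \expect{\val_i \,\mathbb{1}[i \in S]}$, and the key inequality is that for any event $E$ with $\prob{E}=\quanti$, $\expect{\val_i \,\mathbb{1}[E]} \leq \quanti \cdot \expect{\val_i \mid \val_i \geq F_i^{-1}(1-\quanti)}$, with equality exactly when $E$ is $\{\val_i$ in top $\quanti\}$. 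Summing this over $i$ shows the correlated prophet's expected value is bounded above by the ex ante prophet's value at quantiles $\{\quanti\}$, hence by the optimal ex ante prophet value.

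The main obstacle is the $(\geq)$ direction, where one must exhibit a coupling realizing the ex ante value. Once the top/bottom quantile construction is used, verifying that the resulting joint law has the prescribed marginals and that the ex post prophet's payoff equals the ex ante value is straightforward. The $(\leq)$ direction is essentially a univariate rearrangement inequality applied coordinatewise and then summed; no joint structure of the correlated law is needed beyond the marginal probabilities $\quanti = \prob{i \in S}$.
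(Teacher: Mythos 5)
Your proposal is correct and matches the paper's argument: the paper proves the lemma by exhibiting exactly your coupling (sample a feasible set $S$ from the distribution inducing the optimal ex ante quantiles, then draw $\val_i$ from the top $\quanti$ quantile of $F_i$ if $i\in S$ and from the bottom otherwise), and then notes that no correlated distribution with the given marginals can exceed the ex ante relaxation. Your coordinatewise rearrangement inequality in the second direction just spells out a step the paper states tersely.
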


\begin{proof}
For any prize distributions
$\{F_i\}_{i\in N}$ and feasibility constraint
$\feasibles$, 
lemma statement is equivalent to the
existence of
a correlated distribution $\vals'$ such that 
\begin{enumerate}
\item 
The marginal distribution of $\val'_i$ is from $F_i$. 

% \item
% Letting $\maxindependent{\vals', \feasibles}$ 
% be the feasible set in $\feasibles$ that maximize the sum of values, 
% the distribution of $\maxindependent{\vals', \feasibles}$ is $\setdist$. 

\item
The expected value of 
the ex post prophet with this correlated distribution
equals to 
the ex ante prophet with product distribution, i.e.,
$
\expect[\vals']{\Payoff{\vals', \feasibles}}
=
\exanterelax(\{F_i\}_{i\in N}, \feasibles)
$. 

\end{enumerate}

The existence of the correlation distribution is guaranteed by the following construction. 
% The construction for correlated distribution works as follows. 
Given any distributions $\{F_i\}_{i\in N}$ and feasibility constraint $\feasibles$, 
suppose $\quants$ is the optimal ex ante probability profile 
and $\setdist$ is the distribution over feasible set which induces $\quants$. 
To generate correlated distribution $\vals'$,
we first sample a set $S$ under
distribution $\setdist$, 
and for each agent~$i$, 
sample $\vali$ from $F_i$ conditional on $\vali \geq \vali^{\quanti}$
if $i \in S$, 
and sample $\vali$ from $F_i$ conditional on $\vali < \vali^{\quanti}$
if $i \not\in S$, 
where $\vali^{\quanti}$ is the value 
such that $\prob[\val \sim \dist_i]{\val \geq \vali^{\quanti}} 
= \quanti$.
% corresponds to quantile $\quanti$ in distribution $F_i$.   

% associate higher type of agent $i$ for set $S$
% if $i \in S$ such that the marginal distribution for agent $i$ is from $\cumpayoffi$. 
% Then the valuation profile $\vals'$ is constructed by 
% first sample a set $S$ according to $\setdist$, 
% then map the set to a random valuation profile $\vals'$. 

First note that the marginal distribution for agent $i$ in correlated distribution $\vals'$ is $F_i$. 
By selecting the set $S$ associate with each realized valuation profile $\vals'$, 
which is feasible for the feasibility constraint $\feasibles$, 
we have 
$\exanterelax(\{F_i\}_{i\in N}, \feasibles)
\leq \expect[\vals']{\Payoff{\vals', \feasibles}}$. 
Since the marginal distribution for each agent $i$ is $F_i$, 
we have 
$\exanterelax(\{F_i\}_{i\in N}, \feasibles)
\geq \expect[\vals']{\Payoff{\vals', \feasibles}}$. 
Therefore, 
$\exanterelax(\{F_i\}_{i\in N}, \feasibles)
= \expect[\vals']{\Payoff{\vals', \feasibles}}$.
\end{proof}

\begin{proposition}
\label{prop:ex ante prophet}
\label{prop:meta approach}
For any prophet inequality 
bound which is obtained 
by comparing the gambler's value 
to an upper bound on the prophet's value 
and where the upper bound is invariant to
correlation structure on prize distribution, 
the gambler's algorithm obtains the same
approximation bound 
for the ex ante prophet inequality.
\end{proposition}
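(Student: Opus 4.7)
The plan is to chain together three ingredients: the preceding lemma that rewrites the ex ante prophet as an ex post prophet under a suitable correlated distribution with the same marginals; the hypothesis that the prophet-inequality bound proceeds through a correlation-invariant upper bound $\UB$ on the prophet's value; and the obvious observation that the gambler's algorithm, operating on the original product distribution $\prod_i F_i$, has a value that does not depend on whatever correlated coupling we reason about on the prophet side.

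Concretely, first I would fix any prize distributions $\{F_i\}_{i\in N}$ and feasibility constraint $\feasibles$ for which a prophet inequality of the hypothesized form holds with ratio $\gamma$; that is, there is a function $\UB$ depending only on marginals such that for the gambler facing independent prizes drawn from $\{F_i\}_{i\in N}$,
\begin{equation*}
\expect{\text{gambler}(\{F_i\}_{i\in N},\feasibles)} \;\geq\; \tfrac{1}{\gamma}\cdot \UB(\{F_i\}_{i\in N},\feasibles)
\;\geq\; \tfrac{1}{\gamma}\cdot \expect[\vals]{\Payoff{\vals,\feasibles}},
\end{equation*}
where on the right-hand side $\vals$ is drawn from \emph{any} joint distribution whose marginals are $\{F_i\}_{i\in N}$, by the correlation-invariance hypothesis on $\UB$.

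Next I would invoke the preceding lemma to produce the specific correlated distribution $\vals'$ with marginals $\{F_i\}_{i\in N}$ for which $\expect[\vals']{\Payoff{\vals',\feasibles}} = \exanterelax(\{F_i\}_{i\in N},\feasibles)$. Plugging $\vals'$ into the displayed chain gives
\begin{equation*}
\expect{\text{gambler}(\{F_i\}_{i\in N},\feasibles)} \;\geq\; \tfrac{1}{\gamma}\cdot \UB(\{F_i\}_{i\in N},\feasibles) \;\geq\; \tfrac{1}{\gamma}\cdot \expect[\vals']{\Payoff{\vals',\feasibles}} \;=\; \tfrac{1}{\gamma}\cdot \exanterelax(\{F_i\}_{i\in N},\feasibles),
\end{equation*}
which is precisely the ex ante prophet inequality with ratio $\gamma$. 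Importantly, the gambler's value on the left is unchanged throughout, because it is defined with respect to the original independent product distribution and never interacts with $\vals'$; the correlation structure $\vals'$ is used only to reinterpret the right-hand side.

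The only subtle point, and the step that needs to be stated carefully rather than computed, is the correlation-invariance hypothesis: the typical way such prophet inequalities are proved (e.g., via single-threshold or median-based arguments that bound $\expect{\max_i \val_i}$ or $\expect{\sum_i (\val_i - \thres_i)^+}$ from above by an expression depending only on the marginals $\{F_i\}$) automatically yields an upper bound $\UB$ that does not see the joint distribution. I would therefore phrase the result as a meta-theorem whose verification, for any concrete prophet inequality in the literature one wishes to port to the ex ante setting, reduces to inspecting that its $\UB$ is indeed a function of the marginals alone. No further obstacle arises, since the lemma already supplies the coupling needed to identify $\UB$ with the ex ante relaxation via $\vals'$.
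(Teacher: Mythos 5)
Your argument is correct and matches the paper's intended proof exactly: the paper derives the proposition by combining the preceding lemma (the ex ante prophet equals an ex post prophet under a correlated coupling with the same marginals) with the correlation-invariance of the upper bound, precisely the chain of inequalities you write down. The observation that the gambler's value is computed on the original product distribution and is untouched by the coupling is also the same point the paper relies on.
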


\ifEC
We apply \Cref{prop:meta approach}
to obtain ex ante prophet inequality
under $k$-unit environment 
(\Cref{thm:k unit})
and general matroid enviroments
(\Cref{thm:matroid prohpet}).
\else
We apply \Cref{prop:meta approach}
to obtain ex ante prophet inequality
under $k$-unit environment.
For general matroid, we obtain the same
ratio with adaptive prices 
and defer its proof to
\Cref{apx:matroid prophet}.
\fi

\ifEC
\subsection{$k$-unit Environment}
\fi

\begin{theorem}
\label{thm:k unit}
For $k$-unit environment $\feasibles$, 
and for linear agents with distributions
$\{F_i\}_{i\in N}$, 
there exists an anonymous threshold $\thres$ such that 
$$
\Wel{\thres, \{F_i\}_{i\in N}, \feasibles} \geq 
\sfrac{1}{2}\cdot 
\exanterelax(\{F_i\}_{i\in N}, \feasibles)
% \expect[\vals']{\Payoff{\vals', \feasibles}}
$$
where $\Wel{\thres, \{F_i\}_{i\in N}, \feasibles}$ 
is the expected welfare of threshold $\thres$, 
% when item~$i$ arrives, 
and $\exanterelax(\{F_i\}_{i\in N}, \feasibles)$
is the optimal ex ante relaxation. 
% given the feasibility constraint $\feasibles$ 
% and price posting welfare curves 
% $\{\cumpayoffi\}_{i\in N}$, 
% and $\Payoff{\vals', \feasibles}$ is the maximum feasible payoff 
% for valuation profile $\vals'$ sampled from correlated distribution with the same marginal as $\{F_i\}_{i\in N}$. 
\end{theorem}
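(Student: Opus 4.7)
The plan is to invoke \Cref{prop:meta approach}: it suffices to exhibit an anonymous threshold together with a correlation-invariant upper bound on the offline prophet's value so that the classical (non-ex-ante) prophet inequality already gives the $\sfrac{1}{2}$ guarantee. Following \citet{CHMS-10}, I would pick the anonymous threshold $\thres$ by the Samuel-Cahn style rule: choose $\thres$ so that $\sum_{i\in N} \Pr_{v_i \sim F_i}[v_i \geq \thres] = \sfrac{k}{2}$ (randomizing at an atom if no exact value exists). Define the ``Samuel-Cahn'' upper bound
\begin{equation*}
U(\thres) \;=\; k\thres \,+\, \sum_{i\in N} \bE_{v_i \sim F_i}\!\left[(v_i - \thres)^+\right],
\end{equation*}
which depends only on the marginals $\{F_i\}_{i \in N}$ and is therefore invariant to any joint distribution on $\vals$ that preserves those marginals.

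Next, I would verify two inequalities. First, the offline prophet's value is bounded by $U(\thres)$ under any joint distribution: for any realization $\vals$, the value of the best feasible $k$-subset satisfies $\sum_{i \in T\opt(\vals)} v_i \leq k\thres + \sum_{i\in N}(v_i - \thres)^+$ pointwise, because each of the $k$ chosen items contributes at most $\thres$ plus its own excess above $\thres$; taking expectation preserves the bound under any correlation structure. Second, the gambler with anonymous threshold $\thres$ achieves $\Wel{\thres, \{F_i\}_{i\in N}, \feasibles} \geq \sfrac{1}{2}\, U(\thres)$: by Markov's inequality on the number of earlier agents clearing the threshold, with probability at least $\sfrac{1}{2}$ capacity remains when agent $i$ arrives, and a standard $k$-unit Samuel-Cahn accounting combines the ``base threshold revenue'' term $k\thres$ and the ``excess above threshold'' term $\sum_i \bE[(v_i - \thres)^+]$ with a single $\sfrac{1}{2}$ loss.

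Combining the two inequalities yields a classical $\sfrac{1}{2}$-prophet inequality whose upper bound $U(\thres)$ is correlation-invariant. Applying \Cref{prop:meta approach}, the same ratio carries over to the ex ante prophet inequality, giving $\Wel{\thres, \{F_i\}_{i\in N}, \feasibles} \geq \sfrac{1}{2}\,\exanterelax(\{F_i\}_{i\in N}, \feasibles)$, as required. The main obstacle is the Samuel-Cahn bookkeeping in step two: one must couple the ``capacity remaining'' events across the $n$ arrivals so that the base term $k\thres$ and the excess term $\sum_i \bE[(v_i-\thres)^+]$ are extracted with a single combined $\sfrac{1}{2}$ loss rather than two independent ones. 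This is standard in the $k$-unit prophet inequality literature but requires care in the joint analysis of the arrival order.
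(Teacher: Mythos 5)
Your proposal is correct and follows essentially the same route as the paper: both invoke the \citet{CHMS-10} $k$-unit analysis to get an anonymous threshold $\thres$ with $2\,\Wel{\thres, \{F_i\}_{i\in N}, \feasibles} \geq k\thres + \sum_{i\in N}\expect[\vali]{(\vali-\thres)^+} \geq$ the prophet's value, observe that this upper bound depends only on the marginals and is therefore correlation-invariant, and then apply \Cref{prop:meta approach} to upgrade the bound to the ex ante relaxation. You spell out the Samuel-Cahn bookkeeping that the paper delegates to the citation, but the key idea is identical.
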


% \begin{theorem}
% \label{thm:ex ante prophet k unit}
% For $k$-unit environment $\feasibles$, 
% and for linear agents with distributions
% $\{F_i\}_{i\in N}$, 
% there exists a single threshold $\thres$ such that 
% $$
% \Wel{\thres, \{F_i\}_{i\in N}, \feasibles} \geq 
% \sfrac{1}{2}\,
% \exanterelax(\{F_i\}_{i\in N},\feasibles),
% $$
% where $\Wel{\thres, \{F_i\}_{i\in N}, \feasibles}$ 
% is the expected welfare of setting the threshold as $\thres$ given the feasibility constraint $\feasibles$, 
% and distributions $\{F_i\}_{i\in N}$. 
% \end{theorem}
\begin{proof}
% [Proof of \Cref{thm:k unit}
% under $k$-unit environment]
By \citet{CHMS-10}, for $k$-unit environments, 
there exists an anonymous threshold $\thres$ such that 
% upper bound 
% $k\thres + 
% \textstyle\sum_{i\in N} \expect[\vali]{(\vali - \thres)^+}$
\begin{equation*}
2\, \Wel{\thres, \{F_i\}_{i\in N}, \feasibles}
\geq 
(k\thres + 
\textstyle\sum_{i\in N} \expect[\vali]{(\vali - \thres)^+})
\geq \expect[\vals]{\max\limits_{S\subseteq[n]:|S| = k} \sum_{i\in S}\val_i}.
\end{equation*}
% and $k\thres + 
% \textstyle\sum_{i\in N} \expect[\vali]{(\vali - \thres)^+}$
% is an upper bound on the optimal payoff, 
% i.e., 
% \begin{align*}
% \Wel{\{F_i\}_{i\in N},\feasibles}
% \leq k\thres + 
% \textstyle\sum_{i\in N} \expect[\vali]{(\vali - \thres)^+}.
% \end{align*}
Note that the upper bound of the optimal payoff
$k\thres + 
\textstyle\sum_{i\in N} \expect[\vali]{(\vali - \thres)^+}$
considered in \citet{CHMS-10} is invariant 
to correlation structure. 
Therefore, by applying \Cref{prop:ex ante prophet}, 
there exists an anonymous threshold $\thres$ 
that achieves 2-approximation to the ex ante relaxation. 
\end{proof}

\ifEC
\ifEC
\subsection{General Matroid Environment}
\else
\section{Ex Ante Matroid Prophet Inequality}
\fi
\label{apx:matroid prophet}
In this 
\ifEC
subsection,
\else
section, 
\fi
we reduce the ex ante 
prophet inequality for matroid environment 
to the matroid prophet inequality analysis 
in \citet{KW-12}. 
Our reduction does not require the matroid property.

\citet{KW-12} consider algorithms with adaptive 
threshold $\threses = \{\thres_i\}_{i\in N}$ in 
prophet inequality for downward-closed environment: every prize $i$ 
is selected 
if and only if $\val_i \geq \thres_i$.\footnote{In downward-closed environment, without loss of generality, an
algorithm with 
adaptive threshold guarantees to output a feasible set of prizes
selected, as it can set threshold
$\thres_i = \infty$ to reject prize $i$.}
They define the property \emph{$\alpha$-balanced threshold}
and show that it implies $\alpha$-approximation against an
(ex post) prophet. Finally, they design
an algorithm with $2$-balanced threshold for matroid 
environment.
In this section, we generalize the $\alpha$-balanced
threshold 
(\Cref{def:balanced threshold}) with correlation and show 
that it implies $\alpha$-approximation 
against ex ante prophet (\Cref{lem:balanced threshold}).
The proof follows from \citet{KW-12}'s original argument,
with the observation that independence among prizes is 
unnecessary. 
Notably the algorithm in \citet{KW-12}
constructs a threshold
for matroid set system that satisfies the generalized $2$-balanced
threshold property.

% For matroid prophet inequality,
% \citet{KW-12} can be extended 
% to ex ante prophet inequality in the same spirit of 
% \Cref{prop:ex ante prophet}.

% and property (3) holds. 
% Moreover, if $S$ is not the value maximization feasible set for $\vals'$, 
% then we have 
% $\exanterelax(\{\cumpayoffi\}_{i\in N}, \feasibles)
% < \expect[\vals']{\Payoff{\vals', \feasibles}}$, 
% a contradiction. 

% For any payoff curves $\{\cumpayoffi\}_{i\in N}$ and feasibility constraint $\feasibles$, 
% there exists a mapping $\vals' = \cmap(\vals)$ such that 
% \begin{enumerate}
% \item If $\vals$ is sampled from $\{\cumpayoffi\}_{i\in N}$, 
% then the marginal distribution of $\vals'_i$ is from $\cumpayoffi$. 

% \item
% Letting $\maxindependent{\vals, \feasibles}$ 
% be the feasible set in $\feasibles$ that maximize the sum of values, 
% for any $i \in \maxindependent{\vals, \feasibles}$, 
% $\cmap_i(\vals) \geq \vals_i$ 
% and for any $i \not\in \maxindependent{\vals, \feasibles}$, 
% $\cmap_i(\vals) \leq \vals_i$. 

% \item
% $\exanterelax(\{\cumpayoffi\}_{i\in N}, \feasibles)
% = \expect[\vals']{\Payoff{\vals', \feasibles}}$. 

% \end{enumerate}

% Note that by above property (2), 
% for any pair of valuation profiles $\vals, \vals'$, 
% $\maxindependent{\vals, \feasibles} = \maxindependent{\vals', \feasibles}$. 

% Next we follow the proof technique introduced in \citet{KW-12} to prove the prophet inequality result with respect to ex ante relaxation. 

For this extension, 
we first generalize their notion of $\alpha$-balanced thresholds to correlated distributions on valuations. 
Let $\vals = \{\vali\}_{i\in N}$ and 
$\vals'= \{\vali'\}_{i\in N}$ be the valuation profiles
drawn from the independent and correlated distribution,
respectively.\footnote{
Think $\vals$ as the valuation profile which an algorithm $\alg$
executes by the gambler, and $\vals'$ as the valuation profile
which observed by ex ante prophet.}
We assume $\vals$ and $\vals'$ are independent
and have the same marginal distribution.
Let $\threses = \{\thres_i\}_{i\in N}$ be the threshold
in algorithm $\alg$ where 
$\thres_i$ is independent of $\{\val_j\}_{j=i}^n$
and $\vals'$.
Denote $\A = \{i\in[n]:\val_i \geq \thres_i\}$
as set of prizes chosen by threshold $\threses$.
Denote $\B$ as the feasible set in $\feasibles$ that 
maximizes the total value in the set 
for valuation profile $\vals'$.
% and for the set $\B$ that maximizes the payoff 
% for valuation profile $\vals'$ 
% sampled from a correlated distribution, 
% there exists a partition $\cost, \remainder$ of set $\B$ such that $\A \cup \remainder$ is a basis for the matroid. 
Let $\cost(\A), \remainder(\A)$ be a partition
of $\B$ such that (i) $\A\cap \remainder(\A) = \emptyset$ and 
$\A \cup \remainder(\A)$ is a basis for the matroid; and 
(ii) maximizes the total value of set $\remainder(\A)$. 
Next we formally define $\alpha$-balanced thresholds for the correlated benchmark. 

\begin{definition}
\label{def:balanced threshold}
For a parameter $\alpha > 1$, an algorithm $\alg$ has
\emph{$\alpha$-balanced threshold $\threses$}
if for all valuation profile $\vals$,
% for any prize sequence $\seq$, for all day $k\in[n]$,
% given the set of chosen items 
% $\A_k \alg_j(\seq, \vals)$
% till day $k$
% and any set $V$ disjoint from $\A$ 
% such that $\A \cup V \in \feasibles$, then
\begin{align*}
\sum_{i \in \A} \thres_i \geq\, %& 
\frac{1}{\alpha} \cdot \expect[\vals']{\sum_{i \in \cost(\A)} \vali'}
\;\;\mbox{and}\;\;
% \intertext{and}
\expect[\vals']{\sum_{i \in \remainder(\A)} \thres_i} \leq\, %&
(1-\frac{1}{\alpha}) \cdot \expect[\vals']{\sum_{i \in \remainder(\A)} \vali'}.
% for all set $V \subseteq [n]$ such that $V \cap \A =\emptyset$ and $V \cup \A$ is a basis,}
% \sum_{i \in V} \thres_i \leq\, & 
% (1-\frac{1}{\alpha}) \cdot \expect[\vals']{\sum_{i \in \remainder(\A)} \vali'}.
\end{align*}
\end{definition}

\begin{lemma}\label{lem:balanced threshold}
Any algorithm $\alg$ with $\alpha$-balanced threshold
is an $\alpha$-approximation to the ex ante prophet.
\end{lemma}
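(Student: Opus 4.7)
The plan is to adapt the proof of the prophet inequality in \citet{KW-12} to our correlated-benchmark setting; the argument goes through essentially verbatim once the independence of $\vals$ from $\vals'$ replaces the within-profile independence used in the original proof. The starting point is the pointwise decomposition
\begin{equation*}
\textstyle\sum_{i \in \A} \vali \;=\; \sum_{i \in \A} \thres_i \;+\; \sum_{i \in \A}(\vali - \thres_i),
\end{equation*}
to which I apply the first balanced condition of \Cref{def:balanced threshold} (taking expectation over $\vals$ on both sides) to obtain $\expect[\vals,\vals']{\sum_{i \in \A}\thres_i} \geq \tfrac{1}{\alpha}\,\expect[\vals,\vals']{\sum_{i \in \cost(\A)}\vali'}$.

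The main step is to relate the residual term $\sum_{i\in\A}(\vali-\thres_i)$ to the prophet's realization $\vals'$. Note that since the algorithm accepts $i$ exactly when $\vali \geq \thres_i$ and the feasibility of the state-before-$i$ plus $\{i\}$ permits, one has $\sum_{i\in\A}(\vali-\thres_i) = \sum_i (\vali-\thres_i)^+ \mathbf{1}[\text{state-before-}i \cup \{i\}\text{ is feasible}]$. Under the hypothesis that $\thres_i$ is a function of $\{\val_j\}_{j<i}$ and of $\vals'$ alone, and that $\vali$ is independent of both (sharing its marginal with $\vali'$), conditioning on $(\{\val_j\}_{j<i},\vals')$ lets us replace $\vali$ with $\vali'$ inside the positive part without affecting the expectation. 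Combined with the structural fact that every $i \in \remainder(\A)$ satisfies the above feasibility indicator---because $\A\cup\remainder(\A)$ is a basis, so the pre-$i$ subset of $\A$ together with $\{i\}$ is contained in that basis---this yields
\begin{equation*}
\expect[\vals]{\textstyle\sum_{i\in\A}(\vali-\thres_i)} \;\geq\; \expect[\vals,\vals']{\textstyle\sum_{i\in\remainder(\A)}(\vali'-\thres_i)^+}.
\end{equation*}

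To finish, the second balanced condition together with $(x)^+ \geq x$ gives $\expect[\vals,\vals']{\sum_{i\in\remainder(\A)}(\vali'-\thres_i)^+} \geq \tfrac{1}{\alpha}\,\expect[\vals,\vals']{\sum_{i\in\remainder(\A)}\vali'}$. Summing the two lower bounds via the disjoint partition $\B=\cost(\A)\sqcup\remainder(\A)$ yields $\expect[\vals]{\sum_{i\in\A}\vali} \geq \tfrac{1}{\alpha}\,\expect[\vals,\vals']{\sum_{i\in\B}\vali'}$, which by the preceding lemma equals $\tfrac{1}{\alpha}\,\exanterelax(\{F_i\}_{i\in N},\feasibles)$ and completes the proof. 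The main obstacle is the exchange step: careful conditioning is needed to justify swapping $\vali$ with $\vali'$ inside the positive part, and the basis property underlying $\remainder(\A)$ must be used precisely to conclude that ``state-before-$i$ plus $\{i\}$'' is feasible whenever $i\in\remainder(\A)$. Both points appear (in the independent-benchmark form) in \citet{KW-12}; the only new ingredient is that the hypothesized independence of $\vals$ from $\vals'$ makes the conditioning order work out.
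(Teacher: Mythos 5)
Your argument is correct and matches the paper's proof essentially step for step: the same decomposition $\vali=\thres_i+(\vali-\thres_i)^+$ over $\A$, the two balanced conditions applied exactly as you do, the exchange of $\vali$ for $\vali'$ via equal marginals and independence, and the final summation over the partition $\B=\cost(\A)\sqcup\remainder(\A)$ followed by the reduction to the ex ante prophet. One small caution: the paper assumes $\thres_i$ is independent of $\vals'$ (not a function of it, as you hypothesize), and this matters for your exchange step --- if $\thres_i$ could depend on $\vali'$, replacing $\vali$ by $\vali'$ inside the positive part would no longer preserve the expectation, so you should condition only on $\{\val_j\}_{j<i}$ (which fixes $\thres_i$ and the feasibility indicator) and use that both $\vali$ and $\vali'$ are then fresh draws from $F_i$.
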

\begin{proof}
Invoking \Cref{prop:meta approach}, it is sufficient to show
\begin{align*}
    \expect[\vals]{\sum_{i\in\A}\vali} \geq 
    \frac{1}{\alpha}\cdot \expect[\vals']{\sum_{i\in\B}\vali'}.
\end{align*}
Since $\cost(\A)$ and $\remainder(\A)$ is a partition of $\B$, 
\begin{align*}
    \expect[\vals']{\sum_{i\in\B}\vali'} =
    \expect[\vals']{\sum_{i\in\cost(\A)}\vali'} +
    \expect[\vals']{\sum_{i\in\remainder(\A)}\vali'}.
\end{align*}
Let $(\cdot)^+ = \max\{\cdot, 0\}$. 
We will derive the following three inequalities
\begin{align}
\label{eq:mpi 1}
    \expect[\vals]{\sum_{i\in \A} \thres_i} &\geq 
    \expect[\vals']{\sum_{i\in\cost(\A)}\vali'}  \\
    \label{eq:mpi 2}
    \expect[\vals]{\sum_{i\in\A}(\vali - \thres_i)^+} &\geq
    \expect[\vals']{\sum_{i\in\remainder(\A)}(\vali' - \thres_i)^+}\\
    \label{eq:mpi 3}
    \expect[\vals']{\sum_{i\in\remainder(\A)}(\vali' - \thres_i)^+}
    &\geq
    \expect[\vals']{\sum_{i\in\remainder(\A)}\vali'}.
\end{align}
Summing \eqref{eq:mpi 1}, \eqref{eq:mpi 2} and \eqref{eq:mpi 3}
with the fact that $\thres + (\vali - \thres_i)^+ = \vali$ for all 
$i\in A$, finishes the proof.

Inequality~\eqref{eq:mpi 1} is satisfied by the definition of
$\alpha$-balanced threshold.
For inequality~\eqref{eq:mpi 2}, notice that
\begin{align*}
    \expect[\vals]{\sum_{i\in\A}(\vali - \thres_i)^+} 
    \overset{(a)}{=} 
    \expect[\vals]{\sum_{i\in N}(\vali - \thres_i)^+}
    \overset{(b)}{=} 
    \expect[\vals']{\sum_{i\in N}(\vali - \thres_i)^+}
    \geq 
    \expect[\vals']{\sum_{i\in\remainder(\A)}(\vali' - \thres_i)^+}
\end{align*}
where (a) holds by definition of $\A$, i.e., prize $i$ is selected in $\A$ if $\vali \geq \thres_i$; and (b)
holds since $\vals$, $\vals'$ are independent with the same marginal
and $\thres_i$ is independent with $\vali$ and $\vals'$. 
For inequality~\eqref{eq:mpi 3},
% setting $V = \remainder(\A)$, 
using the definition of $\alpha$-balance threshold, 
we have 
\begin{align*}
\expect[\vals']{\sum_{i\in \remainder(\A)} \vali'}
\leq\, & \expect[\vals']{\sum_{i\in \remainder(\A)} (\thres_i + (\vali' - \thres_i)^+)}\\
\leq\, & (1-\frac{1}{\alpha}) \cdot \expect[\vals']{\sum_{i \in \remainder(\A)} \vali'} 
+ \expect[\vals']{\sum_{i\in \remainder(\A)} (\vali' - \thres_i)^+},
\end{align*}
which implies that 
\begin{align*}
\expect[\vals']{\sum_{i\in \remainder(\A)} (\vali' - \thres_i)^+}
\geq \frac{1}{\alpha} \cdot \expect[\vals']{\sum_{i\in \remainder(\A)} \vali'}.
\end{align*}
which concludes the proof.
\end{proof}

% When item $j$ arrives, if currently the set $\A_{j-1}$ is chosen, 
% by setting the threshold 
% \begin{equation*}
% \thres_j = \frac{1}{2} \cdot 
% \expect[\vals']{\sum_{i \in \remainder(A_{j-1})} \vali'
% - \sum_{i \in \remainder(A_{j-1} \cup \{j\})} \vali'},
% \end{equation*}
% the algorithm achieves $2$-balanced threshold according to \citet{KW-12}. 
% By applying \Cref{lem:balanced threshold}, we have the following theorem. 
For matroid environment, \citet{KW-12} design an algorithm
which has $2$-balanced threshold for the correlated benchmark (\Cref{def:balanced threshold}).

\begin{lemma}[\citealp{KW-12}]\label{lem:exist threshold for matroid}
For matroid environment $\feasibles$, 
there exists a $2$-balanced adaptive threshold
\begin{equation*}
\theta_i = \frac{1}{2}
\expect[\vals']{\sum_{i \in \cost(\A_{i-1} \cup \{i\})} \vali'
- \sum_{i \in \cost(\A_{i-1})} \vali'},
\end{equation*}
% $\theta$ 
where $\A_{i-1}$ is the set chosen till day $i-1$.
\end{lemma}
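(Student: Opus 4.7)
The plan is to verify both inequalities in Definition~\ref{def:balanced threshold} with $\alpha = 2$, treating a realization of $\vals$ as fixed---so the greedy trajectory $\A_0 \subseteq \A_1 \subseteq \cdots \subseteq \A_n = \A$ and the thresholds $\{\theta_i\}$ (which are themselves $\vals'$-expectations conditioned on the $\vals$-history) are all deterministic---and letting the only remaining randomness be $\vals'$. Abbreviate $c(I) \triangleq \sum_{j \in \cost(I)} \val_j'$, so by construction $2\theta_i = \expect[\vals']{c(\A_{i-1}\cup\{i\}) - c(\A_{i-1})}$. The first inequality then follows from a telescoping argument: since $\A_i = \A_{i-1}\cup\{i\}$ exactly when $i \in \A$ and $\A_i = \A_{i-1}$ otherwise, the partial sums collapse pointwise in $\vals'$ to $\sum_{i \in \A}\bigl(c(\A_{i-1}\cup\{i\}) - c(\A_{i-1})\bigr) = c(\A_n) - c(\A_0) = c(\A)$, using $\cost(\emptyset) = \emptyset$ (the empty set's remainder is all of $\B$). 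Taking expectation and dividing by two gives $\sum_{i \in \A} \theta_i = \tfrac{1}{2}\expect[\vals']{c(\A)}$, which is the first inequality with equality.

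For the second inequality I would appeal to the matroid exchange axiom. The key pointwise lemma, proved for any independent set $I$ and any $j \in \remainder(I)$, is that $c(I\cup\{j\}) \leq c(I) + \val_j'$: this holds because $\remainder(I)\setminus\{j\}$ is a valid candidate for $\remainder(I\cup\{j\})$ (it is disjoint from $I\cup\{j\}$ and their union equals the basis $I\cup\remainder(I)$), so the weight of the true remainder of $I\cup\{j\}$ drops by at most $\val_j'$. A standard matroid-exchange argument then produces, for each realization of $\vals'$, a bijection between $\remainder(\A)$ and a subset of $\A$ along with submodularity of $c$ along the greedy chain, and these together let one charge each $\theta_i$ for $i \in \remainder(\A)$ against half of $\val_i'$. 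Summing over $i \in \remainder(\A)$ and taking expectation yields $\expect[\vals']{\sum_{i \in \remainder(\A)} \theta_i} \leq \tfrac{1}{2}\expect[\vals']{\sum_{i\in\remainder(\A)} \val_i'}$, establishing the second inequality.

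I expect the main obstacle to be the interlocked randomness: $\theta_i$ is deterministic but is itself a $\vals'$-expectation conditioned on the $\vals$-history, $\A$ is $\vals$-measurable, and the partition $(\cost(\A),\remainder(\A))$ is $\vals'$-measurable, so the second inequality entangles all three. The cleanest route is to establish the pointwise exchange and submodularity statements for arbitrary chains of independent sets first, and only then integrate, following the template of \citet{KW-12}.
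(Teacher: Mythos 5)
First, a framing point: the paper does not prove this lemma at all --- it is imported from \citet{KW-12}, and the surrounding text merely asserts that the Kleinberg--Weinberg construction already satisfies the generalized, correlated-benchmark version of \Cref{def:balanced threshold} because their argument never uses independence across prizes. So your attempt has to be measured against the KW-12 proof. Your treatment of the first inequality is correct and is exactly their argument: writing $c(I)=\sum_{j\in\cost(I)}\val_j'$, the accepted chain telescopes pointwise in $\vals'$ to give $\sum_{i\in\A}\theta_i=\tfrac{1}{2}\expect[\vals']{c(\A)}$ with equality, using $\cost(\emptyset)=\emptyset$. Your closing observation about the interlocked randomness also resolves correctly in principle: the deterministic bound one needs holds uniformly over every $J$ with $\A\cup J$ independent, so one may substitute the $\vals'$-measurable set $J=\remainder(\A)$ and integrate, and no independence among the coordinates of $\vals'$ is ever used --- which is precisely the paper's point.

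The gap is in the second inequality. Your ``key pointwise lemma'' ($c(I\cup\{j\})\le c(I)+\val_j'$ for $j\in\remainder(I)$) is true but cannot carry the argument: the threshold $\theta_i$ is the increment of $c$ at the \emph{early} set $\A_{i-1}$, whereas $i\in\remainder(\A)$ is a statement about the \emph{final} set $\A\supseteq\A_{i-1}$, and there is no guarantee that $i\in\remainder(\A_{i-1})$; consequently a termwise charge of $\theta_i$ against $\tfrac{1}{2}\val_i'$ is false in general. The step that actually closes the proof in \citet{KW-12} is an exchange lemma you neither state nor establish: for independent sets $I\subseteq I'$ and $j$ with $I'\cup\{j\}$ independent, $c(I\cup\{j\})-c(I)\le c(I'\cup\{j\})-c(I')$, i.e.\ $c$ has \emph{increasing} marginals along such chains (equivalently, $\sum_{i\in\remainder(\cdot)}\val_i'$ is submodular). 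Note the direction: you invoke ``submodularity of $c$,'' which is the opposite inequality and would not help. This lemma is where the matroid axiom enters and where the statement fails for general downward-closed systems; with it, each increment at $\A_{j_l-1}$ is dominated by the corresponding increment along the chain from $\A$ to $\A\cup J$, the sum telescopes to $c(\A\cup J)-c(\A)\le\sum_{i\in\remainder(\A)}\val_i'$, and integrating over $\vals'$ finishes. The ``bijection between $\remainder(\A)$ and a subset of $\A$'' you gesture at is not the mechanism of the KW-12 proof, and as written the second half of your argument is a placeholder rather than a proof.
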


Therefore, invoking \Cref{lem:balanced threshold} and \ref{lem:exist threshold for matroid}, we have the 
the following theorem.
\begin{theorem}
\label{thm:matroid prohpet}
For matroid environment $\feasibles$, 
and for linear agents with distributions
$\{F_i\}_{i\in N}$, 
there exists a profile of adaptive threshold $\prophetPrice$ such that 
$$
\Wel{\prophetPrice, \{F_i\}_{i\in N}, \feasibles} \geq 
\sfrac{1}{2}\cdot 
\exanterelax(\{F_i\}_{i\in N}, \feasibles)
% \expect[\vals']{\Payoff{\vals', \feasibles}}
$$
where $\Wel{\prophetPrice, \{F_i\}_{i\in N}, \feasibles}$ 
is the expected welfare of setting the threshold as $\prophetPrice_i$ 
when item~$i$ arrives, 
and $\exanterelax(\{F_i\}_{i\in N}, \feasibles)$
is the optimal ex ante relaxation. 
% given the feasibility constraint $\feasibles$ 
% and price posting welfare curves 
% $\{\cumpayoffi\}_{i\in N}$, 
% and $\Payoff{\vals', \feasibles}$ is the maximum feasible payoff 
% for valuation profile $\vals'$ sampled from correlated distribution with the same marginal as $\{F_i\}_{i\in N}$. 
\end{theorem}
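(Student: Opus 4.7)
The plan is to derive the theorem as an immediate consequence of the two preceding lemmas, since all of the substantive work has already been packaged into them. \Cref{lem:exist threshold for matroid}, due to \citet{KW-12}, provides an explicit adaptive threshold profile $\prophetPrice$ that is $2$-balanced (in the sense of \Cref{def:balanced threshold}) for any matroid environment $\feasibles$ and any independent product distribution over values. \Cref{lem:balanced threshold} then converts any $\alpha$-balanced threshold into an $\alpha$-approximation against the ex ante prophet, which by the unlabelled lemma preceding \Cref{prop:meta approach} equals the ex ante relaxation $\exanterelax(\{F_i\}_{i\in N},\feasibles)$.

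Concretely, I would proceed in three steps. First, invoke \Cref{lem:exist threshold for matroid} to fix the adaptive threshold profile $\prophetPrice$; this relies on the matroid-exchange structure used to define $\cost(\cdot)$ and $\remainder(\cdot)$, ensuring $\A \cup \remainder(\A)$ is a basis whenever $\A$ is independent. Second, apply \Cref{lem:balanced threshold} with $\alpha = 2$ to conclude
$$
\Wel{\prophetPrice,\{F_i\}_{i\in N},\feasibles}
= \expect[\vals]{\textstyle\sum_{i\in\A}\val_i}
\;\geq\; \tfrac{1}{2}\cdot \expect[\vals']{\textstyle\sum_{i\in\B}\val'_i},
$$
where $\vals$ is drawn from the product distribution $\{F_i\}_{i\in N}$ and $\vals'$ from the correlated benchmark built in the proof of \Cref{prop:meta approach}. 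Third, invoke that construction of $\vals'$: it has marginals $\{F_i\}_{i\in N}$ and satisfies $\expect[\vals']{\sum_{i\in\B}\val'_i} = \exanterelax(\{F_i\}_{i\in N},\feasibles)$, so the three pieces combine into the desired bound.

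There is no hard step in the theorem itself; the only thing I would double-check is the independence condition built into \Cref{def:balanced threshold}, namely that $\prophetPrice_i$ is independent of $\vals'$ and of $\{\val_j\}_{j \geq i}$. This holds because the threshold produced by \Cref{lem:exist threshold for matroid} depends on $\vals'$ only through an expectation and on past value realizations only through the set $\A_{i-1}$, which is determined by $\{\val_j\}_{j < i}$. Once this bookkeeping is confirmed, assembling \Cref{lem:exist threshold for matroid}, \Cref{lem:balanced threshold}, and the ex ante/correlated equivalence yields the claimed $\sfrac{1}{2}$-approximation.
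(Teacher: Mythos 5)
Your proposal is correct and matches the paper's own argument, which derives the theorem by invoking \Cref{lem:balanced threshold} together with \Cref{lem:exist threshold for matroid} exactly as you do. Your additional check that the thresholds of \Cref{lem:exist threshold for matroid} satisfy the independence requirements of \Cref{def:balanced threshold} is a sensible piece of bookkeeping that the paper leaves implicit.
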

% \begin{theorem}
% For matroid environment $\feasibles$, 
% and for linear agents with price-posting welfare curve
% $\{\cumpayoffi\}_{i\in N}$, 
% % let $\{\welcurvei\}_{i\in N}$ be the corresponding welfare curves. 
% % when the feasibility constraint $\feasibles$ is a uniform matroid, 
% there exists a profile of adaptive threshold $\prophetPrice$ such that 
% $$
% \Wel{\prophetPrice, \{\cumpayoffi\}_{i\in N}, \feasibles} \geq 
% \sfrac{1}{2}\cdot 
% \expect[\vals']{\Payoff{\vals', \feasibles}}
% % =
% % \sfrac{1}{2}\,
% % \exanterelax(\{\cumpayoffi\}_{i\in N},\feasibles),
% $$
% where $\Wel{\prophetPrice, \{\cumpayoffi\}_{i\in N}, \feasibles}$ 
% is the expected welfare of setting the threshold as $\prophetPrice_i$ 
% when item~$i$ arrives, 
% % given the feasibility constraint $\feasibles$ 
% % and price posting welfare curves 
% % $\{\cumpayoffi\}_{i\in N}$, 
% and $\Payoff{\vals', \feasibles}$ is the maximum feasible payoff 
% for valuation profile $\vals'$ sampled from correlated distribution with the same marginal as $\{\cumpayoffi\}_{i\in N}$. 
% \end{theorem}

% We define $\virtualcurve = \payoffHull$ as the price-posting virtual curve, 
% i.e., the price-posting welfare curve when the 
% value corresponds to quantile $\quant$ is in fact the virtual value 
% $\virtual(\quant; \cumpayoff)$. 
\fi

\section{Conclusions and Extensions}
\label{sec:conclude}

This paper provides a general framework for generalizing results from linear agents to non-linear agents. 
The reduction framework relies on a novel closeness property 
which characterizes the gap between the concave hull
of the price-posting payoff curve 
and the ex ante payoff curve
for the single agent problem. 
As the instantiations of the framework, 
we analyze the approximation bound
for various posted pricing mechanisms for
budgeted agents
under the objective of both revenue-maximization
and welfare-maximization.
% Important open problems include identifying 
% other broad family of non-linear preferences
% for which our closeness property holds
% and generalizations to multi-dimensional preferences. 
Next we discuss several important extensions of our model. 

\paragraph{Convex combination of welfare and revenue maximization}
One common objective of the designer considered in the literature 
is to maximize the convex combination of welfare and revenue of the mechanism.
Formally, given any $\alpha \in (0,1)$, 
the objective of the designer is to maximize 
$\alpha\cdot \W+(1-\alpha)\cdot\Rv$.
We can extend our results in \Cref{sec:welfare close}
and \ref{sec:revenue close} 
to show that if an agent is close for both welfare maximization and revenue maximization, 
then this agent is close for any convex combination of the two objectives. 
The argument holds by applying the following lemma
since both $\W$ and $\Rv$ are non-negative. 
% \ifEC
% We defer the proof
% of \Cref{lem:convex combination} 
% to \Cref{apx:convex combination}.
% \fi
\begin{lemma}
\label{lem:convex combination}
If an agent is $\zeta$-close for objective $1$
and $\zeta'$-close for objective $2$ with non-negative values,
then this agent is $2\max\{\zeta,\zeta'\}$-close for any convex combination of the two objectives. 
\end{lemma}
\ifEC
\begin{proof}
For any quantile $\quant$, let $\EX$ be the $\quant$ ex ante optimal mechanism for the convex combination of the objectives. 
Let $\Wel[1]{\EX}$ be the contribution of objective $1$
given mechanism $\EX$ and 
$\Rev[2]{\EX}$ be the contribution of objective $2$
given mechanism $\EX$.
Let $\Payoff{\EX} = \alpha\cdot \Wel[1]{\EX}+(1-\alpha)\cdot\Rev[2]{\EX}$
be the convex combination of the contributions
given $\alpha\in (0,1)$. 
Let $\quant_1 = \argmax_{\quant'\leq \quant} \concaveHull_1(\quant')$
and $\quant_2 = \argmax_{\quant'\leq \quant} \concaveHull_2(\quant')$,
where $\concaveHull_1$ and $\concaveHull_2$
are the concave hull of price posting payoff curves for objectives $1$ and $2$ respectively. 
Then, we have 
\begin{align*}
\Payoff{\EX} 
&= \alpha\cdot \Wel[1]{\EX}+(1-\alpha)\cdot\Rev[2]{\EX}\\
&\leq 2\max\{\alpha\cdot \Wel[1]{\EX},(1-\alpha)\cdot\Rev[2]{\EX}\}\\
&\leq 2\max\{\alpha\zeta\cdot \concaveHull_1(\quant_1),(1-\alpha)\zeta'\cdot\concaveHull_2(\quant_2)\}\\
&\leq 2\max\{\zeta,\zeta'\}\cdot \max\{\alpha\cdot \concaveHull_1(\quant_1),(1-\alpha)\cdot\concaveHull_2(\quant_2)\}\\
&\leq 2\max\{\zeta,\zeta'\}\cdot \max_{\quant'\leq\quant}\concaveHull(\quant').
\end{align*}
Thus this agent is $2\max\{\zeta,\zeta'\}$-close for the convex combination of the two objectives. 
\end{proof}
\else
\begin{proof}
For any quantile $\quant$, let $\EX$ be the $\quant$ ex ante optimal mechanism for the convex combination of the objectives. 
Let $\Wel[1]{\EX}$ be the contribution of objective $1$
given mechanism $\EX$ and 
$\Rev[2]{\EX}$ be the contribution of objective $2$
given mechanism $\EX$.
Let $\Payoff{\EX} = \alpha\cdot \Wel[1]{\EX}+(1-\alpha)\cdot\Rev[2]{\EX}$
be the convex combination of the contributions
given $\alpha\in (0,1)$. 
Let $\quant_1 = \argmax_{\quant'\leq \quant} \concaveHull_1(\quant')$
and $\quant_2 = \argmax_{\quant'\leq \quant} \concaveHull_2(\quant')$,
where $\concaveHull_1$ and $\concaveHull_2$
are the concave hull of price posting payoff curves for objectives $1$ and $2$ respectively. 
Then, we have 
\begin{align*}
\Payoff{\EX} 
&= \alpha\cdot \Wel[1]{\EX}+(1-\alpha)\cdot\Rev[2]{\EX}\\
&\leq 2\max\{\alpha\cdot \Wel[1]{\EX},(1-\alpha)\cdot\Rev[2]{\EX}\}\\
&\leq 2\max\{\alpha\zeta\cdot \concaveHull_1(\quant_1),(1-\alpha)\zeta'\cdot\concaveHull_2(\quant_2)\}\\
&\leq 2\max\{\zeta,\zeta'\}\cdot \max\{\alpha\cdot \concaveHull_1(\quant_1),(1-\alpha)\cdot\concaveHull_2(\quant_2)\}\\
&\leq 2\max\{\zeta,\zeta'\}\cdot \max_{\quant'\leq\quant}\concaveHull(\quant').
\end{align*}
Thus this agent is $2\max\{\zeta,\zeta'\}$-close for the convex combination of the two objectives. 
\end{proof}
\fi

\paragraph{Other objectives}
Besides revenue or welfare,
the reduction framework introduced in this paper
is applicable 
for other objectives
(e.g., residual surplus) , 
if two assumptions of the objective
(\Cref{asp:standard objective} and
\Cref{asp:linear analog exists})
are satisfied.
Given an objective,
we denote by
$$\Rev{\{(\alloci,\pricei)\}_{i\in\agents},
\{(\typespace_i,
\distribution_i, \util_i)\}_{i\in N}
}$$
 the payoff (i.e., objective value) of 
a mechanism 
$\{(\alloci,\pricei)\}_{i\in\agents}$
for agents with utility model 
$\{(\typespace_i,
\distribution_i, \util_i)\}_{i\in N}$.
\Cref{asp:standard objective}
indicates that the payoff is 
a function of agents' allocations, payments, 
utilities and 
is linearly separable across agents.
\begin{assumption}
\label{asp:standard objective}
There exists a function $\payoffFunction:[0, 1]\times \reals_+\times\reals \rightarrow \reals$ such that 
for any mechanism 
$\{(\alloci,\pricei)\}_{i\in\agents}$
and agents with any utility model 
$\{(\typespace_i,
\distribution_i, \util_i)\}_{i\in N}$,
\begin{align*}
\Rev{\{(\alloci,\pricei)\}_{i\in\agents},
\{(\typespace_i,
\distribution_i, \util_i)\}_{i\in N}
}
=
\sum_{i\in\agents}
\expect[\typei\sim \distributioni]
{\payoffFunction(\alloci(\typei), \pricei(\typei),\utili(\typei))
}~.
\end{align*}
\end{assumption}
Under \Cref{asp:standard objective},
the optimal payoff curve and price-posting payoff curve
are both well-defined. 
Moreover, the optimal payoff curve $\revcurve$
of a linear agent is equal to her 
ironed price-posting payoff curve $\concaveHull$.
This equivalence holds 
since any mechanism for a linear agent 
can be implemented by a randomization over pricing posting 
mechanism
% which is 
considered in $\payoffHull$
\citep[cf.][]{BR-89,har-16}.

\begin{lemma}
\label{lem:payoff curve equivalence}
The optimal payoff curve
$\revcurve$ of a linear agent is equal to 
her ironed price-posting payoff curve $\concaveHull$.
\end{lemma}

\Cref{asp:linear analog exists} ensures that 
the linear agent analog 
(\Cref{def:linear analog})
is well-defined.
\begin{assumption}
\label{asp:linear analog exists}
For any ironed price-posting payoff curve 
$\concaveHull$,
there exists a linear utility agent whose  
price-posting payoff curve and 
optimal payoff curve are $\concaveHull$.
\end{assumption}
Both \Cref{asp:standard objective}
and \Cref{asp:linear analog exists}
are satisfied for most common objectives,
e.g., welfare, revenue, residual surplus,
and their convex combinations.

\paragraph{Heterogeneous agent utility models}
Our closeness definitions are monotonic, formalized in the subsequent lemma. 
With this observation, our framework can be applied to
environments with heterogeneous utility functions. 
For example,
suppose some of the agents have private budget constraints and some of
the agents are risk averse.  If each agent $i \in \{1,\ldots,n\}$ is
$\zeta_i$-close, 
then oblivious posted pricing
for these agents is a
$2\max_i\{\zeta_i\}$-approximation.\footnote{Oblivious posted pricing is a $2$-approximation for linear agents
under single-item environments \citep{CHMS-10}.}

\begin{lemma}\label{lem:closeness is downward imply}
For any $\zeta' \geq \zeta \geq 1$, 
$\zeta$-close implies $\zeta'$-close.
\end{lemma}

\ifEC
\else
\paragraph{Anonymous pricing}
A desirable property for the multi-agent setting is anonymity. 
This requires that the price posted to all agents are the same. 
For agents with linear utilities,
\citet{AHNPY-18} show that 
anonymous pricing is 
an $e$-approximation for revenue maximization
if their price-posting revenue curve is concave;
and 
\citet{Luc-17} shows that 
anonymous pricing is a 2-approximation for welfare maximization.
\citet{FHL-19} introduce a reduction framework
for anonymous pricing showing that 
anonymous pricing is an $e\,\zeta$-approximation for
revenue-maximization if agents are $\zeta$-close
and their price-posting revenue curve is concave.
Here we give an example to illustrate that 
the approximation of 
anonymous pricing for non-linear agents 
that are $\zeta$-close in welfare-maximization 
can be arbitrary bad.
 
\begin{example}
\label{example: anonymous pricing welfare}
Consider the single-item setting with two budgeted agents.
Note that budgeted agents are 2-close for welfare maximization. 
Let $v$ be a sufficiently large number. 
Agent $1$ has value $v$ and no budget constraint
while agent $2$ has value $v^2$ and budget $1$.
The welfare optimal mechanism allocates the item to agent $2$,
with welfare~$v^2$.
However, if the anonymous price is at most $v$, 
then agent $1$ will buy the whole item
and if the anonymous price is larger than $v$, 
the item is sold with probability at most $\frac{1}{v}$.
Thus anonymous pricing can guarantee welfare at most $v$, with approximation factor at least $v$, 
which is unbounded.
\end{example}
\fi

\bibliographystyle{apalike}
\bibliography{auctions}

\begin{thebibliography}{}

\bibitem[Abrams, 2006]{abr-06}
Abrams, Z. (2006).
\newblock Revenue maximization when bidders have budgets.
\newblock In {\em Proceedings of the seventeenth annual ACM-SIAM symposium on
  Discrete algorithm}, pages 1074--1082.

\bibitem[Akbarpour et~al., 2021]{AKLM-20}
Akbarpour, M., Kominers, S.~D., Li, S., and Milgrom, P.~R. (2021).
\newblock Investment incentives in near-optimal mechanisms.
\newblock In {\em Proceedings of the 22nd ACM Conference on Economics and
  Computation}, pages 26--26.

\bibitem[Alaei et~al., 2013]{AFHH-13}
Alaei, S., Fu, H., Haghpanah, N., and Hartline, J. (2013).
\newblock The simple economics of approximately optimal auctions.
\newblock In {\em Proc. 54th IEEE Symp. on Foundations of Computer Science},
  pages 628--637. IEEE.

\bibitem[Alaei et~al., 2012]{AFHHM-12}
Alaei, S., Fu, H., Haghpanah, N., Hartline, J., and Malekian, A. (2012).
\newblock Bayesian optimal auctions via multi-to single-agent reduction.
\newblock In {\em Proceedings of the 13th ACM Conference on Electronic
  Commerce}, page~17.

\bibitem[Alaei et~al., 2018]{AHNPY-18}
Alaei, S., Hartline, J., Niazadeh, R., Pountourakis, E., and Yuan, Y. (2018).
\newblock Optimal auctions vs. anonymous pricing.
\newblock {\em Games and Economic Behavior}.

\bibitem[Baisa, 2017]{Bai-17}
Baisa, B. (2017).
\newblock Auction design without quasilinear preferences.
\newblock {\em Theoretical Economics}, 12(1):53--78.

\bibitem[Barlow and Marshall, 1965]{BM-65}
Barlow, R.~E. and Marshall, A.~W. (1965).
\newblock Tables of bounds for distributions with monotone hazard rate.
\newblock {\em Journal of the American Statistical Association},
  60(311):872--890.

\bibitem[Bei et~al., 2019]{BGLT-19}
Bei, X., Gravin, N., Lu, P., and Tang, Z.~G. (2019).
\newblock Correlation-robust analysis of single item auction.
\newblock In {\em Proceedings of the Thirtieth Annual ACM-SIAM Symposium on
  Discrete Algorithms}, pages 193--208. SIAM.

\bibitem[Border, 1991]{bor-91}
Border, K.~C. (1991).
\newblock Implementation of reduced form auctions: A geometric approach.
\newblock {\em Econometrica: Journal of the Econometric Society}, pages
  1175--1187.

\bibitem[B{\"o}rgers and Li, 2019]{BL-19}
B{\"o}rgers, T. and Li, J. (2019).
\newblock Strategically simple mechanisms.
\newblock {\em Econometrica}, 87(6):2003--2035.

\bibitem[Boulatov and Severinov, 2018]{BS-18}
Boulatov, A. and Severinov, S. (2018).
\newblock Optimal mechanism with budget constrained buyers.
\newblock Technical report, Mimeo, October.

\bibitem[Bulow and Roberts, 1989]{BR-89}
Bulow, J. and Roberts, J. (1989).
\newblock The simple economics of optimal auctions.
\newblock {\em The Journal of Political Economy}, 97:1060--90.

\bibitem[Cai et~al., 2016]{CDW-16}
Cai, Y., Devanur, N.~R., and Weinberg, S.~M. (2016).
\newblock A duality-based unified approach to bayesian mechanism design.
\newblock {\em ACM SIGecom Exchanges}, 15(1):71--77.

\bibitem[Carroll, 2017]{Car-17}
Carroll, G. (2017).
\newblock Robustness and separation in multidimensional screening.
\newblock {\em Econometrica}, 85(2):453--488.

\bibitem[Celik and Yilankaya, 2009]{CY-09}
Celik, G. and Yilankaya, O. (2009).
\newblock Optimal auctions with simultaneous and costly participation.
\newblock {\em The BE Journal of Theoretical Economics}, 9(1).

\bibitem[Chawla et~al., 2010]{CHMS-10}
Chawla, S., Hartline, J.~D., Malec, D.~L., and Sivan, B. (2010).
\newblock Multi-parameter mechanism design and sequential posted pricing.
\newblock In {\em Proceedings of the forty-second ACM symposium on Theory of
  computing}, pages 311--320. ACM.

\bibitem[Chawla et~al., 2011]{CMM-11}
Chawla, S., Malec, D.~L., and Malekian, A. (2011).
\newblock Bayesian mechanism design for budget-constrained agents.
\newblock In {\em Proceedings of the 12th ACM conference on Electronic
  commerce}, pages 253--262. ACM.

\bibitem[Che and Gale, 2000]{CG-00}
Che, Y.-K. and Gale, I. (2000).
\newblock The optimal mechanism for selling to a budget-constrained buyer.
\newblock {\em Journal of Economic theory}, 92(2):198--233.

\bibitem[Che and Gale, 2006]{CG-06}
Che, Y.-K. and Gale, I. (2006).
\newblock Revenue comparisons for auctions when bidders have arbitrary types.
\newblock {\em Theoretical Economics}, 1(1):95--118.

\bibitem[Che et~al., 2013]{CKM-13}
Che, Y.-K., Kim, J., and Mierendorff, K. (2013).
\newblock Generalized reduced-form auctions: A network-flow approach.
\newblock {\em Econometrica}, 81(6):2487--2520.

\bibitem[Cheng et~al., 2018]{CGMW-18}
Cheng, Y., Gravin, N., Munagala, K., and Wang, K. (2018).
\newblock A simple mechanism for a budget-constrained buyer.
\newblock In {\em International Conference on Web and Internet Economics},
  pages 96--110. Springer.

\bibitem[Chung and Ely, 2007]{CE-07}
Chung, K.-S. and Ely, J.~C. (2007).
\newblock Foundations of dominant-strategy mechanisms.
\newblock {\em The Review of Economic Studies}, 74(2):447--476.

\bibitem[Devanur and Weinberg, 2017]{DW-17}
Devanur, N.~R. and Weinberg, S.~M. (2017).
\newblock The optimal mechanism for selling to a budget constrained buyer: The
  general case.
\newblock In {\em Proceedings of the 2017 ACM Conference on Economics and
  Computation}, pages 39--40. ACM.

\bibitem[Fan and Lorentz, 1954]{FL-54}
Fan, K. and Lorentz, G.~G. (1954).
\newblock An integral inequality.
\newblock {\em The American Mathematical Monthly}, 61(9):626--631.

\bibitem[Feldman et~al., 2016]{FSZ-16}
Feldman, M., Svensson, O., and Zenklusen, R. (2016).
\newblock Online contention resolution schemes.
\newblock In {\em Proceedings of the twenty-seventh annual ACM-SIAM symposium
  on Discrete algorithms}, pages 1014--1033. SIAM.

\bibitem[Feng and Hartline, 2018]{FH-18}
Feng, Y. and Hartline, J.~D. (2018).
\newblock An end-to-end argument in mechanism design (prior-independent
  auctions for budgeted agents).
\newblock In {\em 2018 IEEE 59th Annual Symposium on Foundations of Computer
  Science (FOCS)}, pages 404--415. IEEE.

\bibitem[Feng et~al., 2019]{FHL-19}
Feng, Y., Hartline, J.~D., and Li, Y. (2019).
\newblock Optimal auctions vs. anonymous pricing: Beyond linear utility.
\newblock In {\em Proceedings of the 2019 ACM Conference on Economics and
  Computation}, pages 885--886. ACM.

\bibitem[Fu et~al., 2013]{FHH-13}
Fu, H., Hartline, J., and Hoy, D. (2013).
\newblock Prior-independent auctions for risk-averse agents.
\newblock In {\em Proceedings of the fourteenth ACM conference on Electronic
  commerce}, pages 471--488.

\bibitem[Fu et~al., 2018]{FLLT-18}
Fu, H., Liaw, C., Lu, P., and Tang, Z.~G. (2018).
\newblock The value of information concealment.
\newblock In {\em Proceedings of the Twenty-Ninth Annual ACM-SIAM Symposium on
  Discrete Algorithms}, pages 2533--2544. SIAM.

\bibitem[Gershkov et~al., 2013]{GGKMS-13}
Gershkov, A., Goeree, J.~K., Kushnir, A., Moldovanu, B., and Shi, X. (2013).
\newblock On the equivalence of bayesian and dominant strategy implementation.
\newblock {\em Econometrica}, 81(1):197--220.

\bibitem[Gershkov et~al., 2021a]{GMSZ-21}
Gershkov, A., Moldovanu, B., Strack, P., and Zhang, M. (2021a).
\newblock Optimal auctions: Non-expected utility and constant risk aversion.
\newblock {\em Accepted at Review of Economic Studies}.

\bibitem[Gershkov et~al., 2021b]{GMSZ-20}
Gershkov, A., Moldovanu, B., Strack, P., and Zhang, M. (2021b).
\newblock A theory of auctions with endogenous valuations.
\newblock {\em Journal of Political Economy}, 129(4):1011--1051.

\bibitem[Gonczarowski et~al., 2021]{GILL-21}
Gonczarowski, Y.~A., Immorlica, N., Li, Y., and Lucier, B. (2021).
\newblock Stochastic outside option and add-on auction.
\newblock Working paper.

\bibitem[Hartline, 2012]{har-12}
Hartline, J.~D. (2012).
\newblock Approximation in mechanism design.
\newblock {\em American Economic Review}, 102(3):330--36.

\bibitem[Holt~Jr, 1980]{Hol-80}
Holt~Jr, C.~A. (1980).
\newblock Competitive bidding for contracts under alternative auction
  procedures.
\newblock {\em Journal of Political Economy}, 88(3):433--445.

\bibitem[Jin et~al., 2019]{JLQTX-19}
Jin, Y., Lu, P., Qi, Q., Tang, Z.~G., and Xiao, T. (2019).
\newblock Tight approximation ratio of anonymous pricing.
\newblock In {\em Proceedings of the 51th Annual ACM Symposium on Theory of
  Computing, to appear}.

\bibitem[King et~al., 1992]{KWM-92}
King, I., Welling, L., and McAfee, R.~P. (1992).
\newblock Investment decisions under first and second price auctions.
\newblock {\em Economics Letters}, 39(3):289--293.

\bibitem[Laffont and Robert, 1996]{LR-96}
Laffont, J.-J. and Robert, J. (1996).
\newblock Optimal auction with financially constrained buyers.
\newblock {\em Economics Letters}, 52(2):181--186.

\bibitem[Li, 2017]{Li-17}
Li, S. (2017).
\newblock Obviously strategy-proof mechanisms.
\newblock {\em American Economic Review}, 107(11):3257--87.

\bibitem[Maskin, 2000]{mas-00}
Maskin, E. (2000).
\newblock Auctions, development, and privatization: Efficient auctions with
  liquidity-constrained buyers.
\newblock {\em European Economic Review}, 44(4--6):667--681.

\bibitem[Maskin and Riley, 1984]{MR-84}
Maskin, E. and Riley, J. (1984).
\newblock Optimal auctions with risk averse buyers.
\newblock {\em Econometrica}, pages 1473--1518.

\bibitem[Matthews, 1983]{mat-83}
Matthews, S.~A. (1983).
\newblock Selling to risk averse buyers with unobservable tastes.
\newblock {\em Journal of Economic Theory}, 30(2):370--400.

\bibitem[Morgenstern and von Neumann, 1953]{MV-53}
Morgenstern, O. and von Neumann, J. (1953).
\newblock {\em Theory of games and economic behavior}.
\newblock Princeton university press.

\bibitem[Myerson, 1981]{mye-81}
Myerson, R. (1981).
\newblock Optimal auction design.
\newblock {\em Mathematics of Operations Research}, 6:58--73.

\bibitem[Pai and Vohra, 2014]{PV-14}
Pai, M.~M. and Vohra, R. (2014).
\newblock Optimal auctions with financially constrained buyers.
\newblock {\em Journal of Economic Theory}, 150:383--425.

\bibitem[Richter, 2019]{Richter-18}
Richter, M. (2019).
\newblock Mechanism design with budget constraints and a population of agents.
\newblock {\em Games and Economic Behavior}, 115:30--47.

\bibitem[Riley and Zeckhauser, 1983]{RZ-83}
Riley, J. and Zeckhauser, R. (1983).
\newblock Optimal selling strategies: When to haggle, when to hold firm.
\newblock {\em The Quarterly Journal of Economics}, pages 267--289.

\bibitem[Roughgarden et~al., 2017]{RST-17}
Roughgarden, T., Syrgkanis, V., and Tardos, E. (2017).
\newblock The price of anarchy in auctions.
\newblock {\em Journal of Artificial Intelligence Research}, 59:59--101.

\bibitem[Tan, 1992]{Tan-92}
Tan, G. (1992).
\newblock Entry and {R\&D} in procurement contracting.
\newblock {\em Journal of Economic Theory}, 58(1):41--60.

\bibitem[Yan, 2011]{yan-11}
Yan, Q. (2011).
\newblock Mechanism design via correlation gap.
\newblock In {\em Proc. 22nd ACM Symp. on Discrete Algorithms}, pages 710--719.
  SIAM.

\end{thebibliography}

\appendix
\section*{Appendix}

\section{Necessity of Assumptions for Agents with Private Budget}

If there is no assumption on the budget distribution and the valuation distribution,
even if those distributions are independent from each other, 
for the single agent problem, 
price posting is not a constant approximation to the optimal revenue. 
\begin{example}
\label{example:post price is bad}
Consider the budget distribution is the discrete equal revenue distribution, 
i.e., $g(i) = \sfrac{1}{\normalization \cdot i^2}$, 
where $\normalization = \sfrac{\pi^2}{6}$. 
Let the quantile function of the valuation distribution be $\quant(i) = \sfrac{1}{\ln i}$. 
The optimal price posting revenue is a constant. 
Next consider the pricing function 
$\APF(\alloc) = \frac{1}{1-\alloc}$. 
From this pricing function, 
the value $\val_i$ corresponding to payment $i$ is 
$\val_i = i^2$. 
Note that the revenue from this payment function is infinity, i.e.,  
\begin{eqnarray*}
    \Rev{\APF} 
    &\geq& \lim_{\budgetnum \to \infty} 
    \sum_{i = 1}^\budgetnum 
    \left( i \cdot \quant(\val_i) \cdot g(i) \right) \\
    &=& \frac{1}{2\normalization} 
    \lim_{\budgetnum \to \infty} 
    \sum_{i = 1}^\budgetnum 
    \frac{1}{i \cdot \ln i} \\
    &=& \frac{1}{2\normalization} 
    \lim_{\budgetnum \to \infty} 
    \ln \ln \budgetnum \to \infty. 
\end{eqnarray*}
Therefore, the gap between price posting and the optimal mechanism is infinite. 
\end{example}

Comparing to \citet{FHL-19},
we propose a reduction framework for general payoff maximization (e.g., welfare maximization) 
using pricing-based mechanisms.
Note that this reduction does not hold for anonymous pricing considered in \citet{FHL-19}. 
The main reason is that anonymity is not maintained in the reduction framework. 
For example, for the welfare objective,
prophet inequalities indicate that anonymous pricing achieves 2-approximation for linear agents, 
while we show that anonymous pricing is not a constant approximation 
for budgeted agents even when 
the valuations and the budgets for all agents are public information. 
\begin{example}
\label{example:anonymous price is bad for welfare}
Consider there is a single item and 2 agents. Suppose $\epsilon < 1$ is a constant arbitrarily close to 0. Agent $1$ has value $\vali[1] = \frac{1}{\epsilon^2}$ and budget $\budgeti[1] = 1$. 
Agent 2 has value $\vali[2] = \frac{1}{\epsilon}$ and budget $\budgeti[2] = \frac{1}{\epsilon}$. 
By allocating the item to agent 1, 
and the optimal welfare is $\frac{1}{\epsilon^2}$. 
For any anonymous price $\price$,
suppose agent 2 arrives first. 
If $\price \leq \frac{1}{\epsilon}$, 
agent 2 gets the item and welfare is $\frac{1}{\epsilon}$. 
If $\price > \frac{1}{\epsilon}$, 
since the budget of agent $1$ is 1, 
she can purchase the lottery with allocation at most $\epsilon$, 
and the total welfare of anonymous pricing is at most $\frac{2}{\epsilon}$. 
Therefore, if $\epsilon \to 0$, 
the approximation ratio $\frac{1}{2\epsilon} \to \infty$. 
\end{example}
\section{Numerical Result for Uniformly Distributed Private-budgeted Agents}
\label{app:numerical uniform}
In this section, we discuss the numerical results of the approximation ratios 
of revenue-maximization for i.i.d.\ private-budgeted agents with value and budget
drawn uniformly from $[0, 1]$ independently.
This example and the optimal mechanisms have been studied 
in \citet{CG-00} for a single agent
and \citet{PV-14} for multiple agents. 
For both scenarios, the optimal mechanisms
are complicated.
However, \Cref{fig:empirical_single}
suggests that for a single agent, posting a single price is a 
good approximation to the optimal mechanism
for all ex ante probability constraint;
\Cref{fig:apx mrm and opp} suggests that
for multi-agents, simple pricing based mechanisms (i.e.\ oblivious posted pricing and marginal payoff maximization) achieve good approximation to the optimal mechanism. 
Next, we explain how the 
numerical results are computed.
% The comparison is illustrated in \Cref{fig:apx compare}. 

First we focus on the single agent problem, i.e., 
the calculation of the price-posting revenue curve 
and ex ante revenue curve illustrated in \Cref{fig:empirical_single}. 
For the price-posting revenue curve, 
we directly compute the probability the item is sold 
and the corresponding revenue for any price $\price$. 
Thus, we can have the closed-form characterization for the mapping from the ex ante allocation constraint 
to the optimal price-posting revenue. 
For the ex ante revenue curve, 
by approximating the continuous uniform distribution with a discretized uniform distribution, 
we can write this optimization problem as a finite dimensional linear program,
% \footnote{Here we do not allow agents to overbid their budgets},
which allows us to numerically evaluate the optimal ex ante revenue given any ex ante allocation constraint $\quant$. 
By evaluating the curve on quantiles $\quant\in \{0,\sfrac{1}{50}, \dots, 1\}$ with grid size $\sfrac{1}{50}$, 
we have the numerical figure for the ex ante revenue curve. 

For the multi-agent problem, 
since both oblivious posted pricing and marginal payoff mechanism are pricing based mechanism, 
the revenues of both mechanisms for private-budgeted agents 
are equivalent to the revenues of both mechanisms for linear agents with the same price-posting revenue curve. 
By the above paragraph, we have the closed-form for the price-posting revenue curve, which pins down
the
value distribution of such linear agents. 
First note that since agents are i.i.d., the revenue from oblivious posted pricing (OPP) is the same as sequential posted pricing (SPP). 
We compute the revenue for both OPP and SPP using 
an dynamic programming (i.e.\ backward induction). 
For i.i.d.\ regular linear agents, the revenue of the marginal payoff mechanism is the same as 
the revenue of the second price auction with monopoly reserve, which can be solved analytically.
Finally, we can numerical calculate the optimal ex ante relaxation using the ex ante revenue curve for a single agent, 
and evaluate the approximation ratio for both mechanisms when number of agents ranges from 1 to 15. 

\begin{figure}[t]
\centering
\subfloat[]{
% \begin{tikzpicture}[scale = 0.4]
\begin{tikzpicture}[scale = 0.55]

\draw (-0.2,0) -- (11, 0);
\draw (0, -0.2) -- (0, 4.5);

\draw (8, 2.95) node  {$\revcurve$};
\draw (8, 1.65) node {$\cumprice$};

\begin{scope}[very thick]

\draw plot [smooth, tension=0.8] coordinates {
(0, 0)
(0.2, 0.3864060292000004)
(0.4, 0.7439220486399992)
(0.6, 1.0754494355599946)
(0.8, 1.3825563172320008)
(1.0, 1.6670072767200268)
(1.2, 1.9302925999999312)
(1.4, 2.173541036679947)
(1.6, 2.3976743434199452)
(1.8, 2.603597588520095)
(2.0, 2.7921320927700144)
(2.2, 2.964006151760012)
(2.4, 3.1197059949038826)
(2.6, 3.25988924563993)
(2.8, 3.3849912650701652)
(3.0, 3.4955743888800797)
(3.2, 3.591975380739902)
(3.4, 3.6746283790697345)
(3.6, 3.7438839708402445)
(3.8, 3.8000467987599746)
(4.0, 3.843451863744189)
(4.2, 3.87440908757997)
(4.4, 3.8931895898316697)
(4.6, 3.9000658112802578)
(4.8, 3.8953010852596366)
(5.0, 3.8791070548803277)
(5.2, 3.851535729359803)
(5.4, 3.8130138662580104)
(5.6, 3.7638019005197227)
(5.8, 3.7037428702003776)
(6.0, 3.6332942139937083)
(6.2, 3.5527953264319967)
(6.4, 3.461718301815839)
(6.6, 3.36102707806391)
(6.8, 3.2503408972055494)
(7.0, 3.129960590220043)
(7.2, 3.0001479826797604)
(7.4, 2.861023655140151)
(7.6, 2.7125028376000277)
(7.8, 2.5549949461115955)
(8.0, 2.3885136763963017)
(8.2, 2.213746849603848)
(8.4, 2.0296071611100426)
(8.6, 1.8372595653997263)
(8.8, 1.634659285624573)
(9.0, 1.420481613219255)
(9.2, 1.193035300609825)
(9.4, 0.951801293438084)
(9.6, 0.6906039174178892)
(9.8, 0.4055013981680487)
(10, 0)
};
% \draw [name path global = A](0, 0) .. controls 
% (2, 5.5) .. (6, 4.714285714285714);

\draw [draw=gray, dashed] plot [smooth, tension=0.8] coordinates {
(0, 0)
(0.2, 0.3845934077146198)
(0.4, 0.7404349882769578)
(0.6, 1.0700684963642135)
(0.8, 1.3755001220107443)
(1.0, 1.6583592135001264)
(1.2, 1.92)
(1.4000000000000001, 2.161569231001455)
(1.6, 2.38405298071336)
(1.7999999999999998, 2.588310116673605)
(2.0, 2.7750969006805803)
(2.2, 2.9450855003160474)
(2.4, 3.0988782022475854)
(2.6, 3.2370185185560962)
(2.8000000000000003, 3.3600000000000003)
(3.0, 3.4682733256242675)
(3.2, 3.5622520754837734)
(3.4000000000000004, 3.6423174825247906)
(3.5999999999999996, 3.7088223827068045)
(3.8, 3.7620945279481237)
(4.0, 3.802439387232321)
(4.2, 3.830142532515139)
(4.4, 3.8454716847935093)
(4.6000000000000005, 3.8486784797193914)
(4.8, 3.839999999999999)
(5.0, 3.819660112501051)
(5.2, 3.7878706407354317)
(5.4, 3.7448323977555398)
(5.6000000000000005, 3.690736099994021)
(5.8, 3.6257631790360163)
(6.0, 3.550086505449244)
(6.2, 3.4638710364902083)
(6.4, 3.367274397628182)
(6.6000000000000005, 3.2604474062929993)
(6.800000000000001, 3.143534544988651)
(7.0, 3.016674389868819)
(7.199999999999999, 2.879999999999999)
(7.4, 2.7336392718104614)
(7.6, 2.577715262611891)
(7.800000000000001, 2.412346486565011)
(8.0, 2.237647186023292)
(8.2, 2.0537275808172666)
(8.4, 1.860694097724327)
(8.6, 1.6586495820966976)
(8.8, 1.4476934933862289)
(9.0, 1.2279220861025655)
(9.200000000000001, 0.999428577566232)
(9.399999999999999, 0.7623033036663243)
(9.6, 0.516633863700028)
(9.8, 0.2625052552557282)
(10, 0)
};

\end{scope}

\draw (0, -0.6) node {$0$};
\draw (10, -0.6) node {$1$};

\draw [dotted, color=gray] (4.4, 3.8931895898316697) -- (0, 3.8931895898316697);
\draw (-0.1, 3.8931895898316697) -- (0.1, 3.8931895898316697);
\draw (-1, 3.8931895898316697) node {$0.195$};

\end{tikzpicture}
\label{fig:empirical_single}
}
\subfloat[]{
\begin{tikzpicture}[scale = 1.1]
% \begin{tikzpicture}[scale = 0.8]

\draw (0.3,0) -- (6.4, 0);
\draw (0.4, -0.1) -- (0.4, 2.85);

% \draw (8, 2.9) node {$\revcurve$};
% \draw (8, 1.8) node {$\cumprice$};

\begin{scope}[thick]
%opp
\draw (0.4, 0.0535210746618)node[circle,fill,color=gray,inner sep=1.5pt]{};
\draw (0.8, 1.53658369198)node[circle,fill,color=gray,inner sep=1.5pt]{};
\draw (1.2, 2.37553701313)node[circle,fill,color=gray,inner sep=1.5pt]{};
\draw (1.6, 2.44664043457)node[circle,fill,color=gray,inner sep=1.5pt]{};
\draw (2.0, 2.34565170358)node[circle,fill,color=gray,inner sep=1.5pt]{};
\draw (2.4, 2.20823749212)node[circle,fill,color=gray,inner sep=1.5pt]{};
\draw (2.8, 2.07081895521)node[circle,fill,color=gray,inner sep=1.5pt]{};
\draw (3.2, 1.9435200419)node[circle,fill,color=gray,inner sep=1.5pt]{};
\draw (3.6, 1.82853863379)node[circle,fill,color=gray,inner sep=1.5pt]{};
\draw (4.0, 1.72544558931)node[circle,fill,color=gray,inner sep=1.5pt]{};
\draw (4.4, 1.63317295362)node[circle,fill,color=gray,inner sep=1.5pt]{};
\draw (4.8, 1.55033984237)node[circle,fill,color=gray,inner sep=1.5pt]{};
\draw (5.2, 1.47570703511)node[circle,fill,color=gray,inner sep=1.5pt]{};
\draw (5.6, 1.40809459974)node[circle,fill,color=gray,inner sep=1.5pt]{};
\draw (6.0, 1.34696821373)node[circle,fill,color=gray,inner sep=1.5pt]{};
\draw (0.4, 0.0535210746618)node[rectangle,fill,color=black,inner sep=1.7pt]{};
\draw (0.8, 1.05198734266)node[rectangle,fill,color=black,inner sep=1.7pt]{};
\draw (1.2, 1.57840324973)node[rectangle,fill,color=black,inner sep=1.7pt]{};
\draw (1.6, 1.51348317923)node[rectangle,fill,color=black,inner sep=1.7pt]{};
\draw (2.0, 1.36373128617)node[rectangle,fill,color=black,inner sep=1.7pt]{};
\draw (2.4, 1.22043506192)node[rectangle,fill,color=black,inner sep=1.7pt]{};
\draw (2.8, 1.09852727399)node[rectangle,fill,color=black,inner sep=1.7pt]{};
\draw (3.2, 0.997222469497)node[rectangle,fill,color=black,inner sep=1.7pt]{};
\draw (3.6, 0.912958165823)node[rectangle,fill,color=black,inner sep=1.7pt]{};
\draw (4.0, 0.842207913585)node[rectangle,fill,color=black,inner sep=1.7pt]{};
\draw (4.4, 0.782217222591)node[rectangle,fill,color=black,inner sep=1.7pt]{};
\draw (4.8, 0.730726189517)node[rectangle,fill,color=black,inner sep=1.7pt]{};
\draw (5.2, 0.686051949186)node[rectangle,fill,color=black,inner sep=1.7pt]{};
\draw (5.6, 0.6468305086)node[rectangle,fill,color=black,inner sep=1.7pt]{};
\draw (6.0, 0.612434876128)node[rectangle,fill,color=black,inner sep=1.7pt]{};

%opp rev curve
\draw [color=gray] plot [smooth, tension=0.5] coordinates {
(0.4, 0.0535210746618)
(0.8, 1.53658369198)
(1.2, 2.37553701313)
(1.6, 2.44664043457)
(2.0, 2.34565170358)
(2.4, 2.20823749212)
(2.8, 2.07081895521)
(3.2, 1.9435200419)
(3.6, 1.82853863379)
(4.0, 1.72544558931)
(4.4, 1.63317295362)
(4.8, 1.55033984237)
(5.2, 1.47570703511)
(5.6, 1.40809459974)
(6.0, 1.34696821373)
};

%MRM rev curve

\draw plot [smooth, tension=0.5] coordinates {
(0.4, 0.0535210746618)
(0.8, 1.05198734266)
(1.2, 1.57840324973)
(1.6, 1.51348317923)
(2.0, 1.36373128617)
(2.4, 1.22043506192)
(2.8, 1.09852727399)
(3.2, 0.997222469497)
(3.6, 0.912958165823)
(4.0, 0.842207913585)
(4.4, 0.782217222591)
(4.8, 0.730726189517)
(5.2, 0.686051949186)
(5.6, 0.6468305086)
(6.0, 0.612434876128)
};

\end{scope}

\draw [dotted, color=gray] (0.4, 2.44664043457) -- (1.6, 2.44664043457);

\draw (0.4, 2.44664043457) -- (0.5, 2.44664043457);
\draw (0.03, 2.44664043457) node {$1.41$};

\draw [dotted] (0.4, 1.57840324973) -- (1.2, 1.57840324973);

\draw (0.4, 1.57840324973) -- (0.5, 1.57840324973);
\draw (0.03, 1.57840324973) node {$1.27$};

\draw (0.4, -0.3) node {$1$};
\draw (6, 0) -- (6, 0.1);
\draw (6, -0.3) node {$15$};
% \draw (0.4, 3) -- (0.5, 3);
% \draw (0.1, 3) node {$1.5$};
\draw (0.1, 0) node {$1$};

\end{tikzpicture}
\label{fig:apx mrm and opp}
}
\caption{\Cref{fig:empirical_single} illustrates the comparison 
between the price-posting revenue curve (dashed line)
and the ex ante revenue curve (solid line)
for selling a single item to 
a private-budgeted agent with value and budget
both drawn uniformly from $[0, 1]$.
The $x$-axis is the ex ante probability 
and the $y$-axis is the expected revenue. 
The price-posting revenue curve for this uniform budgeted agent 
is $1.02$-resemblant to her ex ante revenue curve. 
\\
\Cref{fig:apx mrm and opp} illustrates the comparison 
between approximation ratio of optimal oblivious posted pricing (grey line) 
and marginal payoff mechanism (black line)
to the ex ante relaxation
for selling a single item to 
i.i.d.\ private-budgeted agents with value and budget
both drawn uniformly from $[0, 1]$.
The $x$-axis is the number of agents and the $y$-axis is the approximation ratio. 
When there are 15 agents, 
the approximation ratio for oblivious posted pricing is 1.23 
and the approximation 
ratio for marginal payoff mechanism is 1.11. 
}
\label{fig:apx compare}
\end{figure}

\end{document}